\newcommand{\blind}{0}
\newtheorem{definition}{Definition}
\newtheorem{corollary}{Corollary}
\newtheorem{theorem}{Theorem}
\newcommand{\R}{\mathbb{R}}
\newcommand{\x}{\mathbf{x}}
\newcommand{\y}{\mathbf{y}}
\newcommand{\z}{\mathbf{z}}
\newcommand{\e}{\mathbf{e}}
\newcommand{\g}{\mathbf{g}}
\newcommand{\mean}{\mu}
\newcommand{\covkern}{\mathfrak{K}}
\newcommand{\Psup}{P^{\sup}}
\newcommand{\Pinf}{P^{\inf}}
\newcommand{\E}{\mathbb{E}}
\newcommand{\doe}{X}
\newcommand{\Tide}{T}%{\text{Tide}}
\newcommand{\Surge}{S}%{\text{Surge}}
\newcommand{\PhiVar}{t_o}%{\text{phi}}
\newcommand{\tPlus}{t_+}
\newcommand{\tMinus}{t_-}
\newcommand{\threshold}{\tau}
\newcommand{\npostGPsim}{s}
\newcommand{\meters}{\text{ m}}
\begin{document}

	\def\spacingset#1{\renewcommand{\baselinestretch}%
		{#1}\small\normalsize} \spacingset{1}

	%%%%%%%%%%%%%%%%%%%%%%%%%%%%%%%%%%%%%%%%%%%%%%%%%%%%%%%%%%%%%%%%%%%%%%%%%%%%%%
		%	\runtitle{Excursion region UQ and visualization with profiles}
		
	\if0\blind
	{
		\title{\bf Profile extrema for visualizing and quantifying uncertainties on excursion regions. Application to coastal flooding}
		
		\author{Dario Azzimonti\footnotemark[1] \footnotemark[2]\hspace{.2cm}\\
			Istituto Dalle Molle di studi sull'Intelligenza Artificiale (IDSIA) \\
			Galleria 2, Via Cantonale 2c, 6928 Manno, Switzerland,\\
			and \\
			David Ginsbourger\footnotemark[1] \\
			 UQOD, Idiap Research Institute \\
			 Centre du Parc, Rue Marconi 19, 1920 Martigny, Switzerland; and\\
			 IMSV, Department of Mathematics and Statistics, University of Bern \\
			 Alpeneggstrasse 22, 3012 Bern, Switzerland,\\
			 and \\
			J\'er\'emy Rohmer\footnotemark[1] \footnotemark[3]\hspace{.2cm}\\
			BRGM \\
			3 Av. C. Guillemin, BP 36009, 45060 Orl\'eans Cedex 2, France,\\
			and \\
			D\'eborah Idier\footnotemark[1] \footnotemark[3]\hspace{.2cm}\\
			BRGM \\
			3 Av. C. Guillemin, BP 36009, 45060 Orl\'eans Cedex 2, France.}
		
			\renewcommand{\thefootnote}{\fnsymbol{footnote}}
			\footnotetext[1]{All authors thank F. Paris  and R. Pedreros (BRGM) for providing the Boucholeurs flooding model.}
			\footnotetext[2]{The author acknowledges the financial support of the \textit{Hasler Foundation, grant number 16065} and the \textit{SNSF, grant number 167199}.} 
			\footnotetext[3]{The authors would like to thank the OQUAIDO chair in applied mathematics and acknowledge financial support from BRGM via the PROSUB project}			
			\renewcommand{\thefootnote}{\arabic{footnote}}
		\date{}
		\maketitle
	} \fi
	
	\if1\blind
	{
		\bigskip
		\bigskip
		\bigskip
		\begin{center}
			{\LARGE\bf Profile extrema for visualizing and quantifying uncertainties on excursion regions. Application to coastal flooding}
		\end{center}
		\medskip
	} \fi

	\bigskip
	\newpage
	\begin{abstract}
		We consider the problem of describing excursion sets of a real-valued function $f$,~i.e. the set of inputs where $f$ is above a fixed threshold. Such regions are hard to visualize if the input space dimension, $d$, is higher than 2. For a given projection matrix from the input space to a lower dimensional (usually $1,2$) subspace, we introduce profile sup (inf) functions that associate to each point in the projection's image the sup (inf) of the function constrained over the pre-image of this point by the considered projection. Plots of profile extrema functions convey a simple, although intrinsically partial, visualization of the set. We consider expensive to evaluate functions where only a very limited number of evaluations, $n$, is available, e.g.~$n<100d$, and we surrogate $f$ with a posterior quantity of a Gaussian process (GP) model. We first compute profile extrema functions for the posterior mean given $n$ evaluations of $f$. We quantify the uncertainty on such estimates by studying the distribution of GP profile extrema with posterior quasi-realizations obtained from an approximating process. We control such approximation with a bound inherited from the Borell-TIS inequality. The technique is applied to analytical functions ($d=2,3$) and to a $5$-dimensional coastal flooding test case for a site located on the Atlantic French coast. Here $f$ is a numerical model returning the area of flooded surface in the coastal region given some offshore conditions. Profile extrema functions allowed us to better understand which offshore conditions impact large flooding events.  
		
	\end{abstract}

%	MSC keywords
%    Primary:   62M30 % Statistics: spatial statistics 
% 			    62-07 % Statistics: Data analysis 		%  
%    Secondary: 62P12 % Statistics: Applications to enviromental and related topics
	
	\noindent%
	{\it Keywords:} excursion set; set estimation; Gaussian process;  profile extrema function; Graphical Methods.
	\vfill
	
	\newpage
	\spacingset{1} % DON'T change the spacing!
	\section{Introduction}
	\label{sec:intro}
%\begin{itemize}
%	\item problem statement
%	\item short literature review
%	\item intro computer experiments and meta-models
%\end{itemize}

\subsection{Problem statement}
In many domains of application (structural reliability, \citet{dubourg2013metamodel}, nuclear safety, \citet{Chevalier.etal2014},  environmental problems, \citet{BolinLindgren2014},  natural risk, \citet{RohmerIdier2012,bayarri.etal2009using})  
the region in the parameter space leading to response values above or below a given threshold is of crucial interest. 
In particular here, we assume that a phenomenon can be modelled by a function $f: D \subset \R^d \rightarrow \R$, where $D$ is a compact set and we focus on excursion sets of the form 
\begin{equation*}
\varGamma = \{ \x \in D : f(\x) \geq \threshold \},
\end{equation*}
where $\threshold\in \R$ is a prescribed threshold depending on the application at hand. This work presents a method to gain visual insights on $\varGamma$ when the input space dimension $d \geq 3$. 

Visualization of point clouds and manifolds in high dimensions is a very active field of research. Principal components analysis and its kernel version \citep{Scholkopf1997} play a fundamental role in dimensionality reduction for high dimensional data. Parallel coordinates \citep{Inselberg1985,Inselberg2009} are widely used for exploratory data analysis. In order to visualize more complex structure, techniques relying on the topological concept of Morse-Smale complex \citep{Edelsbrunner2008,Gerber.etal2010} provide powerful, but hard to interpret, tools. Projection based techniques, in particular tours~\citep[see, e.g., ][]{Asimov1985,Cook_etal1995,Cook2008},  are also a prominent method to visualize data in high-dimensional Euclidean spaces. 
All these techniques, however, are based on a finite set of data points while, in our case, we provide a representation that exploits the structure of $\varGamma$.

The contribution of this paper is two-fold: first we show how to use profile extrema functions for visualizing excursion sets of known deterministic functions; second, we implement these tools for expensive computer experiments with Gaussian process (GP) emulation. 

For a given projection matrix $\Psi \in \R^{d\times p}$, a profile sup (inf respectively) is a function $\Psup_\Psi f \ $ $(\Pinf_\Psi f)$ that associates to each point $\eta$ in the image of $\Psi^T$, the $\sup$ ($\inf$ respectively) of $f$ over all points with projection equal to $\eta$, i.e. $\Psup_\Psi f (\eta) = \sup_{\Psi^T \x = \eta} f(\x)$.  
For example, if $\Psi^T = [1, 0, \ldots, 0] \in \R^{1 \times d}$ is the canonical projection over the first coordinate, then $\Psup_\Psi f(\eta)$ is the $\sup$ of $f$ over all points in $D$ with first coordinate equal to $\eta$. In this case we can plot the values of $\Psup_\Psi f$ and $\Pinf_\Psi f$ over their $1$-dimensional ($p$ dimensional, in general) domain and study such functions. In particular, all $\tilde{\eta}$ such that $\Psup_\Psi f(\tilde{\eta}) <\threshold$,~i.e. all inputs $\x \in D$ such that $\Psi^T\x = \tilde{\eta}$, identify regions of non-excursion. Analogously $\Pinf_\Psi f(\eta)>\threshold$ selects regions of excursion. We compute profile extrema for known, fast-to-evaluate functions $f$ with numerical optimization, as described in Section~\ref{sec:profiles}.

 In this paper, however, we focus on the case where the function $f$ is very expensive to evaluate and thus the number of evaluations, $n$, is limited,~e.g. $n < 100d$. Following the classical computer experiments literature \citep[see, e.g.,][]{Sacks.etal1989,Santner2013design}, we define a design of experiments $\doe_n = (\x_1, ,\ldots, \x_n ) \in D^n$ and denote with $\y_n= (f(\x_1), \ldots, f(\x_n))^T$ the corresponding vector of evaluations. We assume that $f$ is a realization of a GP $(Z_\x)_{\x \in D} \sim GP(\mean,\covkern)$ with prior mean function $\mean: D \subset \R^d \rightarrow \R$ and covariance kernel $\covkern: D \times D \rightarrow \R$. 
The set $\varGamma$ can be estimated from the posterior GP distribution \citep[see,~e.g.][]{Bingham.etal2014,BolinLindgren2014,Azzimonti2016}, in particular, here we consider the plug-in estimate
\begin{equation*}
\hat{\varGamma}_n = \{\x \in D : \gamma_n(\x) \geq \threshold\},
\end{equation*}
where $\gamma_n$ is a posterior quantity related to $Z$. Examples of interesting quantities $\gamma_n$ are the posterior mean, the $1-\alpha$ posterior quantiles or a posterior realization of $Z$. 
If $d=1,2$, a visualization of $\hat{\varGamma}_n$ only requires evaluations of $\gamma_n$ on a grid; however, when $d\geq 3$, this approach is problematic. We propose a technique to compute and plot profile extrema functions for $\gamma_n$ and we detail how to use profile extrema to identify regions of interest. 

A GP model not only provides an estimate for $\varGamma$, but it also allows for a quantification of the associated uncertainty. In particular, here we study the distribution of the stochastic object $\Psup_\Psi Z$ and we provide point-wise confidence statements on profile extrema functions. The quantities $\Psup_\Psi Z,\ \Pinf_\Psi Z$ are strongly related to extreme values of Gaussian processes indexed by compact sets which have been widely studied in probability theory \citep[see, e.g., ][]{Adler.Taylor2007,Azais.Wschebor2009} and have been the subject of recent interest in the computer experiments literature, see, e.g.~\citet[][Chapter~6]{Chevalier2013} and \citet{Ginsbourger_etal2014}. Here we obtain point-wise confidence intervals for profile extrema functions by using posterior quasi-realizations of an approximating process obtained as in~\citet{Azzimonti.etal2016}. Moreover we introduce a probabilistic bound for the quantiles of the posterior profile function based on the sample quantiles of the approximating process. This procedure enabled the identification of regions of interests for the coastal flooding test case presented in Section~\ref{sec:motivating}. 

\subsection{Outline of the paper}
In Section~\ref{sec:profiles} we introduce profile extrema functions for a generic function $\gamma$ and we apply them to a synthetic test case. In this section, we also discuss implementation details. Section~\ref{sec:UQ} describes the GP model and the quantification of uncertainty on the profiles with approximate posterior GP simulations. Moreover we provide a probabilistic bound (Theorem~\ref{theo:bound} and Corollary~\ref{cor:bounds}, proofs in Appendix~\ref{sec:proofs}) for this approximation. In Section~\ref{sec:motivating} we apply the procedure on a coastal flooding problem where $f$ is evaluated with an expensive-to-evaluate computer experiment. Further details are in Appendix~\ref{sec:full5dRes}. Finally, in Section~\ref{sec:discussion}, we discuss advantages and drawbacks of the current method and propose possible extensions.

\section{Profile extrema functions}
\label{sec:profiles}

\subsection{Definitions}
\label{subsec:defProf}
In this section, we introduce the concept of profile extrema for a continuously differentiable function $\gamma: D \rightarrow \R$, $D \subset \R^ d$ compact. While $\gamma$ can be any function, in this paper we mainly consider $\gamma$ as the posterior mean, a quantile or a posterior realization of a GP. The first example of profile extrema are coordinate profiles defined below,~i.e. $2 d$ univariate functions describing the extremal behavior of $\gamma$ for each coordinate. 
\begin{definition}[Coordinate profile extrema]
	For each $i \in \{1, \ldots, d \}$, let us denote with $\e_i$ the $i$-th canonical coordinate vector, then the \emph{i-th coordinate profile extrema} functions are 
	\begin{align}
	\Psup_{\e_i} \gamma = \Psup_i \gamma &: \ \eta \in E_{\e_i} \longrightarrow \  \Psup_i \gamma(\eta) := \sup_{x_i = \eta}\gamma(\x) \\
	\Pinf_{\e_i} \gamma = \Pinf_i \gamma &: \ \eta \in E_{\e_i} \longrightarrow \ \Pinf_i \gamma(\eta) := \inf_{x_i = \eta} \gamma(\x).
	\end{align}
	where $E_{\e_i} = \{ \eta \in \R : \e_i^T \x = \eta, \ \x \in D \}$ and $\x = (x_1, \ldots, x_i, \ldots, x_d)$.
\end{definition}

By studying these functions we can split the input space $D$ in simple regions providing a marginal visualization of $\varGamma$. Consider the first coordinate profile sup function $\Psup_1 \gamma$. If there exists $\eta^*$ such that $\Psup_1 \gamma(\eta^*) < \threshold$, the function $\gamma$ is below $\threshold$ for all $\x \in D_{i,\eta^ *}= \{ \x \in D : x_i = \eta^* \}$. In a symmetric way, if $\eta^*$ is such that $\Pinf_1 \gamma(\eta^*) \geq \threshold$, the function $\gamma$ is above $\threshold$ over the whole $D_{i,\eta^*}$.  Since $\Psup_1$ and $\Pinf_1$ are univariate functions we can evaluate them over a discretization of $E_{\e_1}$ and with a simple plot we can locate those regions. Profile functions could also lead to more undetermined situations: for example, for $\tilde{\eta}$ such that $\Psup_1 \gamma(\tilde{\eta}) \geq \threshold$, there exists at least one point $\bar{\x}^{\tilde{\eta}} = (\tilde{\eta}, \bar{x}_2, \ldots, \bar{x}_d) \in D_{1,\tilde{\eta}}$ such that $\gamma(\bar{\x}^{\tilde{\eta}}) \geq \threshold$ and we cannot flag the region $D_{1,\tilde{\eta}}$ neither as excursion nor as non-excursion. 
Coordinate profile extrema are the results of constrained optimization problems over particular projections on the coordinate axes. Profile extrema functions further generalize this concept. 

\begin{definition}[Profile extrema]
	Consider a full column rank matrix $\Psi \in \R^{d \times p}$ with $p < d$. Let $E_\Psi := \{\Psi^T \x : \x \in D \}$ and $D_{\Psi,\eta} := \{ \x \in D : \Psi^T \x = \eta \}$, then the functions 
	\begin{align}
	\Psup_\Psi \gamma &: \eta \in E_\Psi \longrightarrow \  \Psup_\Psi \gamma(\eta) := \sup_{\x \in D_{\Psi, \eta}}\gamma(\x) \\
	\Pinf_\Psi \gamma &: \eta \in E_\Psi \longrightarrow \ \Pinf_\Psi \gamma(\eta) := \inf_{\x \in D_{\Psi, \eta}}\gamma(\x), 
	\end{align}
	are called \emph{Profile sup} and \emph{Profile inf} functions of $\gamma$ respectively. The set $E_\Psi$ is the image of $D$ under $\Psi^T$ and $D_{\Psi,\eta}$ is the pre-image of $\eta$ under $\Psi^T$ restricted to $D$.
	\label{def:profExtrema}
\end{definition}

Since we are interested in visualization here we only consider $p \in \{1,2\}$ and $\Psi$ as orthogonal projections because they have a more direct interpretation.  

\subsection{Analytical example}
\label{subsec:analExample}

Let us review the concepts just introduced on an analytical test case. Consider
\begin{equation}
\gamma(\mathbf{x}) = \sin\left( a \mathbf{v_1}^T \mathbf{x} +b\right) + \cos\left(c \mathbf{v_2}^T \mathbf{x} +d \right) \qquad \mathbf{x} \in [0,1]^2, \ \mathbf{v_1}, \mathbf{v_2}  \in \mathbb{R}^2
\label{eq:analExample}
\end{equation}
where $a,b,c,d \in \R$, $\mathbf{v_1} = [\cos(\theta),\sin(\theta)]^T$, $\mathbf{v_2} = [\cos(\theta+\pi/2),\sin(\theta+\pi/2)]^T$ and $\theta= \pi/6$. 

%%%%
% coordinate profiles, Full optim
% $neverEx$`0`[[1]]
% [1] 0.0000000 0.1313131
%
% The area excluded by Psup_1 is 0.1313131
% 
% $neverEx$`0`[[2]]
% [1] 0.0000000 0.2525253 0.6969697 0.7979798
%
% The area excluded by Psup_2 is 0.3535354
%
% The area of the intersection is 0.04642383
%
% The total area excluded is 0.4384247
%%%%

Figure~\ref{fig:analyticalFun} shows contour lines of $\gamma$ with parameters chosen as $a=1,b=0,c=10,d=0$. We fix a threshold $\threshold=0$ and we compute the coordinate profile extrema, plotted in Figure~\ref{fig:coordAnalytic}. In this case, $\Psup_1 \gamma(\eta) <\threshold$ for $\eta \in [0,0.13]$ and $\Psup_2 \gamma(\eta) <\threshold$ for $\eta \in [0,0.25]\cup [0.70,0.80]$. The second coordinate seems more informative on the excursion: $\Psup_1 \gamma$ flags as non-excursion a region with volume $0.13$ while $\Psup_2 \gamma$ flags a region with volume $0.35$. The true set has volume $0.127$, i.e. the region of non excursion has volume $0.873$. This example shows that canonical directions might not be the most appropriate ones to visualize this function. Figure~\ref{fig:obliqueAnalytic} shows profile extrema computed along the (oblique) generating directions $\mathbf{v_1},\mathbf{v_2}$, i.e. profile extrema in Definition~\ref{def:profExtrema} with $\Psi= \mathbf{v_1}$ or $\mathbf{v_2}$. By using oblique directions we can say that for $\x$ such that $\mathbf{v_1}^T\x \in [0,0.52] \cup [1.22,1.37]$ (area excluded $0.33$) and for $\x$ such that $\mathbf{v_2}^T \x \in [-0.5,-0.1] \cup [0.11,0.54] \cup [0.71,0.87]$ (area excluded $0.68$) there is no excursion. As shown in Figure~\ref{fig:anObliqueFun} profile extrema along this direction allow us to exclude a much larger portion (total area excluded $0.79$ versus $0.44$ with coordinate profiles) of the input space. In Appendix~\ref{sec:bivEx3d} we combine coordinate, oblique and bivariate profiles extrema for a finer visualization of the excursion set on a $3$-dimensional analytical function inspired by~\eqref{eq:analExample}. 

%%%%
% oblique profiles, full optim
% $neverEx$`0`[[1]]
% [1] 0.0000000 0.5187438 1.2276937 1.3660254
% 	v1x = 0.518 intersects x2=1 at x1=0.02164348 and x2=0 at x1=0.5989937
% 	the area of that region is 0.3103186
%
%	v1x = 1.228 intersects x1=1 at x2=0.7233366 and x2=1 at x1=0.8402683
%	the area of that region is 0.02209596
%
% The total area excluded by Psup_1 gamma is 0.3324146
%
% $neverEx$`0`[[2]]
% [1] -0.5000000 -0.1022964  0.1052011  0.5374876  0.7104023
% [6]  0.8660254
% 	v2x = -0.1023 intersects x2=0 at x1=0.2045928 and x1=1 at x2=0.4592286
%	the area of the bottom triangle region is 0.1826369
%
%	v2x = 0.1052 intersects x1=0 at x2=0.1214758 and x1=1 at x2=0.6988261
%	v2x = 0.5375 intersects x1=0 at x2=0.6206373 and x2=1 at x1=0.6570755
% 	the area of the polygon is 0.4652141
%
% The area of the top triangle is 0.02796519
%
% The total area excluded by Psup_2 gamma is 0.6758161
%
% The area of the intersection is 0.222052
% 
% The total area excluded by the profile oblique is 0.7861788

\begin{figure}[t]
	\begin{minipage}{0.49\textwidth}
		\centering
		\includegraphics[width=\linewidth]{./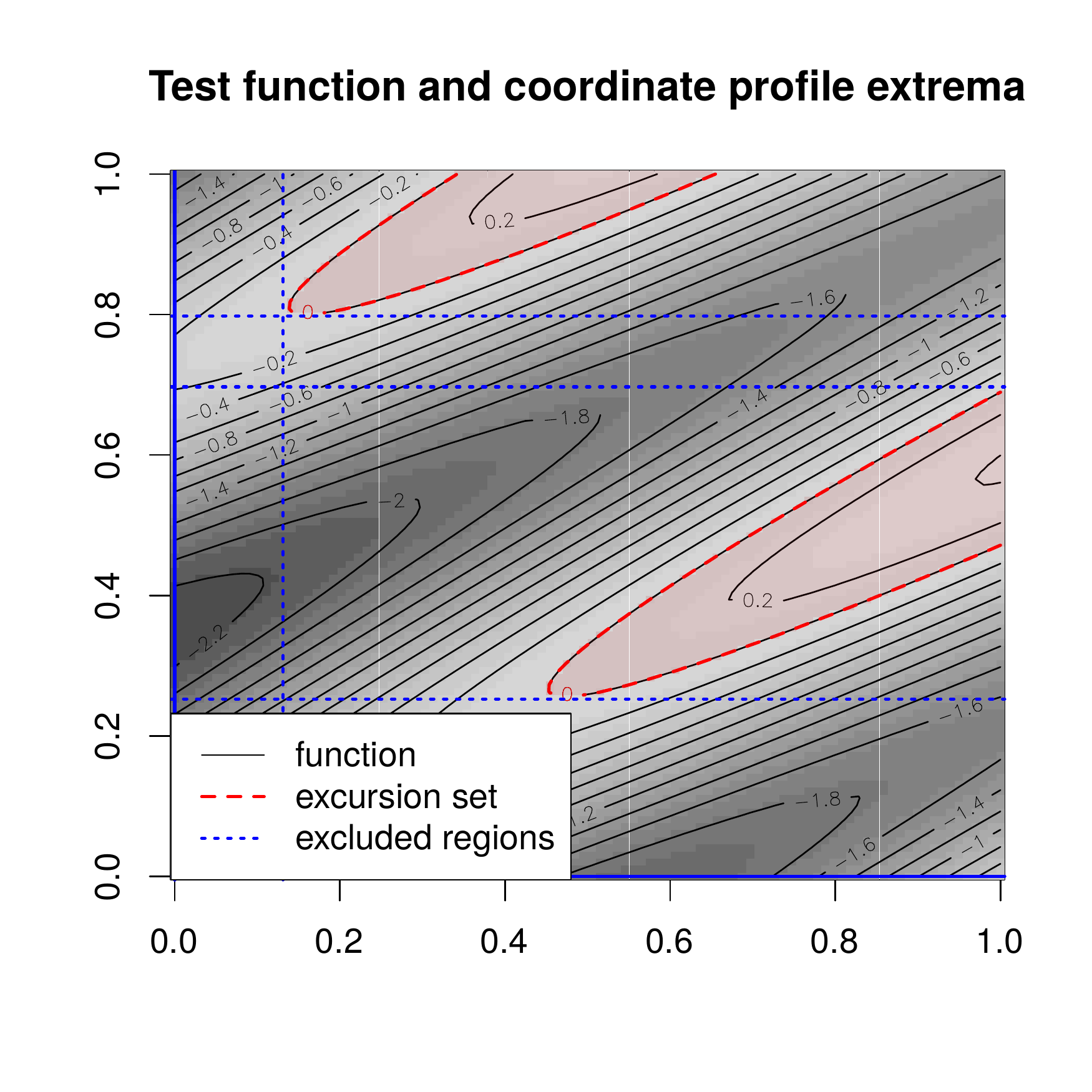}
		\caption{Example of $2$-dimensional test case in equation~\eqref{eq:analExample}. Threshold $\threshold$ in red dashed. The regions excluded with coordinate profiles are delimited by dotted blue lines.}
		\label{fig:analyticalFun}	
	\end{minipage} \hfill\hspace{0.05cm}
	\begin{minipage}{0.49\textwidth}
		\centering
		\includegraphics[width=\linewidth]{./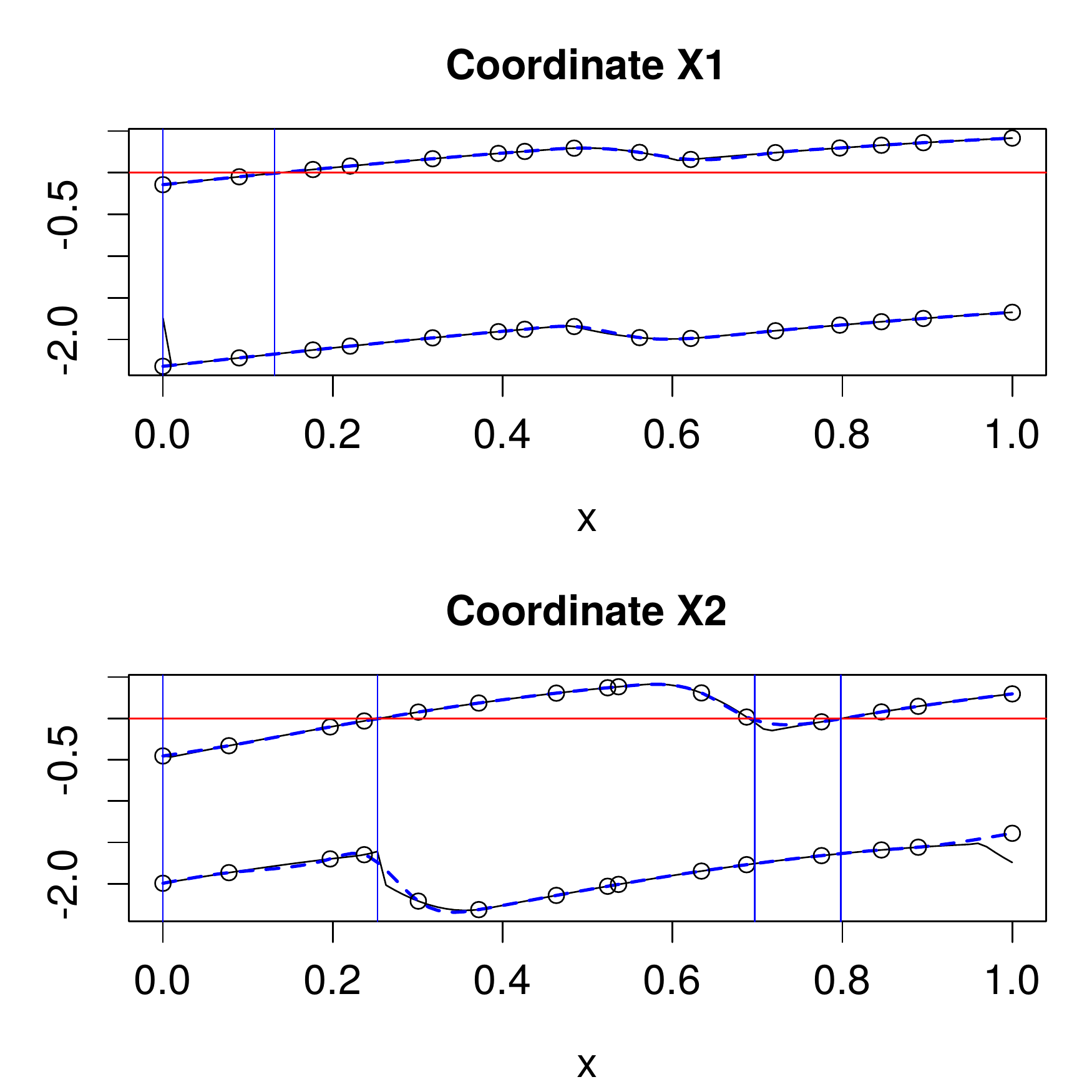}
		\caption{Profile extrema functions for $\gamma$ (black solid) and their approximation (blue dashed) from $k=15$ points (circles). Threshold $\threshold=0$, horizontal solid line.}
		\label{fig:coordAnalytic}
	\end{minipage}	
\end{figure}

\begin{figure}[h!]
	\begin{minipage}{0.49\textwidth}
		\centering
		\includegraphics[width=\linewidth]{./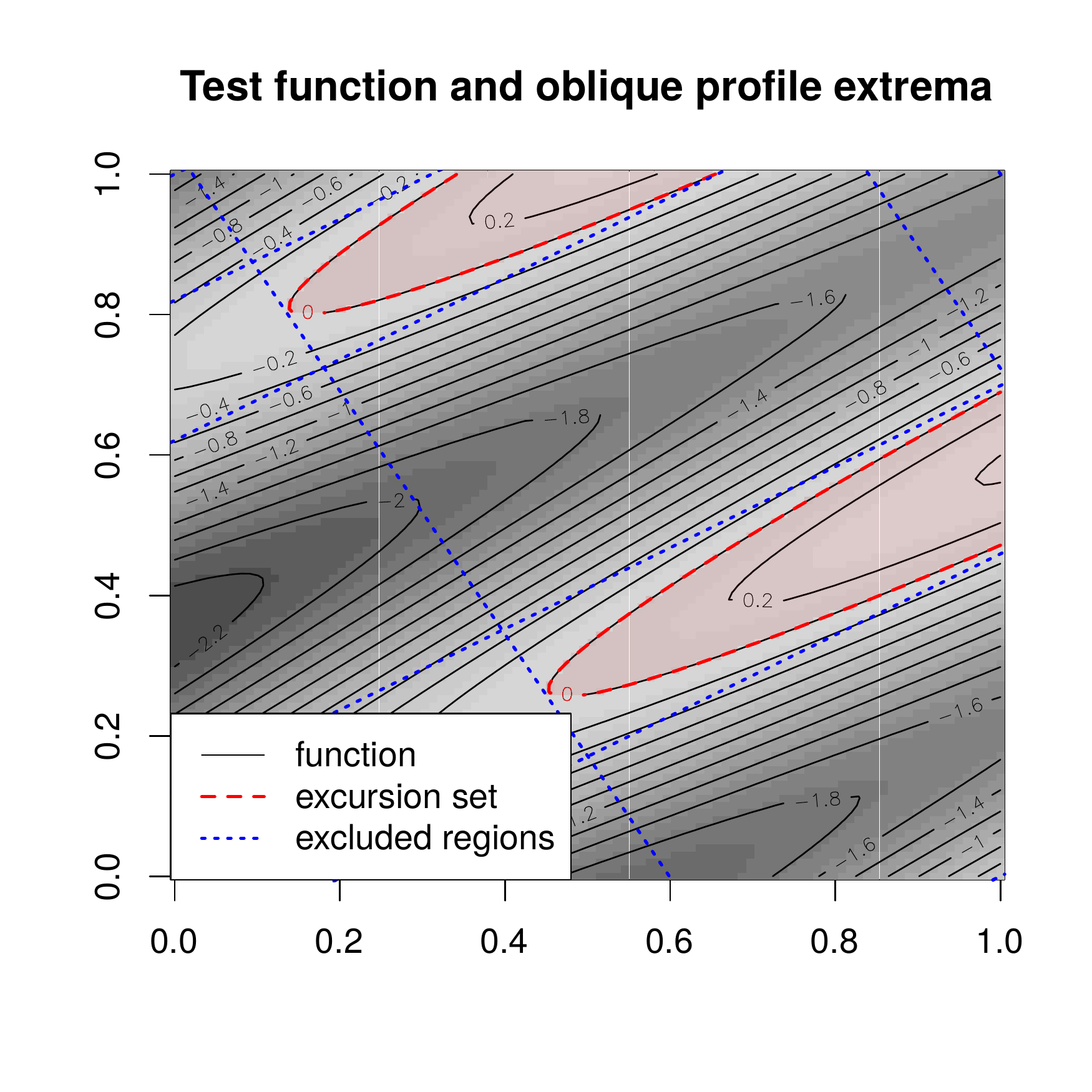}
		\caption{Example of $2$-dimensional test case in equation~\eqref{eq:analExample}. Regions excluded by oblique profile extrema delimited by dotted lines.}
		\label{fig:anObliqueFun}	
	\end{minipage} \hfill\hspace{0.05cm}
	\begin{minipage}{0.49\textwidth}
		\centering
		\includegraphics[width=\linewidth]{./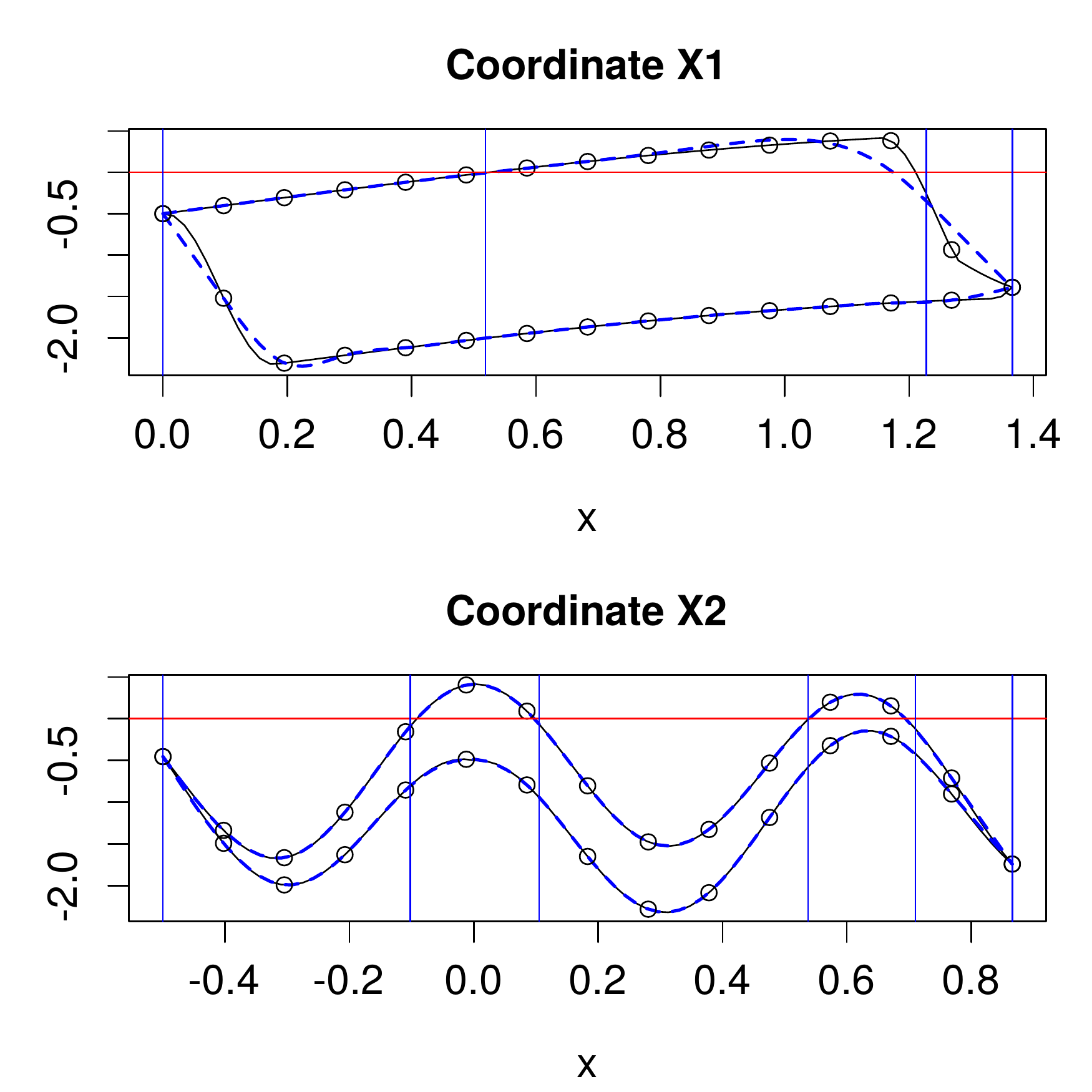}
		\caption{Oblique profile extrema functions for $\gamma$ (black solid lines) and their approximation (blue dashed lines) from $k=15$ points.}
		\label{fig:obliqueAnalytic}
	\end{minipage}	
\end{figure}

\subsection{Implementation}
\label{subsec:implementation}
The value of the profile sup $\Psup_{\Psi} \gamma(\eta)$, for a fixed matrix $\Psi \in \R^{d\times p}$, is the solution of a constrained optimization problem. In general, $\gamma$ is not convex, therefore we are not guaranteed to achieve a global optimum. We implement a local method to obtain profile extrema functions for $\Psi =\e_i \in \R^d, i \in \{1,\ldots,d\}$, and for a matrix $\Psi \in \R^{d\times p}$ of rank $p$.

In the first case we obtain the value for $\Psup_i \gamma$, for a coordinate $i$ with the following procedure. For each $\eta$, define a new function $\tilde{\gamma}_{i,\eta}: \R^{d-1} \rightarrow \R$ such that $\tilde{\gamma}_i(\z) := \gamma(z_1, \ldots,z_{i-1},\eta,z_{i}, \ldots, z_{d-1})$ for any $\z = (z_1, \ldots, z_{d-1}) \in D_{-i}$, where $D_{-i}$ is the $d-1$ dimensional restriction of $D$ without the $i$th coordinate. We can estimate the value $\Psup_i \gamma(\eta)$ with a L-BFGS-B algorithm. If the gradient of $\gamma$ is available it is used in the optimization, otherwise numerical derivatives are used. This method depends on the starting point chosen for the optimization. In general we are interested in the evaluation of the functions $\Psup_i \gamma(\eta)$ for all $\eta$ and for all $i \in \{1, \ldots, d\}$. By exploiting the smoothness of $\gamma$ we re-use the previously obtained points of optimum as starting point for the subsequent optimizations. 

In the second case, we assume that the matrix $\Psi$ has full column rank and that $D$ is a hyper-cube. The value $\Psup_{\Psi} \gamma(\eta)$ is the solution of the constrained optimization problem
\begin{equation}
\text{maximize }  \gamma(\x) \qquad \text{ subject to }  \Psi^T\x = \eta, \ \x \in D.
\label{eq:optFormulation}
\end{equation}
We transform~\eqref{eq:optFormulation} in an unconstrained optimization problem first by computing an arbitrary $\xi \in \R^d$ such that $\Psi^T\xi = \eta$ and then by observing that if $\x$ satisfies the equality constraint then $\x = \xi + \operatorname{Null}(\Psi^T)\z$, where $\operatorname{Null}(\Psi^T) \in \R^{d \times (d-p)}$ is the matrix representing the null space of $\Psi^T$ and $\z\in \R^{d-p}$. We can then maximize $\gamma$ over the $d-p$ dimensional space using a (hard) barrier function for the hyper-cube inequality constraints. Both $\xi$ and the starting point(s) for the interior point method are computed by solving auxiliary linear programs.  

In both cases, a multi-start optimization procedure provides a more reliable estimate of the global optimum at an increased computational cost. 
The two methods are implemented in the R programming language \citep{Rcore} in the package \verb|profExtrema|, available on CRAN. The first method uses the L-BFGS-B \citep{Byrd.etal1995} implementation in the function \verb|optim|, base package; the second method solves the auxiliary linear programs with \verb|lpcdd| from the package \verb|rcdd| (a R interface for \verb|cddlib|, \citet{Fukuda_cddlib}) and optimizes the barrier function with the BFGS implementation in \verb|optim|.

\subsection{Approximation of profile extrema functions}
\label{subsec:approxProf}

The cost of evaluating the profile extrema functions at each $\eta$ increases as $d - p$ grows larger. Plots of profile extrema functions with $\Psi\in \R^{d\times p}$ with $p=1$ or $2$ require evaluations over $E_\Psi \subset \R^p$. If the input dimension $d$ is high, each evaluation is a high dimensional optimization and visualizing the function at sufficient resolution becomes expensive. We mitigate this issue by approximating $\Psup_\Psi \gamma (\eta)$ at any $\eta$ with an interpolation of profile function evaluations at few space filling points $\eta_1, \ldots, \eta_k \in E_\Psi$. 

Coordinate profile extrema functions can be seen as a special type of optimal value functions. First order characterizations of such objects have been studied in optimization and perturbation theory, see, e.g., \citet{Danskin1967, BonnansShapiro1998}. In particular, consider the $i$-th coordinate profile extrema function $\Psup_i \gamma$. We can write $\gamma: D_{\mid -i} \times D_{\mid i} \rightarrow \R$ and $\Psup_i \gamma(\eta) = \sup_{y\in D_{\mid -i}} \gamma(y,\eta)$. If $\gamma(y,\cdot)$ is differentiable  with $\nabla_\eta \gamma(y,\cdot)$ continuous for $y \in D_{\mid -i}$ and there exists a unique $y^* = \arg \max \gamma(y,\eta)$, then $\nabla_\eta \Psup_i \gamma(\eta) = \nabla_\eta \gamma(\cdot,\cdot)_{\arrowvert_{(y^*,\eta)}}$. 
We can then evaluate $\Psup_{i} \gamma(\eta)$ and $\nabla_\eta\eta \Psup_{i}\gamma(\eta)$ at few space filling points $\eta_1, \ldots, \eta_k$, for a small $k>0$, and interpolate the value $\Psup_{i} \gamma(\eta)$ at any $\eta \in E_\Psi$ using a first order approximation. 
For arbitrary $\Psi$, we approximate $\Psup_\Psi \gamma (\eta)$ at any $\eta$ with cubic splines or kriging interpolators as they do not require $\nabla_\eta \Psup_\Psi$.

Below we present an example for the case $p=1$ with a cubic spline approximation. 
%
%		 		Full		 Approx	 
% coord		 	0.2475367 	 0.03597671 
% oblique		0.637043 	 0.1310521
%
%
The blue dashed lines in Figures~\ref{fig:coordAnalytic},~\ref{fig:obliqueAnalytic} show the approximation of coordinate and oblique profile extrema functions, the exact functions are also plotted in black solid lines. We select $k=15$ points $\eta_1, \ldots, \eta_k \in E_{\e_i}$ with Latin hyper-cube sampling, the approximate profiles at any point are computed with a cubic spline regression from $k$ exact calculations. Exact coordinate profiles $\Psup_i,\Pinf_i$, for $i=1,2$ over a grid of equispaced $100$ points on $[0,1]$ require $0.25$ seconds and oblique profiles $\Psup_{\mathbf{v_i}},\Pinf_{\mathbf{v_i}}$, $i=1,2$ require $0.64$ seconds on a laptop with 2.6 GHz Intel Core i5-7300U CPU. On the same laptop, cubic spline approximations plotted on the same grid require $0.04$ seconds for coordinate profiles ($0.13$ oblique). On this example, the maximum absolute median errors achieved with the approximation of $\Pinf \gamma$, $\Psup \gamma$ are $0.07 \%$ and $0.54 \%$ for coordinate profiles and $0.19 \%$ and $5.08 \%$ for oblique profiles.  
%
%Precision mean
%					inf		 		sup	 
% coord			0.008202496		0.1080113 
% oblique		0.02164754 	 	0.6685745
%
% Precision medians
%					inf	 			sup	 
% coord		 	0.00119324  	0.003428855 
% oblique 	 	0.001900106 	0.05079985
%
%
The number of approximation points, $k$, is an important parameter for the approximation, the choice of $k$ boils down to a trade-off between computational time and precision. For 1-dimensional profile extrema ($p=1$) the default choice implemented is $k=10\sqrt{d}$. When the profile functions are smooth this heuristics often leads to good results, however it is generally better to compare different values for $k$. Cubic splines are implemented here only for $p=1$, for $p=2$ we use kriging. 

\section{Uncertainty quantification for profile extrema}
\label{sec:UQ}

\subsection{Profile extrema for expensive-to-evaluate functions}
\label{subsec:introGP}

In the previous section we introduced profile extrema functions to visualize excursion sets of arbitrary functions. This method, however, requires performing potentially very high numbers of function evaluations which can be prohibitive in many computer experiments, as, for example, in the motivating application of Section~\ref{sec:motivating}. We therefore rely on a  Gaussian process emulator based on few evaluations of the expensive function. We consider a design of experiments (DoE) with $n$ points $\doe_n \in D^n$ and the corresponding function evaluations $\mathbf{y}_n$ as described in Section~\ref{sec:intro}. We select a prior GP $(Z_{\x})_{\x \in D} \sim GP(\mean,\covkern)$ with mean $\mean(\x) = \sum_{j=1}^m c_j h_j(\x)$, where $\mathbf{c} = (c_1, \ldots, c_m)$ is a vector of unknown coefficients and $h_j$ are known basis functions. The covariance kernel is typically selected from a stationary parametric family, e.g.~Mat\'ern family, and its hyper-parameters are estimated with maximum likelihood. Here we use the R package \verb|DiceKriging| \citep{Roustant.etal2012} for fitting and prediction. Given the data and hyper-parameters the posterior mean and covariance kernel can be computed with the following universal kriging (UK) equations.
\begin{align}
\label{eq:postMean}
\mean_n(\x) &= \mathbf{h}(\x)^T\widehat{\mathbf{c}} + \covkern(\x, \doe_n)\covkern(\doe_n,\doe_n)^{-1} \left(\y_n - {H}\hat{\mathbf{c}} \right) \\ \label{eq:postCov}
\covkern_n(\x,\x^\prime) &= \covkern(\x,\x^\prime) - \covkern(\x, \doe_n)\covkern(\doe_n, \doe_n)^{-1}\covkern(\x^\prime, \doe_n)^T + \\ \nonumber
&\boldsymbol{\lambda}(\x,\doe_n)({H}^T\covkern(\doe_n, \doe_n)^{-1}{H})^{-1}\boldsymbol{\lambda}(\x^\prime,\doe_n)^T, \ \text{for } \x,\x^\prime \in D \qquad \text{with} \\ \nonumber
\boldsymbol{\lambda}(\x,\doe_n) &= (\mathbf{h}(\x)^T - \covkern(\x, \doe_n)\covkern(\doe_n, \doe_n)^{-1}{H}) \qquad \text{ and} \\ \nonumber
\widehat{\mathbf{c}} &= ({H}^T\covkern(\doe_n, \doe_n)^{-1}{H})^{-1}{H}^T\covkern(\doe_n, \doe_n)^{-1}\y_n
\end{align}
where $\mathbf{h}(\x) = (h_1(\x), \ldots, h_m(\x))^T$, ${H}= [h_j(\x_i)]_{i=1,\ldots,n, j=1,\ldots, m}$,  $\covkern(\x,\doe_n) = (\covkern(\x,\x_1), \ldots, \covkern(\x,\x_n) )$, for $\x \in D$ and $\covkern(\doe_n,\doe_n) = (\covkern(\x_i,\x_j))_{i,j=1, \ldots, n} \in \R^{n \times n}$. If $m=1$ and $h_1 \equiv 1$ we have ordinary kriging (OK) equations.

Once we have fitted the GP, profile extrema functions can be directly computed with the posterior mean, leading to a visualization of a plug-in estimate $\hat{\varGamma} = \{ \x \in D : \mean_n(x) \geq \threshold \}$ for $\varGamma$. The GP assumption allows us to also quantify the uncertainty on profile extrema functions, at fixed hyper-parameters. In particular, in the next section, we develop a Monte Carlo method that exploits approximate posterior realizations to obtain point-wise confidence intervals for $\Psup_{\e_i} f$ and $\Pinf_{\e_i} f$, for $i=1, \ldots, d$.

\subsection{Approximation for posterior field realizations}
\label{subsec:approxReals}

Consider the GP $(Z_\x)_{\x \in D}$, we denote with $Z_\x(\omega)$ a realization of the process at $\x$, for $\omega \in \Omega$, the underlying probability space. We are interested in posterior realizations of $Z_\x$ given $Z_{\doe_n} = \mathbf{y}_n$. A brute force approach would involve simulating $\npostGPsim$ times the field over a space filling design on $D$ and then evaluating the profile extrema function by taking the discrete extrema. This procedure however is very costly as it requires $\npostGPsim$ exact simulations of the field over many points and its results strongly depend on the chosen discretization. Following~\citet{Azzimonti.etal2016} we use an approximating process in order to reduce the computational cost. We consider the sequence of points $G= (\g_1, \ldots, \g_\ell) \in D^\ell$ and the \textit{approximating process of $Z$ based on $G$}
\begin{equation}
\widetilde{Z}_\x = a(\x) + \mathbf{b}^T(\x) Z_G, \qquad \x \in D, 
\label{eq:tildeZ}
\end{equation}
where $a: D \rightarrow \R$ is a trend function and $\mathbf{b}: D \rightarrow \R^\ell$ is a continuous vector-valued function of deterministic weights, and $Z_G = (Z_{\g_1}, \ldots, Z_{\g_\ell})$ is the $\ell$-dimensional random vector given by the values of the original process $Z$ at $G$. Here we select $a(\x) = \mean(\x)$, $\mathbf{b}(\x) = \covkern(G,G)^{-1}\covkern(\x,G)^T$ where $\covkern(\x,G) = (\covkern(\x,\g_1), \ldots, \covkern(\x,\g_l)) \in \R^{1\times \ell}$ and $\covkern(G,G) = [\covkern(\g_i,\g_j)]_{i,j =1, \ldots, \ell} \in \R^{\ell \times \ell}$. In what follows, the points $G$ are called \emph{pilot points}, borrowing the term from geostatistics~\citep[see,e.g.][Chapter~4.2]{Scheidt2006}, and here they are selected with Algorihtm~B from~\citet{Azzimonti.etal2016}. The number of pilot points $\ell$ can be empirically chosen by stopping when the optimum of Algorithm~B's objective function stabilizes around a value. In moderate dimensions, usually $\ell$ between $50$ and $150$ leads to distributions close to the true one. The bound introduced in section~\ref{subsec:bound} and, in particular, the related tightness indicator, equation~\eqref{eq:integratedVarDiff}, can also be used to choose the number of pilot points. For a fixed $\omega \in \Omega$,  $\widetilde{Z}_\x(\omega)$ is a function of $\x$ that has an analytical expression  and only requires posterior simulations of the original field $Z$ at $G$ and linear operations with pre-calculated kriging weights. Moreover the gradient of $\widetilde{Z}_\x(\omega)$ with respect to $\x$ is known, see Appendix~\ref{sec:gradTildeZ}, therefore we compute profile extrema for each realization, $\Psup_\Psi \widetilde{Z}_\x(\omega)$ and $\Pinf_\Psi \widetilde{Z}_\x(\omega)$, with the algorithms introduced in Section~\ref{subsec:implementation}.

\subsection{Analytical example under uncertainty}
\label{subsec:analExUQ}

% Models 	20			90
%  MSE	 26.50625 	 1.159866 
%  Q2	 0.8587781 	 0.9997296 

Let us consider the analytical example illustrated in Section~\ref{subsec:analExample}. We evaluate the function in Equation~\eqref{eq:analExample} only at $n$ points and we approximate it on a dense grid $100\times 100$ with a GP. We consider two DoE with $n=20$ and $n=90$, both obtained with (randomized) maximin Latin hypercube sampling (R package \verb|DiceDesign|, \citet{Dupuy_dDesign_2015}) and we choose a prior GP with constant mean and Mat\`ern covariance kernel with $\nu=5/2$. We estimate the covariance hyper-parameters with maximum likelihood and the constant mean value with the ordinary kriging formulae. The model with $n=20$ evaluations provides a rough approximation for $f$, while the model with $n=90$ is an accurate reconstruction. The $Q^2$ criterion\footnote{$Q^2 = 1 - \sum_{j=1}^{N_{test}} (\mean_{n}(x_j) - y^{test}_j)^2 / (\sum_{j=1}^{N_{test}} (y^{test}_j - \overline{y^{test}} )^2)$ with $\overline{y^{test} }= 1/{N_{test}}\sum_{j=1}^{N_{test}} y^{test}_j$} computed over a dense grid of $N_{test}=10000$ test data is equal to $0.86$ for the posterior mean with $n=20$ and $0.99$ for $n=90$.  	
For both models, we consider $\ell=80$ pilot points and we compute the point-wise confidence intervals for the profile extrema on the mean with Monte Carlo simulations. 
Figure~\ref{fig:ex2dmean_20} shows the posterior GP mean obtained from $n=20$ evaluations of the function in Equation~\eqref{eq:analExample}, the regions delimited by the oblique profile extrema on the mean along the directions $\mathbf{v_1},\mathbf{v_2}$ and its $90\%$ point-wise confidence intervals for the threshold $\threshold= 0$. Figure~\ref{fig:ex2dprof_20} shows the profile extrema for the mean, the profile extrema for $\npostGPsim=150$ approximate realizations of the posterior process and the empirical point-wise $90\%$ confidence intervals (dark shaded tube, red). Figure~\ref{fig:ex2dprof_90} shows profile extrema for the mean on the model with $n=90$ observations. In this case the confidence intervals (dark shaded tube, red) are much tighter indicating a smaller uncertainty.

\begin{figure}
	\begin{minipage}{0.5\textwidth}
		\centering
		\includegraphics[width=\linewidth]{./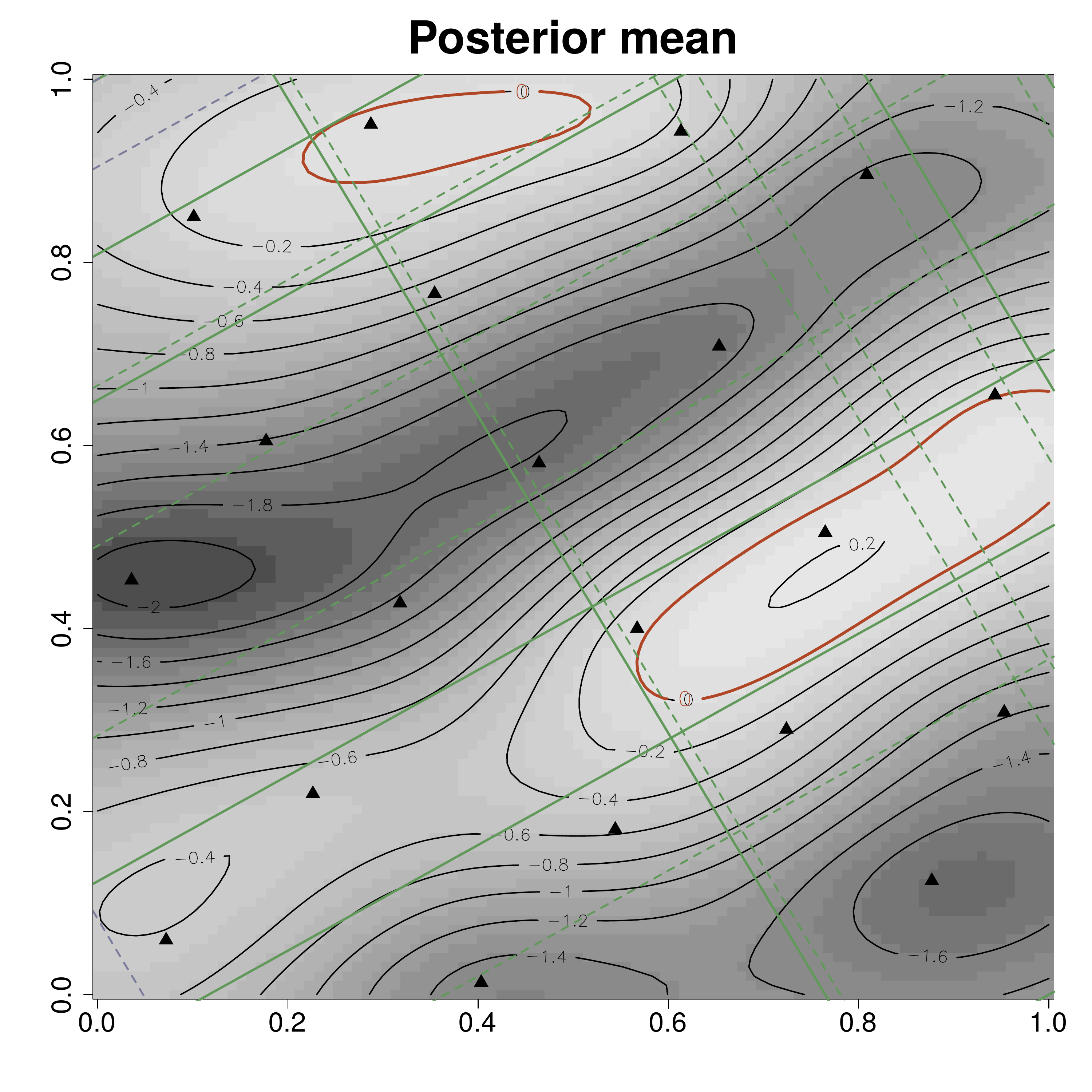}
		\caption{Posterior mean of GP trained on $n=20$ evaluations of function in equation~\eqref{eq:analExample}. Excursion region ($\threshold=0$, shaded red), regions selected by the profile extrema on mean (solid) and on $95\% - 5\%$ quantiles (dashed).}
		\label{fig:ex2dmean_20}	
	\end{minipage} \hfill\hspace{0.1cm}
	\begin{minipage}{0.5\textwidth}
		\centering
		\includegraphics[width=\linewidth]{./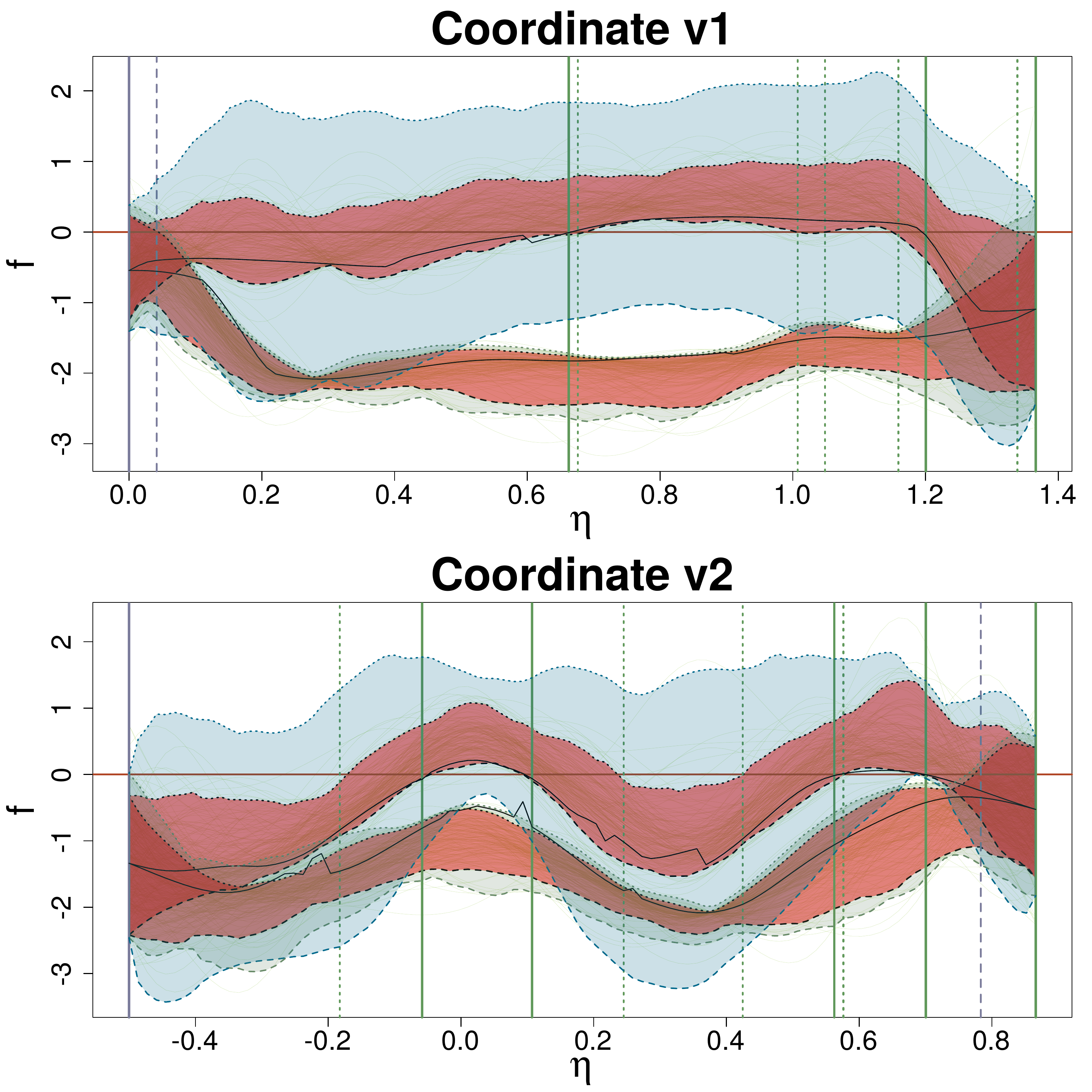}
		\caption{{Profile extrema functions ($n=20$, $\threshold=0$): mean (black, solid), breakpoints (vertical green; mean: solid, quantiles: dashed), quantiles ($90 \%$, black dashed, CI dark red tube), upper-lower bound (light blue tube).}}
		\label{fig:ex2dprof_20}
	\end{minipage}	
\end{figure}

\begin{figure}[h!]
	\begin{minipage}{0.5\textwidth}
		\centering
		\includegraphics[width=\linewidth]{./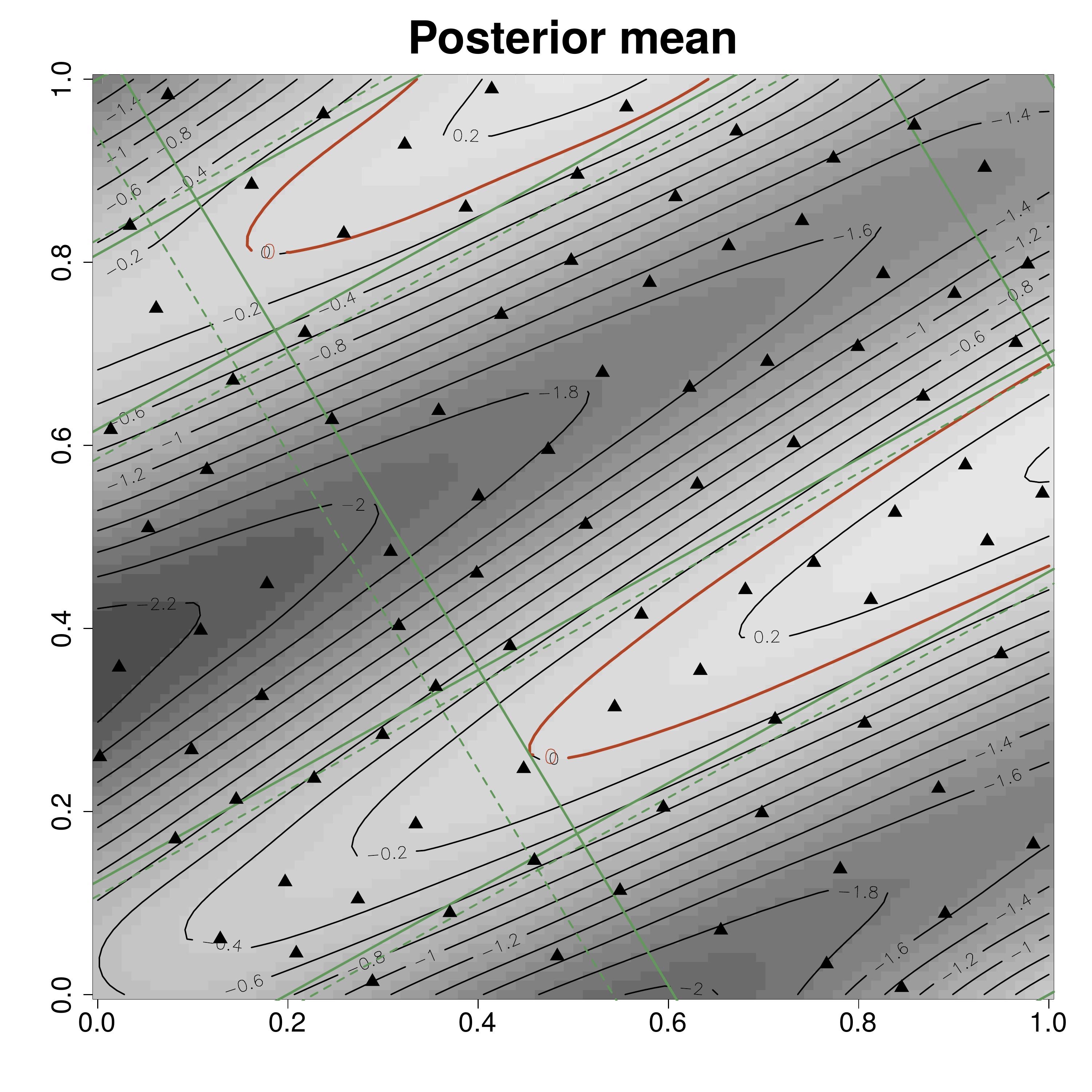}
		\caption{Posterior mean of GP trained on $n=90$ evaluations of function in equation~\eqref{eq:analExample}. Excursion region ($\threshold=0$, shaded red), regions selected by the profile extrema on mean (solid) and on $95\% - 5\%$ quantiles (dashed).}
		\label{fig:ex2dmean_90}	
		\end{minipage} \hfill\hspace{0.1cm}
		\begin{minipage}{0.5\textwidth}
			\centering
			\includegraphics[width=\linewidth]{./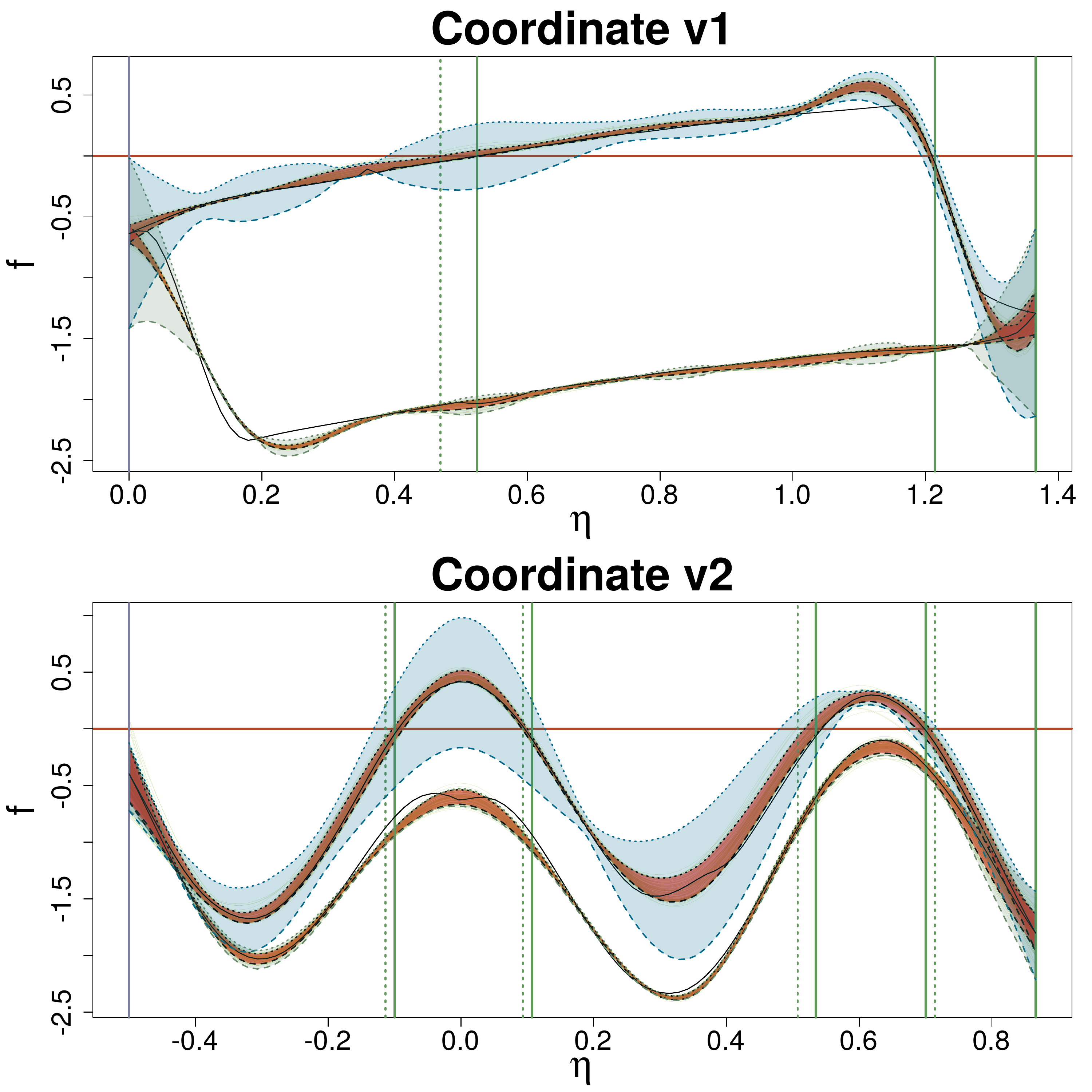}
			\caption{Profile extrema functions ($n=90$, $\threshold=0$): mean (black, solid), breakpoints (vertical green; mean: solid, quantiles: dashed), quantiles ($90 \%$, black dashed, CI dark red tube), upper-lower bound (light blue tube).}
			\label{fig:ex2dprof_90}
			\end{minipage}	
\end{figure}

\subsection{Bounds for the approximation}
\label{subsec:bound}

The approximating process $\widetilde{Z}$ does not provide proper probabilistic statements for $\Psup_\Psi Z, \Pinf_\Psi Z$ as it is based on the $\ell$-dimensional random vector $Z_G$. However since the distribution of the error process $Z-\widetilde{Z}$ can be expressed in closed form, we can control the approximating error with the following probabilistic bound. Proofs are in Appendix~\ref{sec:proofs}.

\begin{theorem}
	Consider a Gaussian process $(Z_\x)_{\x \in D} \sim GP(\mean,\covkern)$, the approximating process $\widetilde{Z}$ of $Z$ based on the points $G$ (Equation~\eqref{eq:tildeZ}) and $T \subset D$, then for any $u>\mean^{\widetilde{\Delta}}_T$
	\begin{equation}
	Pr\left( \left\lvert \sup_{\x \in T} Z_\x - \sup_{\x \in T} \widetilde{Z}_\x \right\rvert > u \right) \leq 2 \exp \left( -\frac{(u- \mean^{\widetilde{\Delta}}_T )^2 }{2 (\sigma^{\widetilde{\Delta}}_T)^2} \right),
	\label{eq:supBound}
	\end{equation}
	where 
	\begin{align}
	\label{eq:mmvvDelta}
	\mean^{\widetilde{\Delta}}_T &=\sup_{\x \in T} \lvert \mean^{\widetilde{\Delta}}(\x) \rvert \text{ and } (\sigma^{\widetilde{\Delta}}_T)^2 = \sup_{\x \in T} \covkern^{\widetilde{\Delta}}(\x,\x) \text{ with } \\ \nonumber
	\mean^{\widetilde{\Delta}}(\x) &= \E[Z_\x - \widetilde{Z}_\x] = \mean(\x) - a(\x) - \mathbf{b}^T(\x)\mean(G) \qquad \x \in T \\ \nonumber
	\covkern^{\widetilde{\Delta}}(\x,\y) &=
	\covkern(\x,\y) - \covkern(\y,G)\mathbf{b}(\x) - \covkern(\x,G)\mathbf{b}(\y) + \mathbf{b}^T(\x)\covkern(G,G)\mathbf{b}(\y), \qquad \x,\y \in T 
	\end{align}
	\label{theo:bound}
\end{theorem}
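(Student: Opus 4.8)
The plan is to convert the two–sided control of a difference of suprema into a single concentration statement for the supremum of the Gaussian \emph{error field} $\Delta_\x := Z_\x - \widetilde{Z}_\x$ over $T$, and then to invoke the Borell--TIS inequality. The first, purely deterministic, step is the elementary bound $\bigl|\sup_{\x\in T} Z_\x - \sup_{\x\in T}\widetilde{Z}_\x\bigr| \le \sup_{\x\in T}\lvert Z_\x - \widetilde{Z}_\x\rvert = \sup_{\x\in T}\lvert\Delta_\x\rvert$, which follows from $\sup_T g \le \sup_T h + \sup_T(g-h)$ applied in both directions. This replaces the awkward difference of two suprema by the supremum of one field.

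Next I would identify the law of $\Delta$. Since $\widetilde{Z}_\x = a(\x) + \mathbf{b}^T(\x)Z_G$ is an affine image of the Gaussian vector $Z_G$ and $(Z_\x, Z_G)$ is jointly Gaussian, $\Delta$ is itself a Gaussian field; its mean and covariance follow from linearity of $\E$ and bilinearity of $\covkern$, yielding exactly $\mean^{\widetilde{\Delta}}(\x)$ and $\covkern^{\widetilde{\Delta}}(\x,\y)$ as in the statement. This computation is mechanical. I would also note that continuity of $\covkern$ (hence of $\mathbf{b}$) makes the paths of $\Delta$ almost surely bounded on the compact set $T$, so the hypotheses of Borell--TIS are satisfied and $(\sigma^{\widetilde{\Delta}}_T)^2 = \sup_{\x\in T}\covkern^{\widetilde{\Delta}}(\x,\x)$ is finite.

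I would then split off the mean. Writing $\Delta_\x = \mean^{\widetilde{\Delta}}(\x) + \Delta^0_\x$ with $\Delta^0$ centered, one has $\sup_T \Delta_\x \le \mean^{\widetilde{\Delta}}_T + \sup_T \Delta^0_\x$ and symmetrically $\sup_T(-\Delta_\x) \le \mean^{\widetilde{\Delta}}_T + \sup_T(-\Delta^0_\x)$. Since $\{\sup_T\lvert\Delta_\x\rvert > u\} \subseteq \{\sup_T\Delta_\x > u\}\cup\{\sup_T(-\Delta_\x) > u\}$, a union bound produces the factor $2$, and each event is contained in one of the form $\{\sup_T(\pm\Delta^0_\x) > u - \mean^{\widetilde{\Delta}}_T\}$, where the threshold is positive precisely because $u > \mean^{\widetilde{\Delta}}_T$. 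Applying Borell--TIS to the centered fields $\pm\Delta^0$, both of sup-variance $(\sigma^{\widetilde{\Delta}}_T)^2$, yields the sub-Gaussian tail of scale $\sigma^{\widetilde{\Delta}}_T$ in the variable $u-\mean^{\widetilde{\Delta}}_T$, and summing the two contributions gives the stated right-hand side.

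The main obstacle is the concentration step itself. Borell--TIS controls $\sup_T\Delta^0_\x$ around its expectation $\E[\sup_T\Delta^0_\x]$ rather than around $0$, so the delicate point is to argue that the tail is governed solely by the mean-function bound $\mean^{\widetilde{\Delta}}_T$ and the sup-variance $(\sigma^{\widetilde{\Delta}}_T)^2$ as written — i.e.\ to absorb or neglect the expected supremum of the residual field, which is legitimate in the present regime where the pilot points render $\covkern^{\widetilde{\Delta}}$ uniformly small. Carefully tracking the direction of each inequality through the mean splitting, verifying that a measurable and a.s.\ bounded separable version of $\Delta$ exists so that Borell--TIS applies, and checking the sup-variance identity are where the attention is required; by contrast the covariance computation and the deterministic inequality on suprema are routine.
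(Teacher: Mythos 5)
Your proposal follows essentially the same route as the paper's proof: the deterministic inequality $\left\lvert \sup_{\x \in T} Z_\x - \sup_{\x \in T}\widetilde{Z}_\x\right\rvert \le \sup_{\x \in T}\left\lvert Z_\x - \widetilde{Z}_\x\right\rvert$, identification of the Gaussian law of the residual field $\widetilde{\Delta}_\x = Z_\x - \widetilde{Z}_\x$ (whose mean and covariance are exactly $\mean^{\widetilde{\Delta}}$ and $\covkern^{\widetilde{\Delta}}$), splitting off its mean, a factor of $2$ from the two-sided tail (your union bound over $\{\sup_{\x\in T}\widetilde{\Delta}_\x > u\}$ and $\{\sup_{\x\in T}(-\widetilde{\Delta}_\x) > u\}$ is the paper's symmetry step in different clothing), and the Borell--TIS inequality applied to the centered residual. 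The ``delicate point'' you flag is real, and it is precisely where the paper's own proof is loose: the paper asserts $Pr\left(\sup_{\x \in T}\widetilde{\Delta}^C_\x > u\right) \le \exp\left(-u^2/2(\sigma^{\widetilde{\Delta}}_T)^2\right)$, citing Borell--TIS, thereby silently replacing concentration around $\E\left[\sup_{\x \in T}\widetilde{\Delta}^C_\x\right]$ --- a nonnegative quantity, strictly positive outside degenerate cases --- by concentration around $0$. Be aware, however, that your proposed repair, namely that the expected supremum may be neglected because ``the pilot points render $\covkern^{\widetilde{\Delta}}$ uniformly small'', is not available: the theorem is stated for arbitrary $T \subset D$ and arbitrary $G$, with no smallness hypothesis, and smallness of the sup-variance does not control the expected supremum anyway (a residual field with tiny pointwise variance but $N$ nearly independent values has expected supremum of order $\sigma^{\widetilde{\Delta}}_T\sqrt{2\log N}$, which can dominate $\sigma^{\widetilde{\Delta}}_T$ by an arbitrarily large factor). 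So your write-up matches the paper's proof in structure and in rigor, and is in fact more candid about its weak step; a fully airtight version --- for your argument and the paper's alike --- would either retain the $\E[\sup]$ term in the exponent, shifting $u$ by it, or bound that term explicitly, e.g.\ via a metric-entropy (Dudley-type) argument on the residual field.
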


If the approximating process $\widetilde{Z}$ is unbiased then $\mean^{\widetilde{\Delta}}(\x) \equiv 0$ and the inequality in~\eqref{eq:supBound} is valid for any $u>0$.

\begin{corollary}
	\label{cor:bounds}
	Consider the profile sup random functions defined as 
	\begin{equation*}
	\Psup_{\Psi} \widetilde{Z} (\eta) := \sup_{ \{\x \in D : \Psi^T\x = \eta \}} \widetilde{Z}_\x \ \text{ and } \ \Psup_{\Psi} Z (\eta) := \sup_{ \{\x \in D : \Psi^T\x = \eta \}} Z_\x, \qquad \eta \in E_\Psi,
	\end{equation*}
	where $\widetilde{Z}$ is an unbiased approximate process in the form of Equation~\eqref{eq:tildeZ}. For any $\eta^* \in E_\Psi$ and $\alpha \in (0,1)$
	\begin{gather}
	\nonumber
	Pr\big(\Psup_{\Psi} Z (\eta^*) \leq u^+_{\alpha} \big) \geq 1- \alpha, \qquad \text{ with } \ u^+_{\alpha} = \tilde{u}^+_\beta+\sqrt{2 \left(\sigma^{\widetilde{\Delta}}_T\right)^2 \log\left(\frac{2}{\alpha -\beta}\right)} \text{ and } \\ 
	Pr\big(\Psup_{\Psi} Z (\eta^*) \geq u^-_{\alpha} \big) \geq 1- \alpha, \qquad \text{ with } \ u^-_{\alpha} = \tilde{u}^-_\beta -\sqrt{2 \left(\sigma^{\widetilde{\Delta}}_T\right)^2 \log\left(\frac{2}{\alpha -2\beta}\right)}
	\label{eq:profBoundCor}
	\end{gather}
	with $(\sigma^{\widetilde{\Delta}}_T)^2$ as in~\eqref{eq:mmvvDelta}, $T= \{ \x \in D: \Psi^T\x = \eta \}$, $\alpha > 2\beta$ and $\tilde{u}^+_\beta,\tilde{u}^-_\beta$ are the $1-\beta, \beta$ quantiles for $\Psup_{\Psi} \widetilde{Z} (\eta^*)$ respectively. Equations~\eqref{eq:profBoundCor} imply $Pr\big(\Psup_{\Psi} Z (\eta^*) \in [u^-_\alpha, u^+_\alpha] \big) \geq 1-2\alpha$.
\end{corollary}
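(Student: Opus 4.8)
The plan is to reduce both inequalities in Equation~\eqref{eq:profBoundCor} to the single concentration inequality of Theorem~\ref{theo:bound} by means of a union bound. Throughout I fix $\eta^* \in E_\Psi$, set $T = \{\x \in D : \Psi^T \x = \eta^*\}$ (nonempty since $\eta^* \in E_\Psi$), and abbreviate $A := \Psup_\Psi Z(\eta^*) = \sup_{\x \in T} Z_\x$ and $B := \Psup_\Psi \widetilde{Z}(\eta^*) = \sup_{\x \in T} \widetilde{Z}_\x$. Because $\widetilde{Z}$ is assumed unbiased, $\mean^{\widetilde{\Delta}}(\x) \equiv 0$ on $T$, so the remark immediately following Theorem~\ref{theo:bound} applies and gives, for every $u > 0$,
\begin{equation*}
Pr\bigl(|A - B| > u\bigr) \leq 2\exp\Bigl(-\tfrac{u^2}{2(\sigma^{\widetilde{\Delta}}_T)^2}\Bigr).
\end{equation*}
Continuity of $Z$ on the compact set $T$ ensures that both suprema are attained and finite almost surely, so $A$ and $B$ are genuine random variables and the above display is the only probabilistic input I will need.

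For the upper bound I would write $u^+_\alpha = \tilde{u}^+_\beta + \delta$ with $\delta = \sqrt{2(\sigma^{\widetilde{\Delta}}_T)^2 \log(2/(\alpha-\beta))}$ and bound the complementary event. The key observation is the inclusion
\begin{equation*}
\{A > u^+_\alpha\} \subseteq \{B > \tilde{u}^+_\beta\} \cup \{|A - B| > \delta\},
\end{equation*}
valid because on the event $A > \tilde{u}^+_\beta + \delta$ with $B \leq \tilde{u}^+_\beta$ one necessarily has $A - B > \delta$. A union bound then yields $Pr(A > u^+_\alpha) \leq Pr(B > \tilde{u}^+_\beta) + Pr(|A-B| > \delta)$. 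By definition of the $1-\beta$ quantile $\tilde{u}^+_\beta$ the first term is at most $\beta$, while the calibrated choice of $\delta$ collapses the exponential bound to exactly $2 \cdot \tfrac{\alpha - \beta}{2} = \alpha - \beta$; the two contributions sum to $\alpha$, giving $Pr(A \leq u^+_\alpha) \geq 1 - \alpha$.

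The lower bound would proceed symmetrically: writing $u^-_\alpha = \tilde{u}^-_\beta - \delta'$ with $\delta' = \sqrt{2(\sigma^{\widetilde{\Delta}}_T)^2 \log(2/(\alpha - 2\beta))}$, the analogous inclusion $\{A < u^-_\alpha\} \subseteq \{B < \tilde{u}^-_\beta\} \cup \{|A - B| > \delta'\}$ reduces everything to the quantile tail $Pr(B < \tilde{u}^-_\beta) \leq \beta$ together with the concentration term $Pr(|A-B| > \delta') \leq \alpha - 2\beta$. Here the hypothesis $\alpha > 2\beta$ is exactly what guarantees that the log argument $2/(\alpha - 2\beta)$ is well-defined (and at least one), so that $\delta'$ is real; the slightly more conservative constant $2\beta$ versus $\beta$ in the upper bound merely widens the interval and keeps the sum of the two tail probabilities below $\alpha$. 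Finally, the two-sided statement follows from one further union bound on the complementary events, $Pr(A \notin [u^-_\alpha, u^+_\alpha]) \leq Pr(A < u^-_\alpha) + Pr(A > u^+_\alpha) \leq \alpha + \alpha = 2\alpha$.

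I expect the only delicate points to be the bookkeeping of the event inclusions and the matching of the calibration constants in $\delta, \delta'$ to the quantile levels, rather than any deep probabilistic difficulty: the sub-Gaussian concentration of the difference of suprema, ultimately inherited from the Borell-TIS inequality, is already packaged in Theorem~\ref{theo:bound}. One should, however, double-check the quantile conventions (strict versus non-strict inequalities) so that $Pr(B > \tilde{u}^+_\beta) \leq \beta$ and $Pr(B < \tilde{u}^-_\beta) \leq \beta$ hold as stated, which is automatic whenever $B$ has a continuous distribution.
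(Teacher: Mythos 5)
Your proposal is correct and takes essentially the same route as the paper: both reduce the two claims to Theorem~\ref{theo:bound} by including the tail event of $\Psup_{\Psi} Z(\eta^*)$ in the union of a quantile event for $\Psup_{\Psi} \widetilde{Z}(\eta^*)$ and a concentration event for the difference of suprema, then calibrate the radii so the two tail probabilities sum to at most $\alpha$ (the paper writes the upper bound via a conditional-probability decomposition and the lower bound via a three-event partition, but the content is the same union bound). The only substantive difference is that the paper's lower-bound bookkeeping spends the quantile mass $\beta$ twice (which is where the $\alpha - 2\beta$ inside the logarithm comes from), whereas your cleaner inclusion $\{A < u^-_\alpha\} \subseteq \{B < \tilde{u}^-_\beta\} \cup \{\lvert A - B\rvert > \delta'\}$ spends it once and hence yields $Pr\big(\Psup_{\Psi} Z (\eta^*) < u^-_{\alpha}\big) \leq \alpha - \beta$, confirming your remark that the stated constant is merely conservative.
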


In practice the quantiles $\tilde{u}^\pm_\beta$ in equation~\eqref{eq:profBoundCor} are estimated with sample quantiles $\widehat{\tilde{u}^\pm_\beta}$ from the realizations of $\widetilde{Z}$ and $u^\pm_\alpha$ are estimated  
by plugging-in $\widehat{\tilde{u}^\pm_\beta}$ in equation~\eqref{eq:profBoundCor}. 

Figures~\ref{fig:ex2dprof_20} and~\ref{fig:ex2dprof_90} show the conservative bound in equation~\eqref{eq:profBoundCor} on $\Psup_i Z$ (skyblue,  lightly shaded tube) and $\Pinf_i Z$ (seagreen, lightly shaded), $i=1,2$, on the example presented in Section~\ref{subsec:analExUQ}. The uncertainty is much smaller with $n=90$, however the bound is still very conservative. An indicator for the bound tightness is the quantity $(\sigma^{\widetilde{\Delta}}_T)^2(\eta,\ell)$, for each $\eta \in E_\Psi$. Here we explicit the dependency of the approximation on $\ell$. In particular, we study the integral of this map,~i.e.
\begin{equation}
I(\sigma^{\widetilde{\Delta}}_T)^2(\ell) := \int_{E_\Psi} (\sigma^{\widetilde{\Delta}}_T)^2 (\eta,\ell) d\eta
\label{eq:integratedVarDiff}
\end{equation}
 Appendix~\ref{sec:deltaTex2d} shows a comparison of $I(\sigma^{\widetilde{\Delta}}_T)^2$ from the example in figure~\ref{fig:ex2dprof_90}, as a function of $\ell$. Figure~\ref{fig:ex2dDeltaTcomparison}, in particular, shows exponential decrease in $I(\sigma^{\widetilde{\Delta}}_T)^2$ as $\ell$ grows to $80$.

\subsection{Discussion on method's parameters}
Profile extrema functions for an expensive-to-evaluate function require the user to set several parameters summarized in Table~\ref{tab:paramSummary}, Appendix~\ref{sec:deltaTex2d}. 
The DoE selected to train the GP model has the biggest influence on the profile extrema uncertainty. Figures~\ref{fig:ex2dprof_20} and~\ref{fig:ex2dprof_90} show that a GP model trained on more function evaluations, reduces the uncertainty on profiles. The bounds on profile extrema, equation~\eqref{eq:profBoundCor},  however, are also controlled by two parameters chosen by the user given a fixed DoE: the number of pilot points, $\ell$, and of  GP realizations, $\npostGPsim$. The number of posterior GP realizations, $\npostGPsim$, controls the accuracy of the empirical quantiles $\tilde{u}^\pm_\beta$. More importantly, the number of pilot points $\ell$ controls directly the tightness of the bounds in equation~\eqref{eq:profBoundCor}. This parameter plays a more prominent role in uncertainty quantification as more pilot points lead to a smaller $\sigma^{\widetilde{\Delta}}_T$ and a tighter bound, see,~e.g., Figure~\ref{fig:ex2dDeltaTcomparison}, Appendix~\ref{sec:deltaTex2d}.  

\section{Motivating application: coastal flooding}
\label{sec:motivating}

\subsection{Motivation and study case}
Coastal flooding models experienced recent progresses opening new research and applications perspectives. However, their computational cost ($>$ hours) hinders their use when a large number of simulations is required  for estimating the excursion region corresponding to the critical forcing conditions leading to inundation or when forecast is needed \citep{RohmerIdier2012,idier13}. 

We focus here on coastal flooding induced by overflow and we consider the Boucholeurs area (French Atlantic coast, see Figure~\ref{fig:LOC}). This area is located close to La Rochelle and was flooded during the 2010 Xynthia storm event. This event was characterized by a high storm surge ($>1.5 \meters$ at La Rochelle tide gauge) in phase with a high spring tide \citep{bertin14}. Here, we focus on these primary drivers (tide and storm surge) and on how they affect the resulting flooded surface ($Y$, in square meters).

\begin{figure}
	\centering
	\includegraphics[width=0.85\linewidth]{./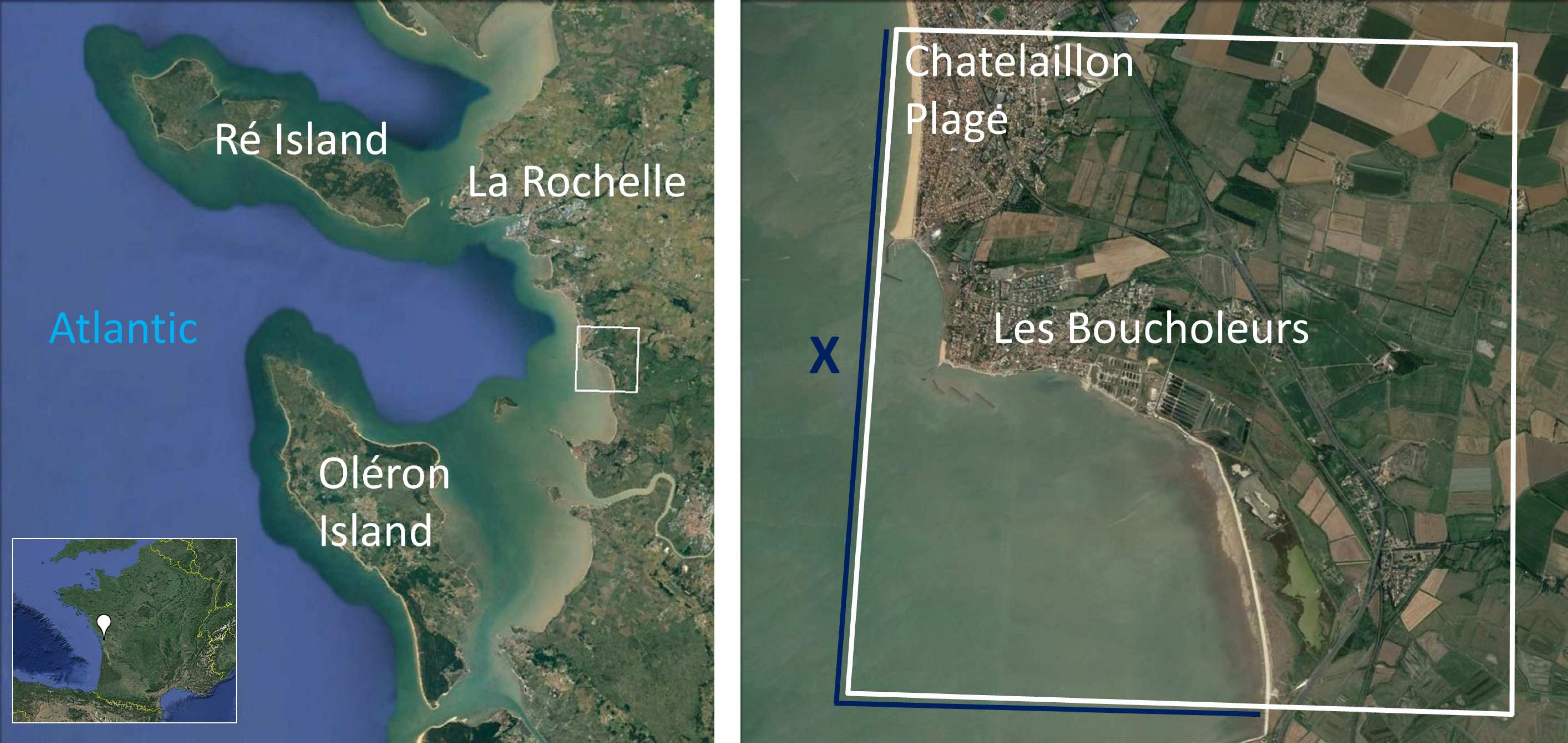}
	\caption{\label{fig:LOC}Study site location (left) and computational domain limits (right, in white) and location of the forcing conditions $\x$ (right, in blue).}
\end{figure}

\subsubsection{The forcing conditions}

The offshore forcing conditions correspond to the tide and storm surge temporal evolution  (see Figure~\ref{fig:X}a) are denoted with $\x = (T,S,t_0,t_+,t_-)$. They are parametrized as follows:
\begin{itemize}
	\item the tide is simplified by a sinusoidal signal parametrised by its high tide level $T \in [0.95 \meters,3.70 \meters]$, (see Figure~\ref{fig:X}a);
	\item the surge signal is assumed to be described by a triangular model (see Figure~\ref{fig:X}a) using four parameters: the peak amplitude $S \in [0.65 \meters,2.50 \meters]$, the phase difference $t_0\in [-6,6]$ hours, between the surge peak and the high tide, the time duration of the raising part $t_- \in [-12.0,-0.5]$ hours, and the falling part $t_+ \in [0.5,12.0]$ hours. 
\end{itemize}

\begin{figure}
	\centering
	\includegraphics[width=0.85\linewidth]{./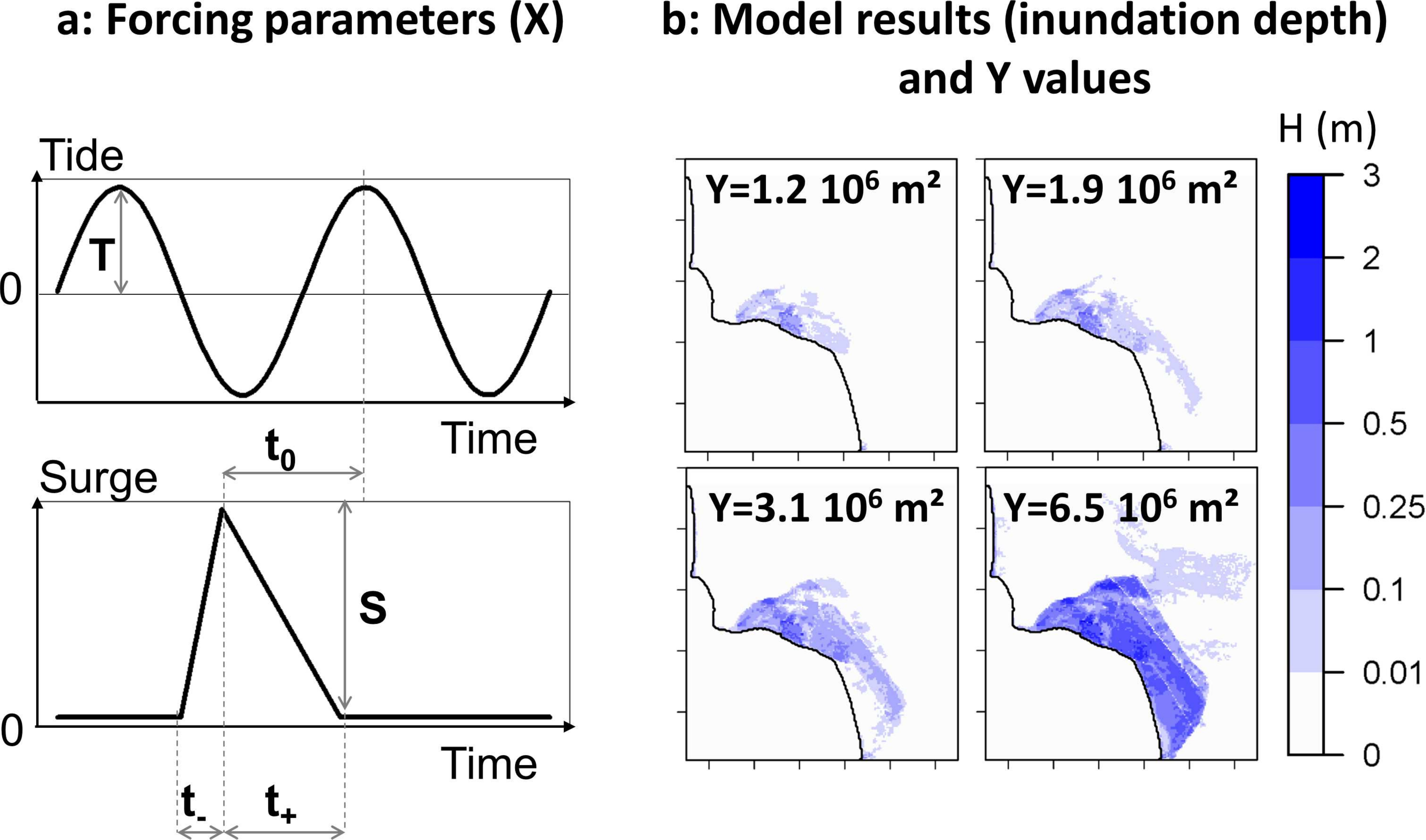}
	\caption{\label{fig:X}(a) Schematic representation of the tide and surge temporal signals and the different parameters describing them. (b) Maps of inland water height for given values of the $\x$ parameters, and deduced $Y$ value of flood surface (in square meter).}
\end{figure}

\subsubsection{The numerical model}
The numerical modelling of the coastal flood relies on the MARS model \citep{lazure08}. This finite-difference model solves the shallow-water equations and was originally designed to compute regional coastal hydrodynamics, e.g., tide and storm induced water level and currents. 
The MARS model has here been adapted to account for the specificities of local coastal flooding processes: hydraulic processes around connections like culverts and weirs, coastal defence breaching. This model has been implemented on the study site (white box in Figure~\ref{fig:LOC}) with a spatial resolution of $25 \meters$ and a total number of mesh cells of $>$39,000. The land cover effect on the flood is taken into account by using a spatially varying friction coefficient. The different hydraulic connections (e.g., the hydraulic culverts below the roads, dike, railway,…) are taken into account in the modeling. The forcing conditions (time series deduced from the parameters $\x$, Figure~\ref{fig:X}a) are uniform over the open boundaries of the domain in blue on Figure~\ref{fig:LOC}. A single model run takes about 30-60 minutes of computation using a single CPU. For more details on the study site, the model set-up and validation, see~\citet{Rohmer.etal2018}. It should be noted that when wave overtopping is dominant in the flooding processes, other types of models should be used \citep{leroy15}, with computation times 2 orders of magnitude larger. An overflow case allows setting up statistical developments which will useful also for the more expensive models. Figure~\ref{fig:X}b provides examples of the inundation depth ($H$) computed in each cell for given forcing conditions $\x$, as well as the resulting flood surface value ($Y$). We consider the threshold values $\threshold^{(Y)}_1=1.2\times 10^6 \meters^2, \ldots, \threshold^{(Y)}_4=6.5\times 10^6 \meters^2$ introduced in Figure~\ref{fig:X}b.

\subsubsection{Gaussian process model}
\label{subsec:model}

We consider a rescaled input space $D= [0,1]^5$ and the function $\check{f}: D \subset \R^5 \rightarrow \R_+$ with $Y = \check{f}(\x)$ for each $\x = (x_1, \ldots, x_5) = (\Tide, \Surge, \PhiVar, \tPlus,\tMinus) \in D$. In the remainder we keep the notation $(\Tide, \Surge, \PhiVar, \tPlus,\tMinus)$ for the rescaled input. 
We are interested in estimating 
\begin{equation}
\varGamma = \{ \x \in D : \check{f}(\x) \geq \threshold \}, \ \text{ where } \  \threshold= \threshold^{(Y)}_k,\ \text{ for } \ k=1, \ldots, 4.
\label{eq:Gamma5d}
\end{equation} 

We consider the square root transformed output data $f(\x) = \sqrt{\check{f}(\x)} = \sqrt{Y}$ for $\x \in D$ and the square root thresholds $\threshold_k = \sqrt{\threshold^{(Y)}_k}$. This transformation was chosen, after fitting the model on different scales for $Y$, because it provided the best cross-validation metrics. 

We fix an initial DoE $\doe_n \in D^n$, with $n=200$ points obtained by evaluating the function $f$ on the first $500$ points of the $5$-d Sobol' sequence and by selecting the first $n=200$ points leading to a flood of any magnitude. The evaluations $f(\doe_n)$ at $\doe_n$ are denoted with $\mathbf{y}_n \in \R^n$. 
We consider a GP model with a tensor product prior Mat\'ern covariance kernel $\nu=3/2$ and prior mean of the form
\begin{equation}
\mu(\x) = c_0+c_1 x_1 + c_2 x_2 + c_3 x_3^2 + c_4 x_4 + c_5 x_5 = c_0 + \sum_{j=1}^5c_j h_j(\x).
\label{eq:priorMean}
\end{equation}
The covariance kernel hyper-parameters are estimated with maximum likelihood from $(\doe_n, \mathbf{y}_n)$ and the posterior mean and covariance kernel are obtained with Equations~\eqref{eq:postMean}, \eqref{eq:postCov}.
The GP and the basis functions $h_j$ were selected using expert-based information achieving a 
$Q^2 = 0.958$. A comparison of different model fits is shown in Appendix~\ref{sec:full5dRes}.

We estimate $\varGamma$ from the posterior mean with 
\begin{equation*}
\hat{\varGamma}_{n,\threshold_k} = \{ \x \in D : \mean_n(\x) \geq \threshold_k\}, \ \text{ for }k=1, \ldots, 4.
\end{equation*}

\subsection{Procedure overview}
\begin{figure}[h!]
	\centering
	\includegraphics[width=0.85\linewidth]{./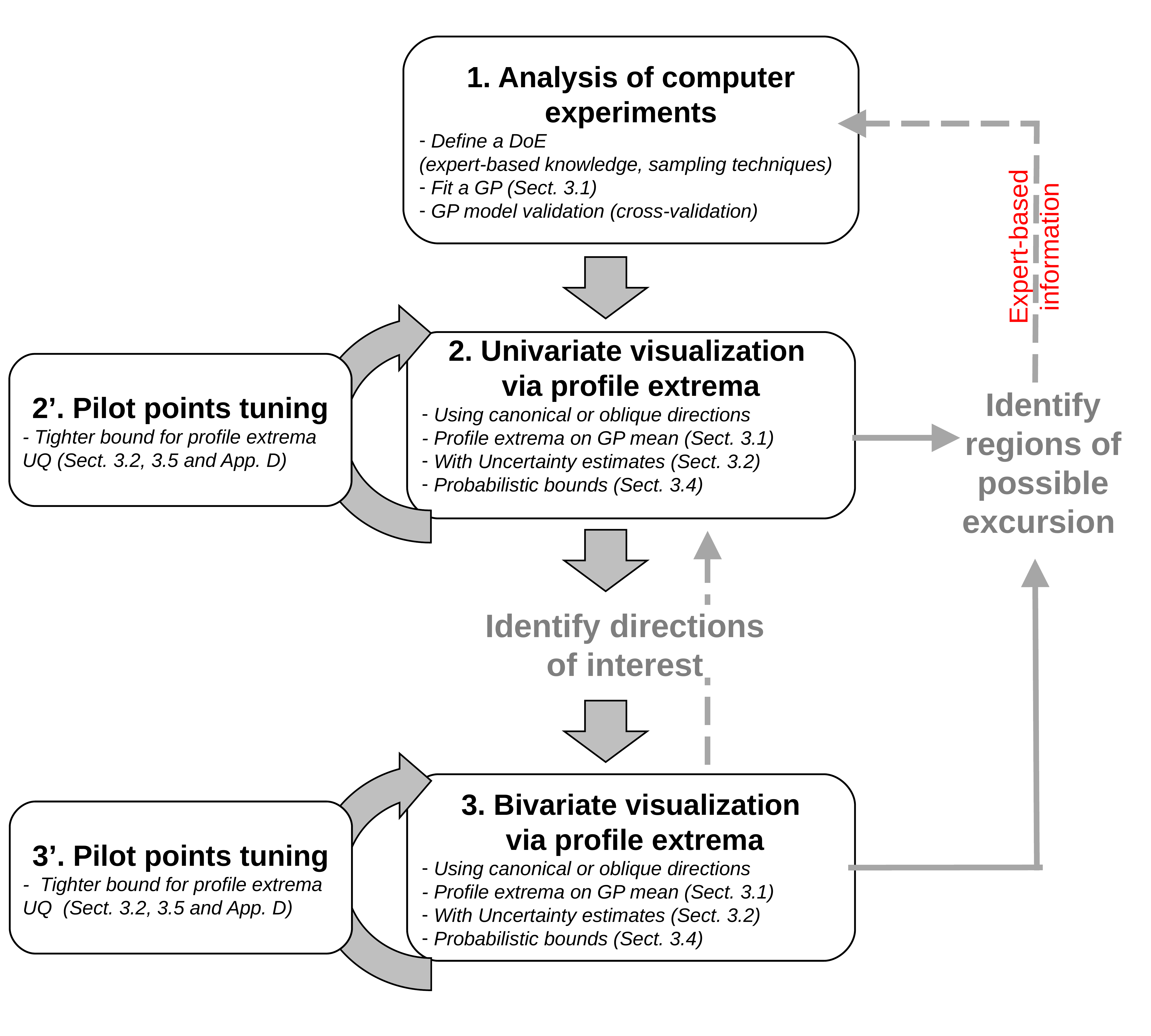}
	\caption{Flow chart for the procedure used in the test case.}
	\label{fig:flowChart}	
\end{figure}

In the following sections profile extrema are used to explore visually estimates for $\varGamma$ and to quantify their uncertainty. The proposed procedure is summarized in Figure~\ref{fig:flowChart}:
\begin{description}
	\item[step 1: design of experiment and emulation.] Select a DoE $\doe_n$ ($n=200$) and run the MARS model (numerical simulator) to compute the flooded area $Y_n$; Fit a GP (\ref{subsec:model}) on $\doe_n,Y_n$, evaluate the emulation quality with leave-one-out-cross validation.
	\item[step 2: univariate profile extrema.] Compute coordinate profile extrema (i.e.~along the canonical directions) on the GP mean and give a first visual indication on the excursion set. Uncertainty quantification on profile extrema (Section~\ref{sec:UQ}) guides expert knowledge in identifying possible regions of excursion. A comparison of several profile extrema plots with a different number of pilot points $\ell$ increases the understanding of profile functions uncertainty without additional numerical simulator runs. The conclusions drawn from 1d profile extrema  can be used to refine the DoE in step 1 and ultimately decrease the uncertainty on the excursion set. This first analysis, for example, shows that some offshore conditions do not influence the excursion. Oblique profile extrema could be used in this phase, see, e.g., section~\ref{subsec:analExample}, if more informative directions are known in advance.
	\item[step 3: bivariate profile extrema.] Explore combinations of input variables that lead to excursion with bivariate profile extrema functions. Orthogonal projections can potentially be used as shown in Section~\ref{subsec:analExample} on the analytical example. Similarly as for step 3, refinements of the DoE in the regions of interest can be performed.
\end{description}

\subsection{Results on coastal flooding test case}
\label{subsec:results}

\subsubsection{Univariate profile extrema}

We start the analysis of $\hat{\varGamma}_{n,\threshold_k}$, for the thresholds $\threshold_1, \ldots, \threshold_4$, with coordinate profile extrema on the posterior GP mean. The uncertainty is visualized with posterior quantiles of the profile extrema (Section~\ref{subsec:approxReals}) and with the upper and lower bound of equation~\eqref{eq:profBoundCor}. 

Figure~\ref{fig:ex5dUQfun_approx} shows the coordinate profile extrema for the posterior mean of the process based on the design with $n=200$ points described in Section~\ref{subsec:model}, with the universal kriging prior mean defined in Equation~\eqref{eq:priorMean} and the lowest and highest threshold $\threshold_1,\threshold_4$.

Let us consider $\threshold_4=2549.5 = \sqrt{6.5e6}$, the highest threshold in dark red. Coordinate profile extrema on the posterior mean tell us that if $\Tide<0.57$, there is no excursion independently of the other coordinates. The $90\%$ point-wise confidence interval is based on $\npostGPsim=600$ approximate posterior realizations generated with $\ell=300$ pilot points. The point-wise confidence intervals identify a possible ($90\%$) region of non-excursion $\{ \x \in D : \Tide \in [0, \lambda] \text{ with }\lambda \in [0.52,0.6] \}$. If we consider the variable $\Surge$, a possible region of non-excursion (above $\tau_4$) is $\{ \x \in D : \Surge \in [0,\lambda], \ \lambda \in [0.28,0.43] \}$. This region does not exist for low threshold values ($<\tau_2$), thus indicating that small values of surge peak only play a role in moderate flooding events, i.e.~$<\tau_2$. Similar assessments are available for the other coordinates and the other thresholds, see Table~\ref{tab:5dSummary} in Appendix~\ref{sec:full5dRes} for a summary. Note that the variables $\tPlus,\tMinus$ do not bring information on the excursion regions as $\Pinf_{\e_{4,5}}\gamma_n(\eta)$ and $\Psup_{\e_{4,5}}\gamma_n(\eta)$ are consistently below and above the thresholds $\threshold_1, \threshold_4$ respectively. The bounds on the approximating process from equation~\eqref{eq:profBoundCor}, plotted as wide light blue tube, show that there is still uncertainty on this assessment. In fact, by accounting also for the approximation uncertainty, the possible region of non-excursion becomes $\{ \x \in D: \Tide \in [0,0.19	] \}$.  The tightness of the bound is evaluated by looking the integrated variance of the difference, i.e. equation~\eqref{eq:integratedVarDiff}. We chose $\ell=300$ as the resulting integrated variance is small enough and at the same time the method is not computationally too expensive. For example the average integrated variance over all dimensions is $41.6 \%$ and $18.3 \%$ smaller than with $\ell=37$ for $\Pinf$ and $\Psup$ respectively. On the other hand, the computational time for $(\sigma_T^{\widetilde{\Delta}})$ grew from $302$ seconds ($\ell=37$) to $1595$ seconds for $\ell=300$. More details in Appendix~\ref{sec:full5dRes}.

\begin{figure}[h!]
		\centering
		\includegraphics[width=0.85\linewidth]{./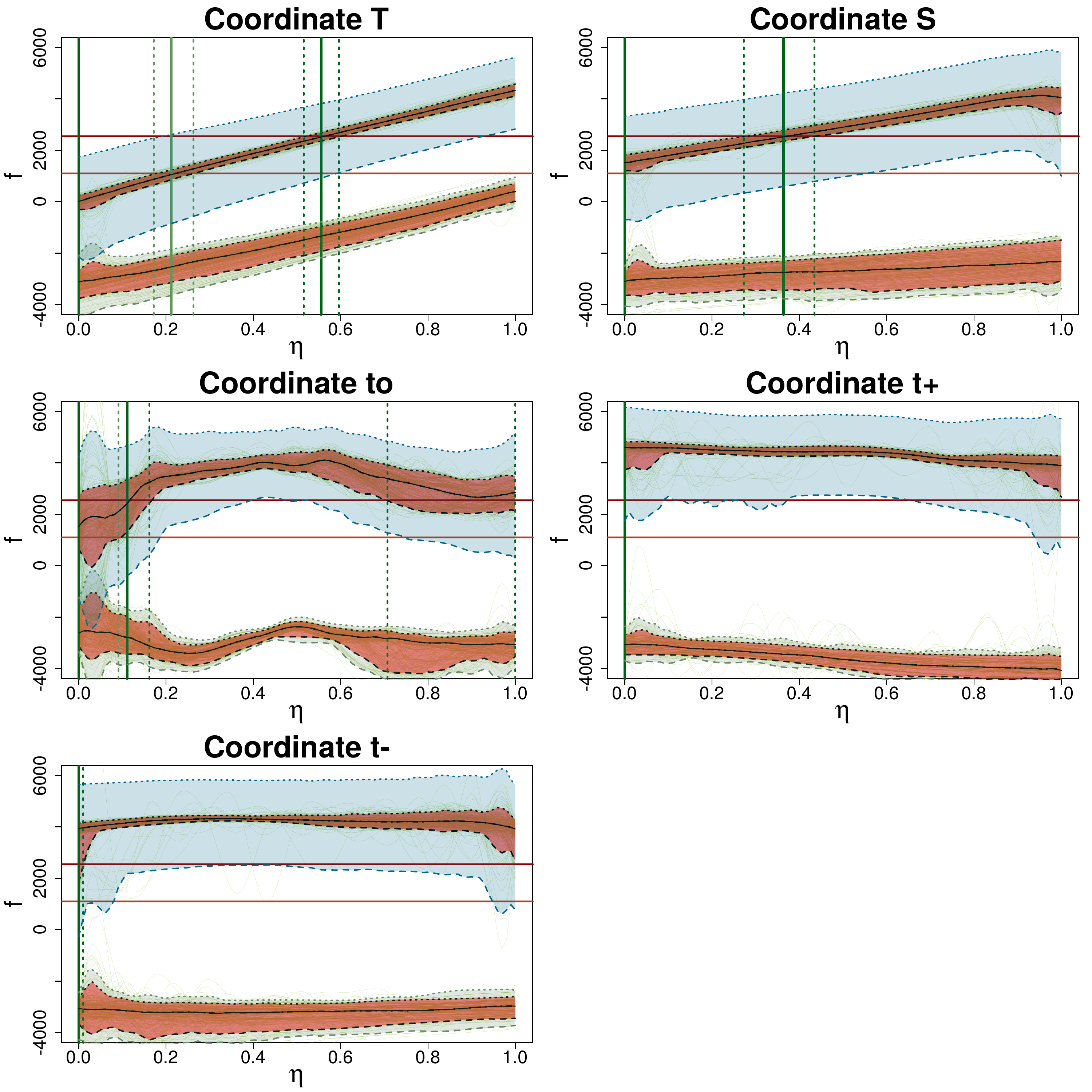}
		\caption{Coordinate profiles for $5$-dimensional test case. Excursion thresholds (red, light $\threshold_1$, dark $\threshold_4$), breakpoints $\Psup_i/\Pinf_i$ on mean (vertical green lines) and on $95\% - 5\%$ quantiles (vertical green lines, dashed), CI dark red tube, bound range light blue tube.}
		\label{fig:ex5dUQfun_approx}	
\end{figure}

\subsubsection{Bivariate profile extrema}

We now focus on the combination of variables $\Tide, \Surge$ and $\Tide, \PhiVar$, whose more prominent role on the excursion was outlined by the coordinate profile extrema, and we consider $\threshold=\tau_1$. We explore which values of these combinations lead to excursion with bivariate profile extrema functions. For example, the profile extrema for $\Tide, \Surge$ is obtained with 
\begin{equation*}
\Psi = \begin{bmatrix}
1 & 0 & 0 & 0 & 0 \\
0 & 1 & 0 & 0 & 0
\end{bmatrix}^T.
\end{equation*}
We compute the empirical quantile maps and the bound with $\ell=300$ and $\npostGPsim=420$ approximate posterior realizations, see Section~\ref{subsec:approxReals}. Figures~\ref{fig:ex5dbivProf} (a,b,c) show the contour lines of $\Psup_\Psi$ and $\Pinf_\Psi$ for the posterior GP mean based on the DoE described in Section~\ref{subsec:model}. The background colors indicate different heuristic measures of uncertainty: (a) the weighted inter-quantile range (i.e.~the empirical inter-quantile range for profile extrema maps if the threshold is between the upper and lower quantile or zero, otherwise); (b) the upper-lower bound range (i.e.~the difference between the upper and lower bound in~\eqref{eq:profBoundCor} if $\threshold$ is between them, zero otherwise); (c) the standard deviation of the difference $(\sigma^{\widetilde{\Delta}}_T) (\eta)$, $\eta \in E_\Psi$.

\begin{figure}[!]
	\begin{minipage}{0.02\textwidth}
		\centering 
		(a)
	\end{minipage}
	\begin{minipage}{\textwidth}
		\centering
		\includegraphics[width=0.85\linewidth]{./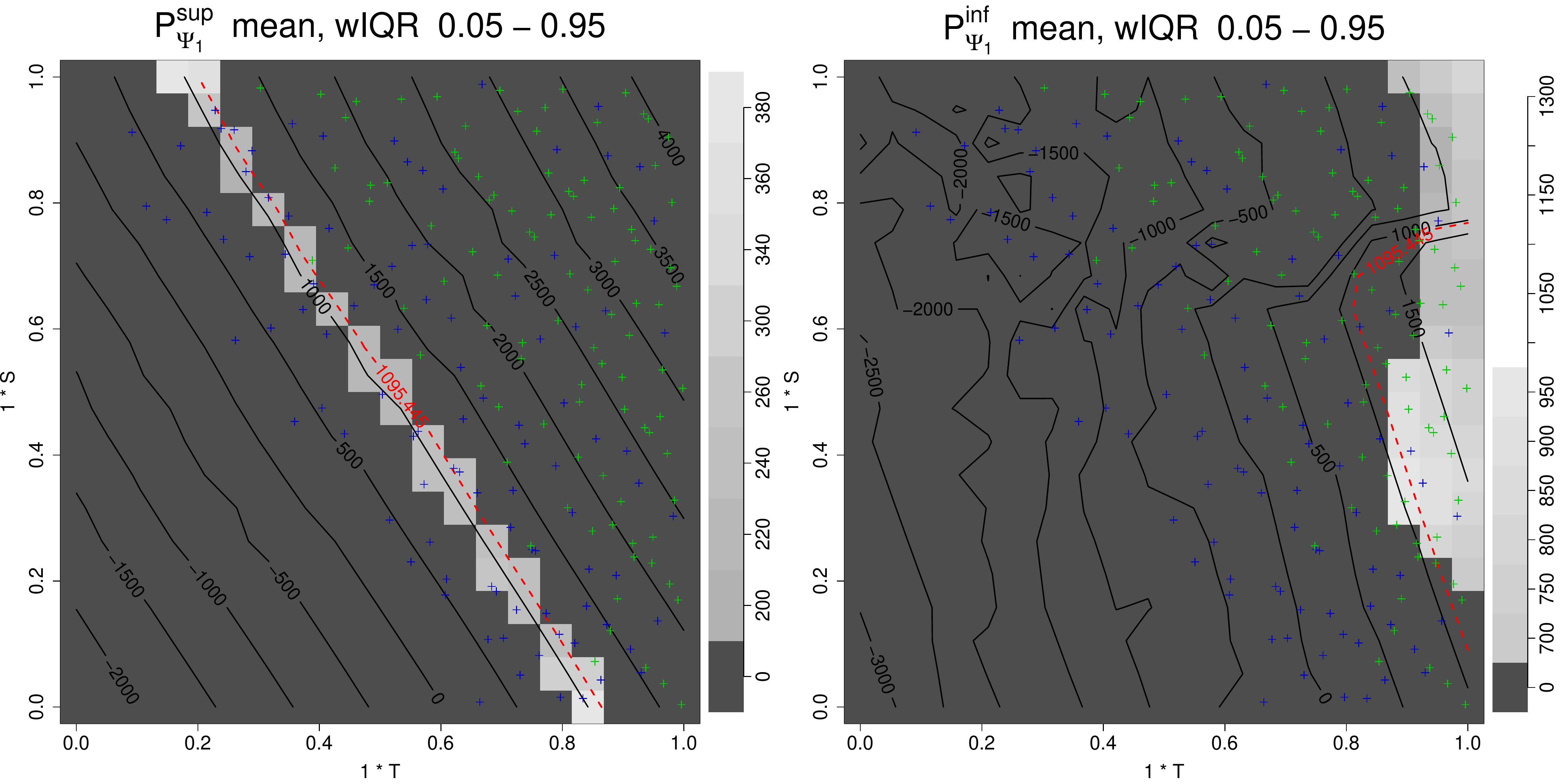}
	\end{minipage} \\
	\begin{minipage}{0.02\textwidth}
		\centering 
		(b)
	\end{minipage}
	\begin{minipage}{\textwidth}
		\centering
		\includegraphics[width=0.85\linewidth]{./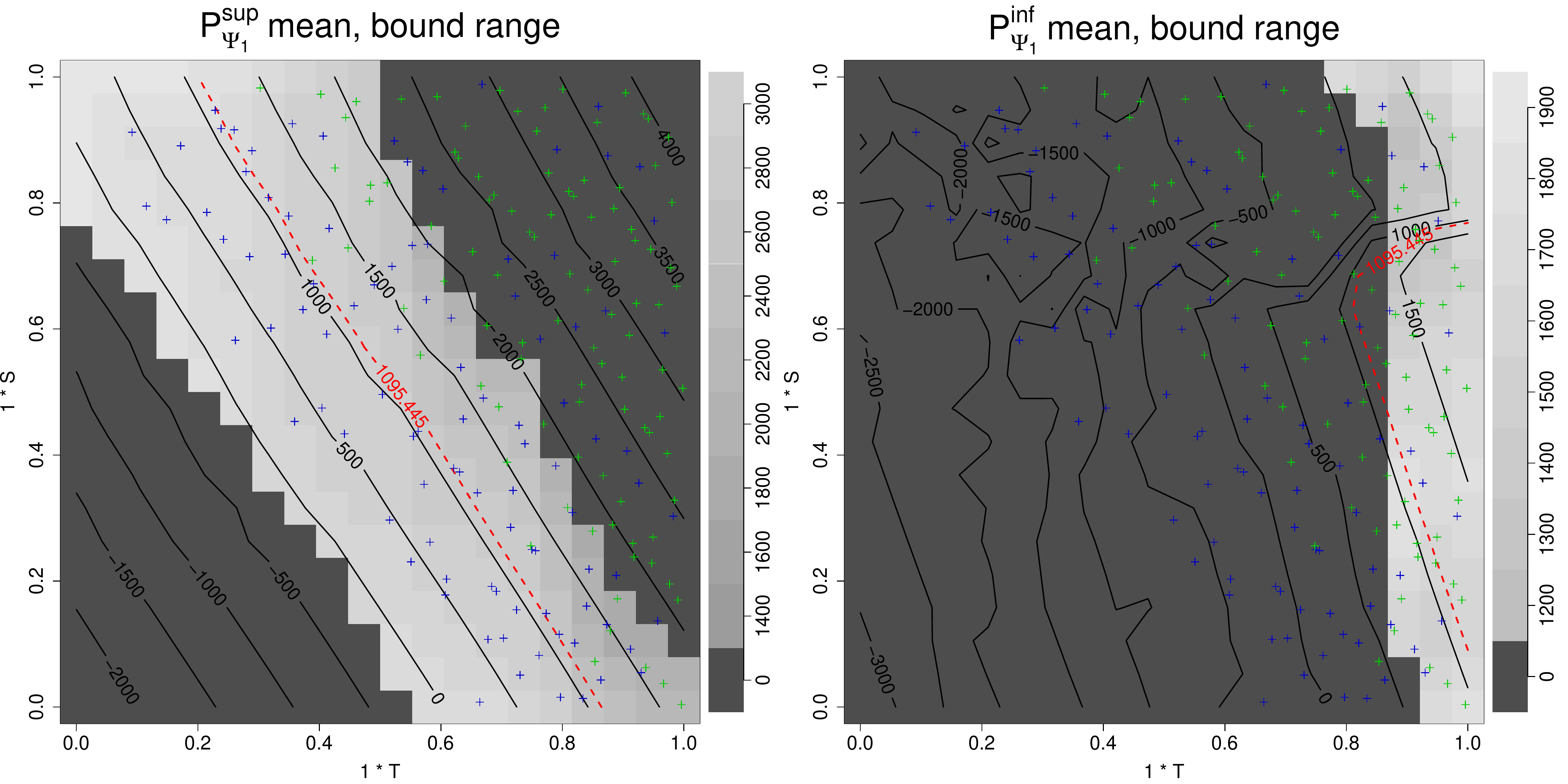}
	\end{minipage} \\
	\begin{minipage}{0.02\textwidth}
		\centering 
		(c)
	\end{minipage}
	\begin{minipage}{\textwidth}
		\centering 
		\includegraphics[width=0.85\linewidth]{./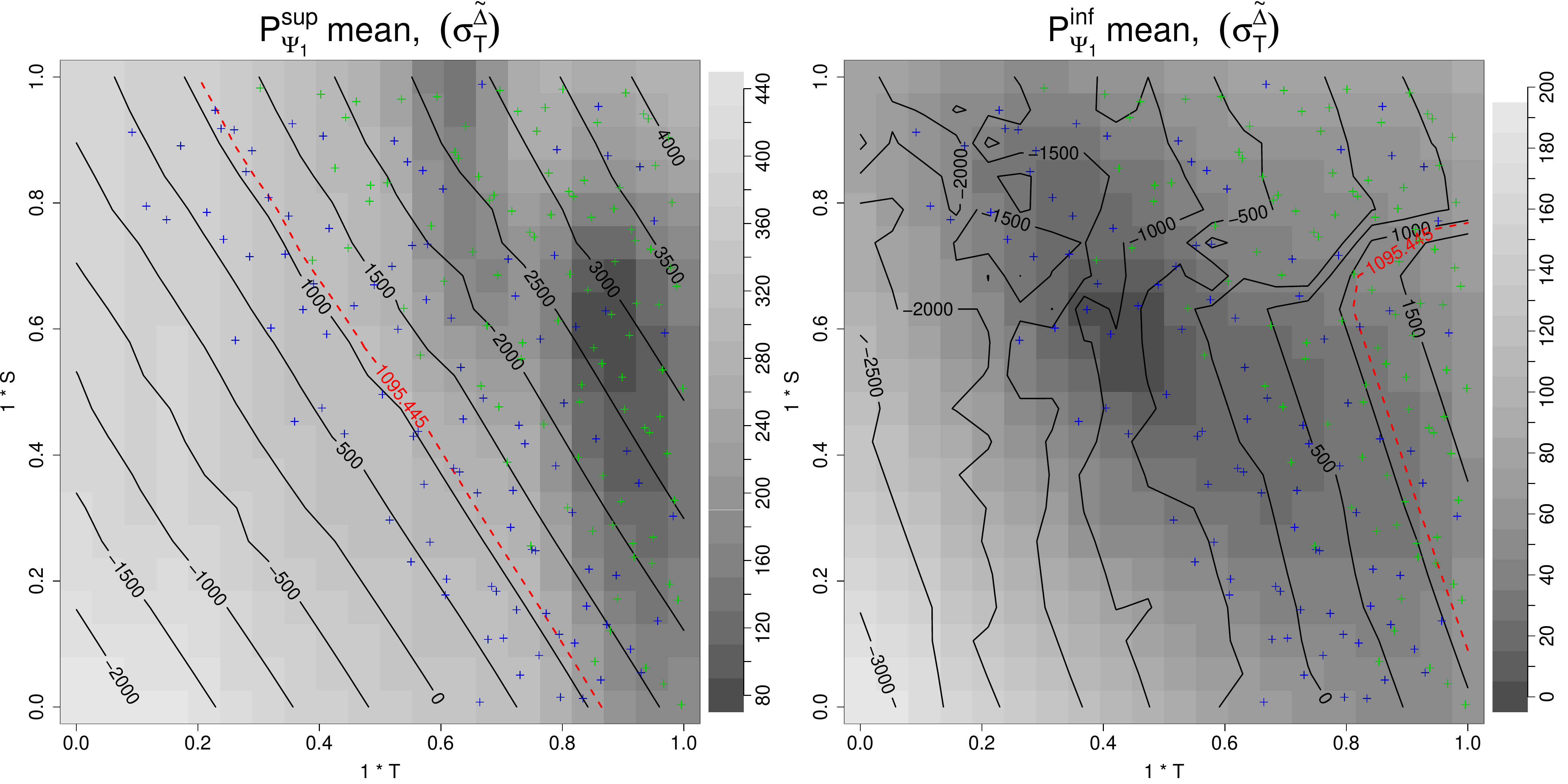}
	\end{minipage}
	\caption{Bivariate ($\Tide,\Surge$) profile extrema mean (contour lines). Threshold $\threshold_1$ dashed red line, $\doe_n$ locations (crosses, green above $\threshold_1$). Background: (a) weighted inter-quantile range ($95\% - 5\%$), (b) bound range (eq.~\eqref{eq:profBoundCor}), (c) approximation error standard deviation.}
	\label{fig:ex5dbivProf}	
\end{figure}

Let us start by analyzing $\Psup_\Psi,\Pinf_\Psi$ on the posterior mean, shown in the contour lines in Figures~\ref{fig:ex5dbivProf}.  The map on the left suggests that the region below the critical dashed red contour can be excluded, i.e.~the region is outside the excursion. In a symmetric way, $\Pinf_\Psi$ (Figure~\ref{fig:ex5dbivProf}, right) suggests that high values for $\Tide$ ($>0.8$) along with values for $\Surge$ in the range $[0.05,0.7]$ should lead to an excursion (i.e. flood). The bivariate profile extrema for $\Tide,\PhiVar$ in Figure~\ref{fig:ex5dbivProfTP}, instead, show (left) a non-flood region on the left of the critical contour and suggest (right)  excursion in the region $\Tide >0.95$,  $\PhiVar\in[0.42,0.68]$. These conclusions are consistent with our physical knowledge, except for the excursion domain on $\Tide,\Surge$ in Figure~\ref{fig:ex5dbivProf} (right). Indeed, an excursion domain bounded by maximum surge does not have a physical explanation because $Y$ should increase with $\Surge$, with no maximum bound of $\Surge$. 

The above analysis is based only on $\Psup$ and $\Pinf$ for the posterior GP mean and, since the DoE is small and non-adaptive, there could be high uncertainty in some parts of the input domain. The indicators plotted as background colors allow us to quantify this uncertainty. Both the weighted inter-quantile range and the bound range for $\Pinf_\Psi$ in Figure~\ref{fig:ex5dbivProf} (right) are high the region of large $\Tide$ and $\Surge$. Moreover, as shown in Figure~\ref{fig:ex5dbivProf}(c), the values of $(\sigma^{\widetilde{\Delta}}_T)$ are not high in that region indicating that the uncertainty due to the approximating process is not very high. Those insights collectively suggest that more function evaluations should be added in the $\Pinf_\Psi$ uncertain region.

\subsubsection{Summary of results}

Coordinate profile extrema functions on the coastal flooding test case enabled: (i) to highlight the major role of the high tide level, $\Tide$, whatever the considered thresholds, i.e., for small to large flooding events; (ii) to highlight the key role of the surge peak, $\Surge$, only for moderate thresholds i.e., moderate flooding events; (iii) to highlight the moderate role of the phase difference $\PhiVar$ alone; (iv) to exclude a strong influence of $\tMinus$ and $\tPlus$ alone for the excursion of the response whatever the considered thresholds. Moreover by studying bivariate profile extrema we could: (i) exclude regions where $\Tide$ and $\Surge$ are simultaneously small (e.g. $\Tide<0.4$ and $\Surge <0.6$); (ii) highlight the role of phase difference $\PhiVar$ and tide $\Tide$ combined with a possible excursion in the region $\Tide > 0.95$, $\Surge \in [0.42,0.68]$. 
Furthermore, the comparison of approximation error indicators (Figure~\ref{fig:ex5dbivProf}(c)) with uncertainty measures on profile extrema (Figure~\ref{fig:ex5dbivProf}(b)) enabled us to nuance the results and to track the main uncertainty source. In the coastal flooding test case, the uncertainty unlikely stems from the approximating process, but rather from the lack of function evaluations.

\section{Discussion}
\label{sec:discussion}

In this work we introduced a visualization technique, based on profile extrema functions, for excursion regions of expensive-to-evaluate functions. The main idea is to study the constrained optima of the functions on lower dimensional subspaces resulting from projections on lines or planes. By plotting profile extrema functions we can select regions of interest.  If the function is expensive to evaluate, a GP model is used to emulate the function and we showed how profile extrema can be computed on the GP model. In this case the conclusions strongly depend on the chosen GP model and, as we show in Appendix~\ref{sec:full5dRes}, the profile extrema uncertainty is affected by the modeling choices.  

In the coastal flooding test case, as sketched in Figure~\ref{fig:flowChart}, we selected directions of interest from coordinate profiles. For example, the profile sup along the direction of coordinate~$\Tide$ is below the threshold of interest for some values. This indicates that~$\Tide$ is a direction of interest for the excursion phenomena. This procedure was repeated for each coordinate. In test cases where canonical directions are not meaningful we could explore the excursion by looking at oblique directions as we did in the analytical example in Section~3.3.  

The bivariate profile extrema maps in Figure~\ref{fig:ex5dbivProf} also suggest a principled way to develop adaptive design of experiments that sequentially reduce uncertainties. For example, we could select the next evaluations as the minimizers of the integrated (or maximum of) $(\sigma^{\widetilde{\Delta}}_T)$. Such criteria should be analytically tractable and could lead to adaptive designs similar to classic IMSE (MSE) strategies,~\citet{Sacks.etal1989}. Alternative criteria could be obtained by minimizing bound-range uncertainties (Figure~\ref{fig:ex5dbivProf}(b)), as they provide direct information on the excursion, however their tractability is still unclear. 

Profile extrema functions require a continuously differentiable function on a compact domain. If the domain is not compact, then the optimization in the definition of profile extrema might not be well posed. In case of an open domain with a probability distribution on the inputs, profile extrema functions could be extended by using quantiles in place of the maximum \citep{RoyNotz2014}. Profile extrema functions could be extended to non-linear subsets, however it is technically not straightforward and it might result in visualizations that are much harder to interpret.
The overall approach developed here is a one-step procedure, and it could become part of an exploratory work flow. 
As shown in Appendix~\ref{sec:bivEx3d}, coordinate profile extrema, oblique and bivariate profiles can be combined to convey more information on the excursion set in simpler terms. A possible future extension could involve a treed procedure where the input space is restricted with constrained coordinate profile functions. Oblique coordinate profiles, i.e. profiles along non-canonical directions, require the user to choose which directions to explore. This choice could be driven by expert knowledge, such as in the motivating test case presented here. However, when such knowledge is not available, we could envisage a procedure following similar steps to projection pursuit~\citep{Cook_etal1995} where we obtain the most informative direction.

%\bigskip
%\begin{center}
%	{\large\bf SUPPLEMENTARY MATERIAL}
%\end{center}

%\begin{description}
	
%	\item[Supplementary\_description :] Note describing the git repository. (pdf)
	
%	\item[Profile extrema supplemental:] Git repository containing R code to reproduce experiments in the paper. (git repository)

%\end{description}

\bibliographystyle{apalike}
\bibliography{biblio}

\appendix

\section{Proofs}
\label{sec:proofs}

\begin{proof}[Theorem~\ref{theo:bound}]
Consider the GP regression set-up as described in Section~\ref{subsec:model}. For ease of notation let us denote the posterior process as $(Z_\x)_{\x \in T}\sim GP(\mean,\covkern)$, where we drop $n$, the number of observations as it is fixed in this section. Recall that the proposed approximate field $(\widetilde{Z}_\x)_{\x \in T}$ is defined as follows
\begin{equation*}
\widetilde{Z}_\x = a(\x) + \mathbf{b}^T(\x)Z_G \qquad \x \in T,
\end{equation*}  
where $a: T \rightarrow \R$ is a continuous trend function, $\mathbf{b}: T \rightarrow \R^\ell$ is a continuous vector-valued function of deterministic weights, $G = (\g_1, \ldots, \g_\ell ) \in T^\ell$ is a fixed sequence of points in $T$ and $Z_G = (Z_{\g_1}, \ldots, Z_{\g_\ell})^T$ is a $\ell$-dimensional random vector. Let us consider the difference process $(\widetilde{\Delta}_\x)_{\x \in T}$, defined as $\widetilde{\Delta}_\x := Z_\x - \widetilde{Z}_\x$, for each $\x \in T$. The mean function and covariance kernel of $\widetilde{\Delta}$ are
\begin{align*}
\mean^{\widetilde{\Delta}}(\x) &= \E[Z_\x - \widetilde{Z}_\x] = \mean(\x) - a(\x) -\mathbf{b}(\x)^T \mean(G) \qquad \x \in T \\
\covkern^{\widetilde{\Delta}}(\x,\y) &= \covkern(\x,\y) - \covkern(\y,G)\mathbf{b}(\x) - \covkern(\x,G)\mathbf{b}(\y) + \mathbf{b}^T(\x)\covkern(G,G) \mathbf{b}(\y)
\qquad \x,\y \in T
\end{align*}

First of all notice that
\begin{equation}
Pr\left(\left\lvert \sup_{\x \in T}  Z_\x - \sup_{\x \in T} \widetilde{Z}_\x \right\rvert >u \right) \leq Pr\left(\sup_{\x \in T} \lvert \widetilde{\Delta}_\x \rvert >u \right)
\label{eq:boundIneq1}
\end{equation}

Let us now consider the centred process $(\widetilde{\Delta}^C_\x)_{\x \in T} := (\widetilde{\Delta}_\x - \mean^{\widetilde{\Delta}}(\x))_{\x \in T}$. We have that
\begin{equation}
Pr\left(\sup_{\x \in T} \lvert \widetilde{\Delta}^C_\x \rvert > u\right) \leq 2Pr\left(\sup_{\x \in T} \widetilde{\Delta}^C_\x  > u\right) \leq 2 e^{-u^2 /2(\sigma^{\widetilde{\Delta}}_T)^2}, \qquad u>0
\label{eq:boundIneq2}
\end{equation}
where $(\sigma^{\widetilde{\Delta}}_T)^2 = \sup_{\x \in T} \E[(\widetilde{\Delta}^C_\x)^2] = \sup_{\x \in T} \covkern^{\widetilde{\Delta}}(\x,\x)$. The first inequality follows from the symmetric distribution of the centered field $\widetilde{\Delta}^C_\x$ and the second is the Borell-TIS inequality, see, e.g.,~\citet{Adler.Taylor2007}, Chapter~2 for more detail.

Since we have 
\begin{equation}
\sup_{\x \in T} \lvert \widetilde{\Delta}_\x \rvert \leq \sup_{\x \in T} \lvert \widetilde{\Delta}^C_\x \rvert +\sup_{\x \in T} \lvert \mean^{\widetilde{\Delta}}(\x) \rvert 
\label{eq:boundIneq3}
\end{equation}
then following Equations~\eqref{eq:boundIneq1},~\eqref{eq:boundIneq2},~\eqref{eq:boundIneq3}, if $u> \mean^{\widetilde{\Delta}}(\x)$
\begin{align*}
Pr\left(\left\lvert \sup_{\x \in T}  Z_\x - \sup_{\x \in T} \widetilde{Z}_\x \right\rvert >u \right) 
&\leq Pr\left( \sup_{\x \in T} \lvert \widetilde{\Delta}^C_\x \rvert +\sup_{\x \in T} \lvert \mean^{\widetilde{\Delta}}(\x) \rvert >u \right) \\
&\leq Pr\left( \sup_{\x \in T} \lvert \widetilde{\Delta}^C_\x \rvert > u - \sup_{\x \in T} \lvert \mean^{\widetilde{\Delta}}(\x) \rvert \right) \\
&\leq 2 \exp\left(-\frac{\left(u- \sup_{\x \in T} \lvert \mean^{\widetilde{\Delta}}(\x) \rvert\right)^2}{2(\sigma^{\widetilde{\Delta}}_T)^2} \right).
\end{align*}
If $\widetilde{Z}$ is an unbiased approximation for $Z$, then $\mean^{\widetilde{\Delta}}(\x) =0$ and the inequality is valid for any $u>0$. 
\end{proof}

	\begin{proof}[Proof of Corollary~\ref{cor:bounds}]
		Let us denote with $A = \Psup_{\Psi} Z(\eta^*)$ and $B = \Psup_{\Psi} \widetilde{Z}(\eta^*)$. By using Theorem~\ref{theo:bound} we have that for $u>0$ 
		\begin{equation*}
		Pr(\lvert A- B \rvert \geq u) \leq 2 e^{- \frac{(u - \mean_T^{\widetilde{\Delta}})^2}{2 (\sigma_T^{\widetilde{\Delta}})^2 }}
		\end{equation*}
		Moreover
		\begin{align*}
		Pr(A \leq u^+_\alpha ) &= 1- Pr(A \geq u^+_\alpha) = 1 - Pr(A-B+B \geq u^+_\alpha) \\
		&= 1- Pr(A-B+B \geq u^+_\alpha \mid B \geq \tilde{u}^+_\beta) Pr (B \geq \tilde{u}^+_\beta) \\
		&- Pr(A-B+B \geq u^+_\alpha, B \leq \tilde{u}^+_\beta) \\
		&\geq 1-\beta - Pr(A-B+\tilde{u}^+_\beta \geq u^+_\alpha, B \leq \tilde{u}^+_\beta) \\
		&\geq 1-\beta - Pr(\lvert A-B \rvert \geq u^+_\alpha -\tilde{u}^+_\beta, B \leq \tilde{u}^+_\beta) \\
		&\geq 1- \beta - 2e^{- \tfrac{(u^+_\alpha - \tilde{u}^+_\beta - \mean_T^{\widetilde{\Delta}})^2}{2 (\sigma_T^{\widetilde{\Delta}})^2 }}
		\end{align*}
		By solving for $u^+_\alpha$ in  $\beta + 2e^{- \tfrac{(u^+_\alpha - \tilde{u}^+_\beta - \mean_T^{\widetilde{\Delta}})^2}{2 (\sigma_T^{\widetilde{\Delta}})^2 }} = \alpha$ under the constraint $u^+_\alpha > \tilde{u}^+_\beta$, we obtain~\eqref{eq:profBoundCor}. For the other side notice that
		\begin{align*}
		Pr(A \geq u^-_\alpha ) &\geq Pr(B\geq \tilde{u}^-_\beta, \lvert A - B \rvert  \leq \tilde{u}^-_\beta - u^-_\alpha) \quad = 1- Pr(B > \tilde{u}^-_\beta, \lvert A-B \rvert \geq \tilde{u}^-_\beta - u^-_\alpha) \\
		&- Pr(B \leq \tilde{u}^-_\beta, \lvert A-B \rvert \geq \tilde{u}^-_\beta - u^-_\alpha) - Pr(B \leq \tilde{u}^-_\beta, \lvert A-B \rvert \leq \tilde{u}^-_\beta - u^-_\alpha) \\
		&\geq 1 - Pr(\lvert A-B \rvert \geq \tilde{u}^-_\beta - u^-_\alpha) - \beta - \beta \quad \geq 1- 2\beta - 2e^{- \tfrac{(u^-_\alpha - \tilde{u}^-_\beta)^2}{2 (\sigma_T^{\widetilde{\Delta}})^2 }}
		\end{align*}
		By solving for $u^-_\alpha$ in  $2\beta + 2e^{- \tfrac{(u^-_\alpha - \tilde{u}^-_\beta )^2}{2 (\sigma_T^{\widetilde{\Delta}})^2 }} = \alpha$ under the constraint $u^-_\alpha < \tilde{u}^-_\beta$, we obtain~\eqref{eq:profBoundCor}.
	\end{proof}

\section{Gradient of $\widetilde{Z}_\x$ with respect to $\x$}
\label{sec:gradTildeZ}

We are interested in the approximating process for the posterior distribution of $Z$ conditioned on $(\doe_n,\mathbf{y}_n)$. If $a,\mathbf{b}$ are chosen as the posterior mean and the kriging weights respectively, then we can write $\widetilde{Z}$ as
\begin{align*}
\widetilde{Z}_\x &= \Lambda(\x)^T \begin{bmatrix}
\mathbf{y}_n \\ Z_G 
\end{bmatrix} \qquad \text{with} \\
\Lambda(\x) &= K_{n+\ell}^{-1} \left( \covkern_{n+\ell}(\x) + {H}_{n+\ell} K_{H} \left( \mathbf{h}(\x) - {H}_{n+\ell}^T K_{n+\ell}^ {-1}\covkern_{n+\ell}(\x) \right) \right) \\
\text{where } \quad  K_{H} &= \left( {H}_{n+\ell}^T K_{n+\ell}^ {-1} {H}_{n+\ell}  \right)^{-1}, \quad  K_{n+\ell} = \covkern(A_{n+\ell},A_{n+\ell}), \qquad  \covkern_{n+\ell}(\x) = \covkern(x,A_{n +\ell}), \\ 
{H}_{n+\ell} &= [h_j(A_{n+\ell})]_{j=1, \ldots, m} \in \R^{(n+\ell) \times m} \quad 
\text{and } \quad A_{n+\ell} = [\doe_n^T, G^T]^T \in \R^{(n+\ell)\times d}
\end{align*}

Then $\nabla_x \widetilde{Z}_\x = \nabla_x \Lambda(\x)^T\begin{bmatrix}
\mathbf{y}_n \\ Z_G .
\end{bmatrix}$ and it suffices to compute the gradient of $\Lambda$ and 
\begin{gather*}
K_{n+\ell} \nabla_x\Lambda(\x) = \left( \nabla_x\covkern(\x,A_{n+\ell}) + {H}_{n+\ell}  K_{H} \left( \nabla_x\mathbf{h}(\x) - {H}_{n+\ell}^T K_{n+\ell}^ {-1}\nabla_x\covkern(\x,A_{n+\ell}) \right) \right) \\
\text{with } \nabla_x\covkern(\x,A_{n+\ell}) = [\nabla_x\covkern(\x, a_1), \ldots, \nabla_x\covkern(\x, a_{n+\ell})]^T \in \R^{(n+\ell) \times d} \\ \text{ and } \nabla_x\mathbf{h}(\x) = [\nabla_x h_1(\x), \ldots, h_m(\x)] \in \R^{m\times d}.
\end{gather*}

\section{Full results on flooding test case}
\label{sec:full5dRes}

In this section, we report more details on the GP model used in the flooding test case  in Section~\ref{sec:motivating} and we show the profile extrema functions for all thresholds $\threshold_1,\ldots, \threshold_4$. 

In table~\ref{tab:5dmodels} we compare different GP models according to two metrics: $Q^2$ on leave-one-out predictions and log likelihood ($logLik$). According to those metrics the prior mean function in equation~\ref{eq:priorMean} (universal kriging, UK) results in better fits than a constant prior mean (ordinary kriging, OK). On the other hand the difference between using a smoothness parameter $\nu=3/2$ or $\nu=5/2$ is very small. Here we chose the parameter $\nu=3/2$ as it leads to standardized model residuals that have a distribution closer to the normal one.    

\begin{table}[!htbp] \centering 
	\caption{Comparison of different GP models for flooding test case. OK: constant mean; UK: mean function as in equation~\ref{eq:priorMean}. Best values in bold.} 
	\label{tab:5dmodels} 
	\begin{tabular}{@{\extracolsep{5pt}} ccccc} 
		\\[-1.8ex]\hline 
		\hline %\\[-1.8ex] 
		& Matern 3/2, OK & Matern 3/2, UK & Matern 5/2, OK & Matern 5/2, UK \\ 
		\hline %\\[-1.8ex] 
%		$RMSE$ & $199.20$ & $179.60$ & $205.21$ & $\textbf{173.55}$ \\ 
		$Q^2$ & $0.95$ & $\mathbf{0.96}$ & $0.94$ & $\mathbf{0.96}$ \\ 
%		$NLP$ & $2,168.31$ & $2,125.13$ & $2,172.86$ & $\textbf{2,112.48}$ \\ 
		$logLik$ & $$-$1,406.02$ & $$-$\textbf{1,363.73}$ & $$-$1,410.71$ & $$-$1,365.32$ \\ 
		\hline% \\[-1.8ex] 
	\end{tabular} 
\end{table}   

If the model fit is worse, then we obtain coordinate profiles with larger confidence bands and thus more uncertainty. We checked this assumption by computing coordinate profile extrema on each of the models in Table~\ref{tab:5dmodels}. Figures~\ref{fig:ex5dcoord_mat32OK} and~\ref{fig:ex5dcoord_mat52OK} show the first two coordinate profile plots for Mat\'ern $3/2$, OK and Mat\'ern $5/2$, OK. A quick glance already shows that the confidence bands are larger than in Figure~\ref{fig:ex5dUQfun_approx}. We further compared the integrated inter-quantile range for each model and we ranked them in increasing order. The average ranks over all coordinates for profile sup and profile inf are shown in Table~\ref{tab:5diqrModels}. Note how the chosen model (Mat\'ern $3/2$ model with UK prior mean) has an average rank $1.8$ for $\Psup f$ and $1.4$ for $\Pinf f$.

%   1       1   2.4
%   1       2   3.8
%   2       1   1.8
%   2       2   1.4
%   3       1   3  
%   3       2   2.8
%   4       1   2.8
%   4       2   2 
\begin{table}[!htbp] \centering 
	\caption{Average rank of integrated inter-quantile range for coordinate profile extrema.} 
	\label{tab:5diqrModels} 
	\begin{tabular}{@{\extracolsep{5pt}} ccccc} 
		\\[-1.8ex]\hline 
		\hline 
		& Matern 3/2, OK & Matern 3/2, UK & Matern 5/2, OK & Matern 5/2, UK \\ 
		\hline 
		$\Psup f$ & $2.4$ & $\mathbf{1.8}$ & $3$ & $2.8$ \\ 
		$\Pinf f$ & $3.8$ & $\textbf{1.4}$ & $2.8$ & $2$ \\ 
		\hline
	\end{tabular} 
\end{table}  

\begin{figure}
	\begin{minipage}{0.5\textwidth}
		\centering
		\includegraphics[width=\linewidth]{./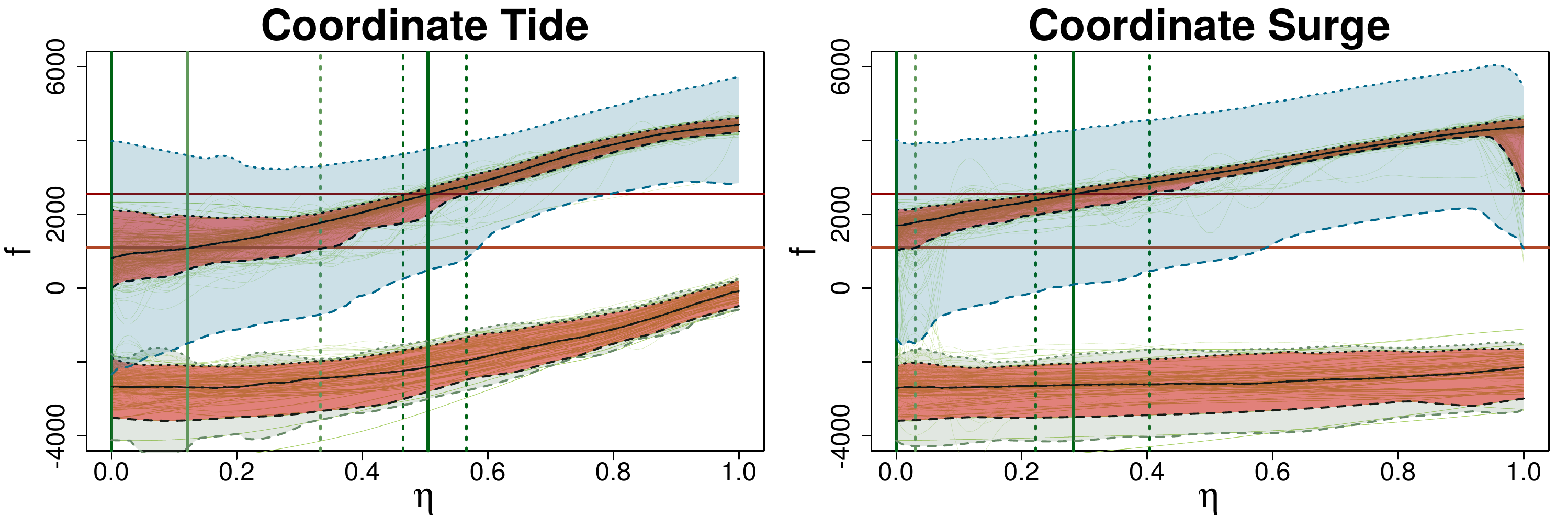}
		\caption{Coordinate profiles on GP model with Mat\'ern $\nu=3/2$ and constant mean.}
		\label{fig:ex5dcoord_mat32OK}	
	\end{minipage} \hfill\hspace{0.1cm}
	\begin{minipage}{0.5\textwidth}
		\centering
		\includegraphics[width=\linewidth]{./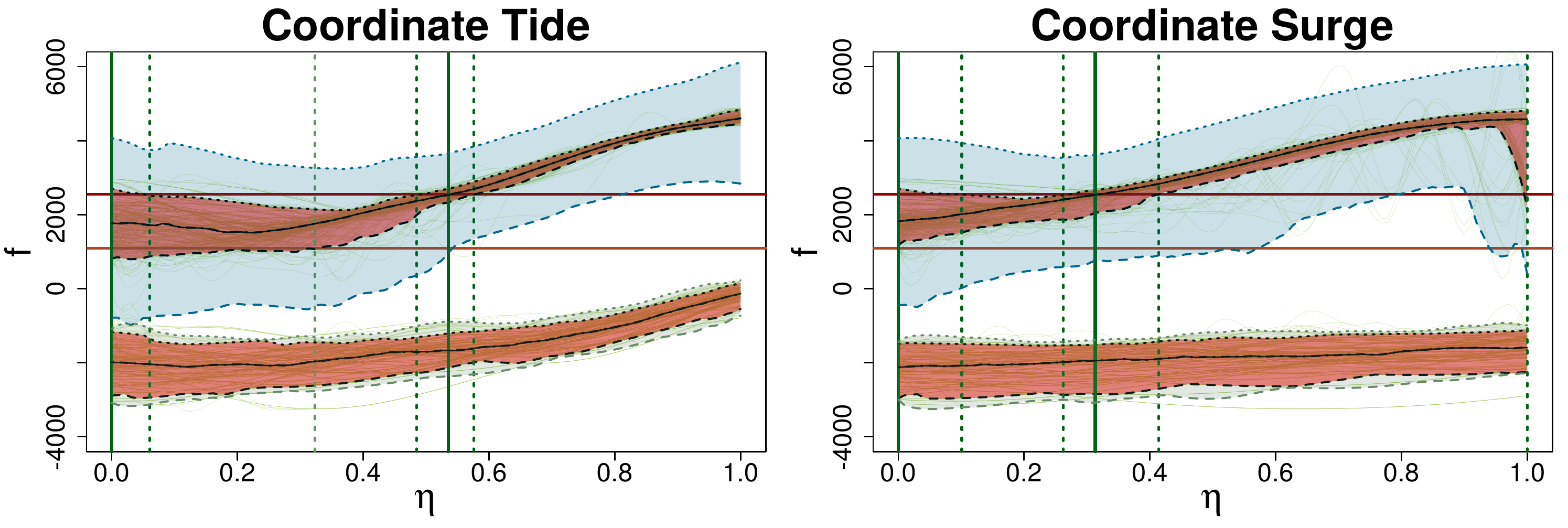}
		\caption{Coordinate profiles on GP model with Mat\'ern $\nu=5/2$ and constant mean.}
		\label{fig:ex5dcoord_mat52OK}
	\end{minipage}	
\end{figure}

%	m32_ok_nsq 	 m32_uk_nsq 	 m32_ok 	 m32_uk 	 m52_ok 	 m52_uk 
%Cloo 	1.170468 	 1.072769 	 1.352215 	 1.193178 	 1.377193 	 1.272758 
%rmse 	1213.249 	 1157.699 	 199.2003 	 179.6013 	 205.2074 	 173.5492 
%q2 	 0.9239426 	 0.930748 	 0.9477794 	 0.9575497 	 0.9445824 	 0.9603625 
%nlp 	5496.884 	 5486.681 	 2168.313 	 2125.126 	 2172.859 	 2112.48 
%logLik	-3053.063 	 -3039.618 	 -1406.019 	 -1363.732 	 -1410.71 	 -1365.323

Let us now focus on the chosen model: Mat\'ern $\nu=3/2$ with prior mean function as described in equation~\ref{eq:priorMean}. Table~\ref{tab:5dSummary} summarizes the intervals selected with the coordinate profile extrema functions. Figure~\ref{fig:ex5dUQfun_full} shows the coordinate profile extrema functions, the $5\%, 95\%$ quantiles, the boundary of the subsets selected (vertical lines) by posterior median and quantiles. The values of the bound with confidence level $\alpha=0.1$ are shown with the sky blue ($\Psup_i Z$) and sea green ($\Pinf_i Z$) shaded regions. The bound on the profile inf function is almost overlapping with the $90\%$ confidence intervals in red. The bound on the profile sup function instead  provides an higher quantile and identifies as region of possible non excursion $\Tide \in [0,0.19]$ above $\threshold_4$. For the other coordinates the bound does not provide information on regions of non excursion.

\begin{table}[h]
	\caption{\label{tab:5dSummary} Regions excluded with profile extrema functions on $5$-d test case. Interval defined in the top line, for each threshold the interval for the boundary computed from the $90\%$ approximate confidence intervals for the profiles is reported.}
	\begin{tabular}{ l  c  c c c c }
		\\[-1.8ex]\hline 
		\hline 
		& $\Tide \in [0,\lambda_{\Tide,k}]$	& $\Surge \in [0,\lambda_{\Surge,k}] $ & $\PhiVar \in [0,\lambda^a_{\PhiVar,k}] \cup [\lambda^b_{\PhiVar	,k},1]$ & $\tPlus$ & $\tMinus$   \\
		\hline 
		$\threshold_1$  & $\lambda_{\Tide,1} \in [0.17,0.26]$  & $-$ & $\lambda^a_{\PhiVar,1}=0, \lambda^b_{\PhiVar,1}=1$ & $-$ & $-$ \\
		$\threshold_2$ & $\lambda_{\Tide,2} \in [0.24,0.32]$  & $\lambda_{\Surge,2} \in [0,0.06]$ & $\lambda^a_{\PhiVar,2} \in [0,0.11], \lambda^b_{\PhiVar,2}=1$ & $-$ & $-$ \\
		$\threshold_3$ & $\lambda_{\Tide,3} \in [0.33,0.40]$  & $\lambda_{\Surge,2} \in [0,0.14]$ & $\lambda^a_{\PhiVar,3} \in [0,0.12], \lambda^b_{\PhiVar,3}=1$ & $-$ & $-$ \\
		$\threshold_4$ & $\lambda_{\Tide,4} \in [0.52,0.60]$  & $\lambda_{\Surge,2} \in [0.28,0.43]$ & $\lambda^a_{\PhiVar,4} \in [0,0.16],$ $\lambda^b_{\PhiVar,4} =1$  & $-$ & $-$ \\
		\hline
	\end{tabular}
\end{table}

\begin{figure}[h!]
	\centering
	\includegraphics[width=0.8\linewidth]{./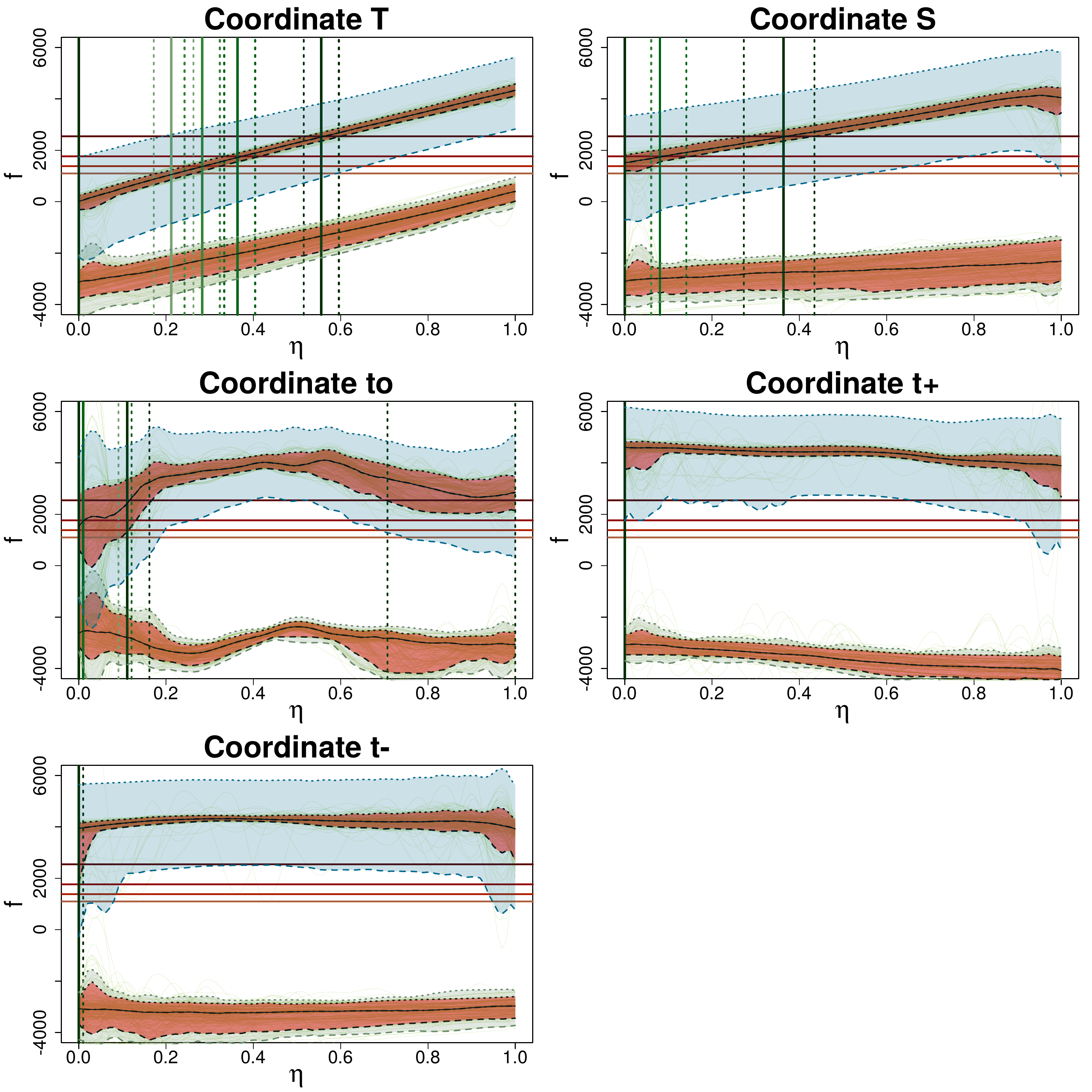}
	\caption{Coordinate profiles for $5$-dimensional test case. Thresholds (red, light to dark for $\threshold_1, \ldots, \threshold_4$), regions selected by profile extrema on mean (vertical green lines) and regions selected by profile extrema on $95\% - 5\%$ quantiles (vertical green lines, dashed). Bound on the confidence region ($\alpha=0.05$) for $\Psup_i Z$ (sky blue, shaded) and for $\Pinf_i Z$ (sea green, shaded).}
	\label{fig:ex5dUQfun_full}	
\end{figure}

Section~\ref{subsec:bound} introduced the integrated variance of the difference, equation~\eqref{eq:integratedVarDiff}, as an indicator for the bound tightness. Figure~\ref{fig:integVariance5d} shows the value $I(\sigma_T^{\widetilde{\Delta}})^2$ averaged over all dimensions for $\Psup$ and $\Pinf$ versus the number of pilot points $\ell$. Note the ``diminishing returns'' as $\ell$ increases. In particular, here we chose $\ell=300$ for computational reasons.

\begin{figure}[h!]
	\centering
	\includegraphics[width=0.5\linewidth]{./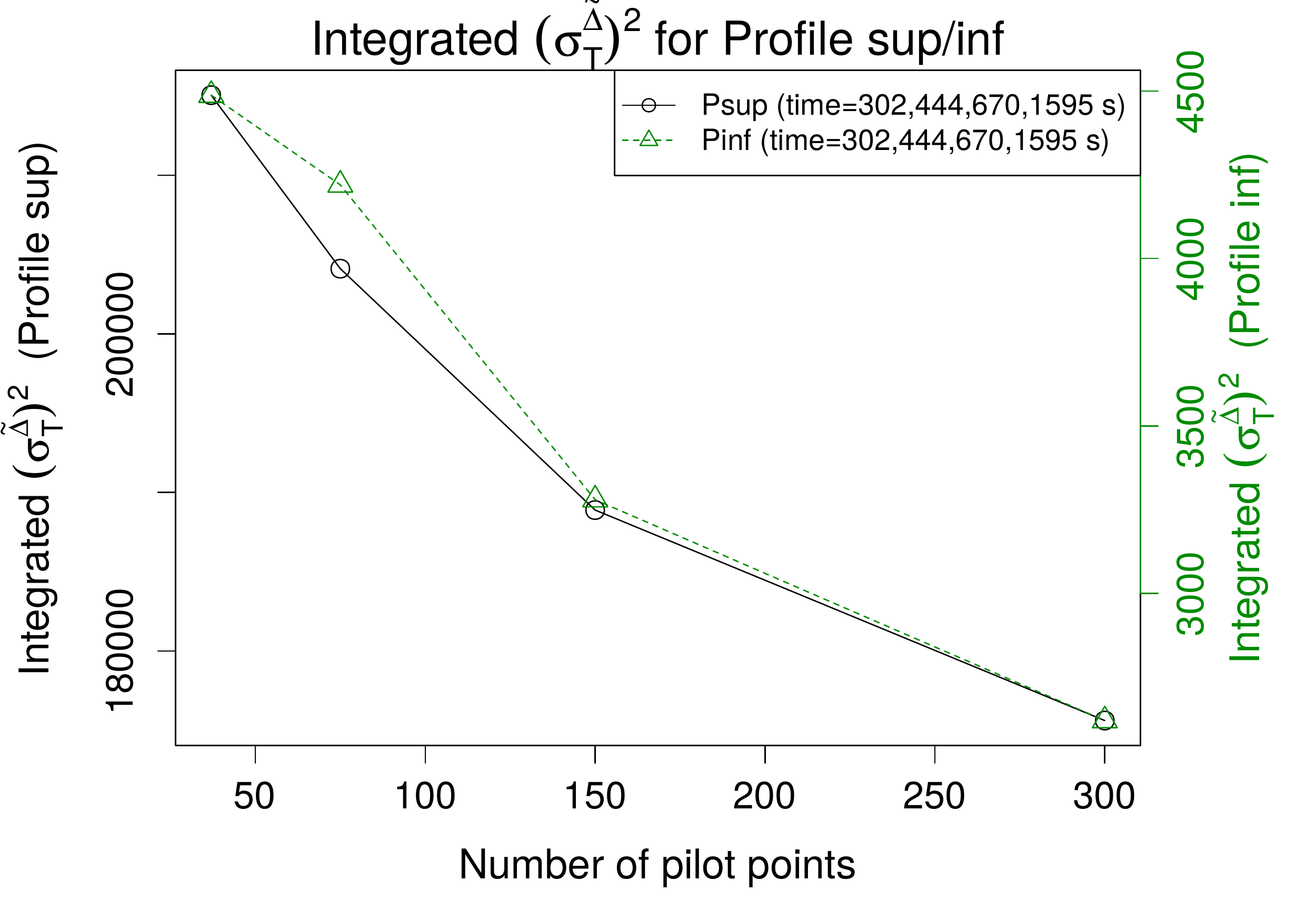}
	\caption{Average integrated variance of the difference, equation~\eqref{eq:integratedVarDiff}, for profile inf (dashed, triangles) and profile sup (solid, circles) versus $\ell$ $(37,75,150,300)$. Computational time for $(\sigma_T^{\tilde{\Delta}})^2$ in legend. Profile coordinates, 5-d test case.}
	\label{fig:integVariance5d}	
\end{figure}

\begin{figure}
	\begin{minipage}{0.32\textwidth}
		\centering
		\includegraphics[width=1.1\linewidth]{./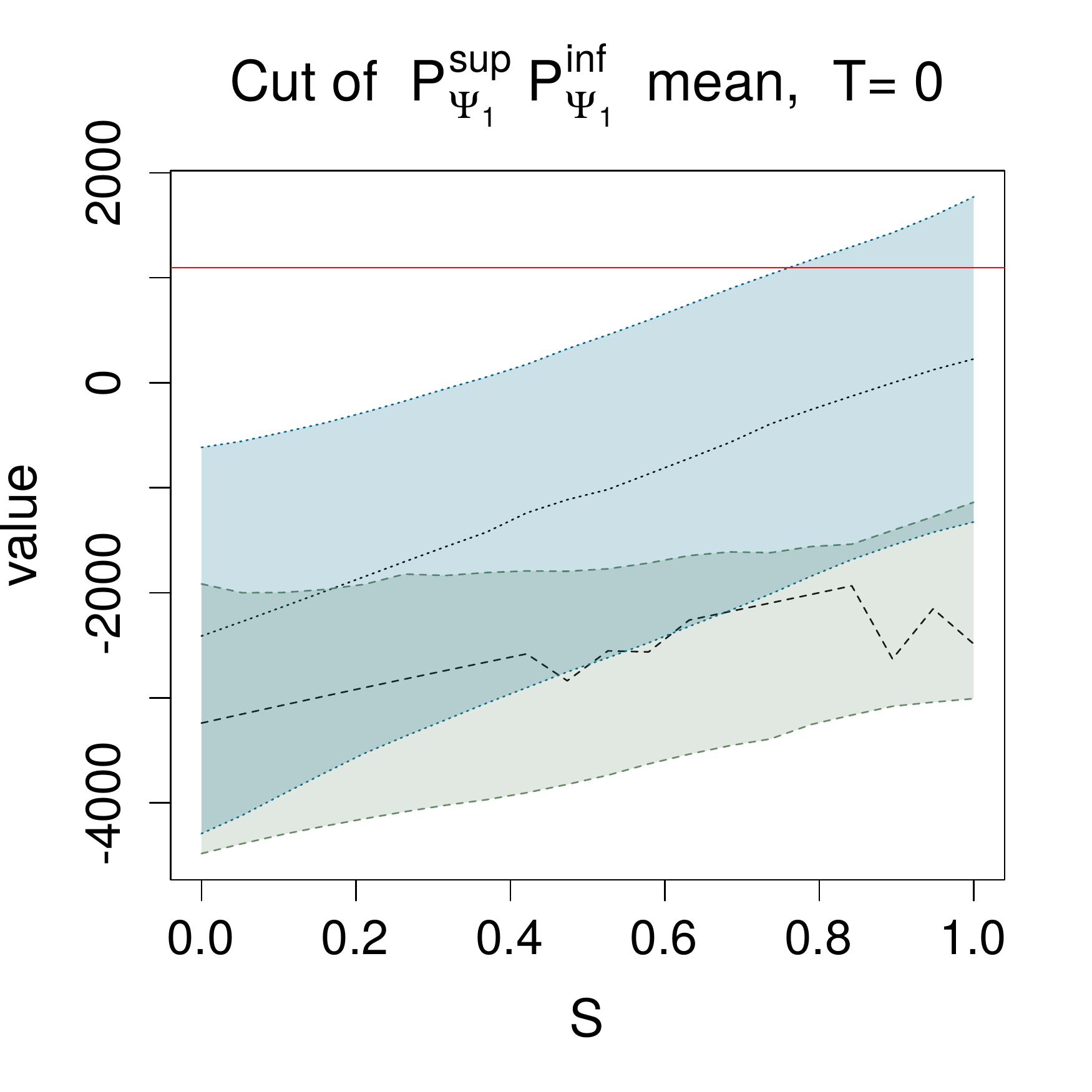}
	\end{minipage}
	\begin{minipage}{0.32\textwidth}
		\centering
		\includegraphics[width=1.1\linewidth]{./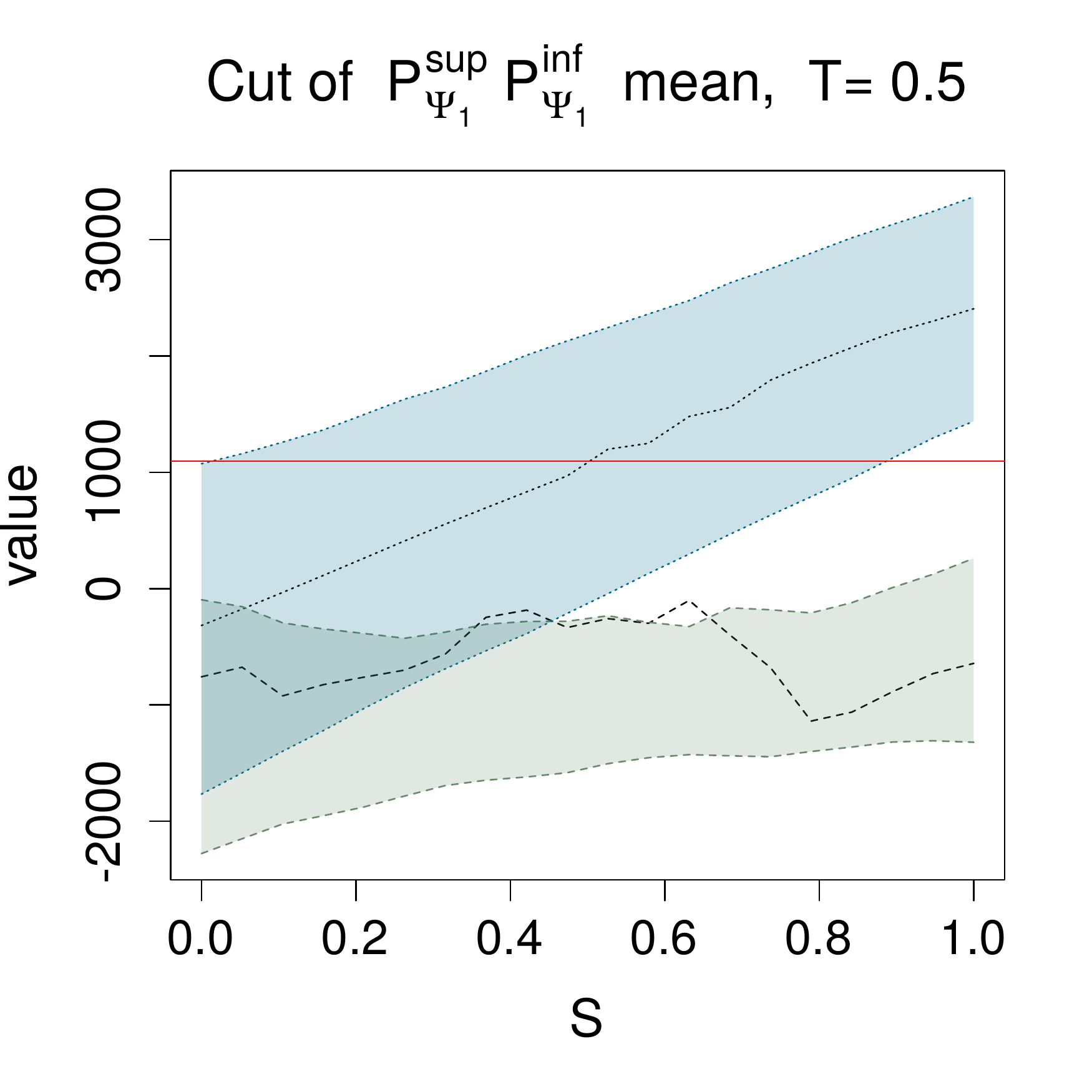}
	\end{minipage}
	\begin{minipage}{0.32\textwidth}
		\centering
		\includegraphics[width=1.1\linewidth]{./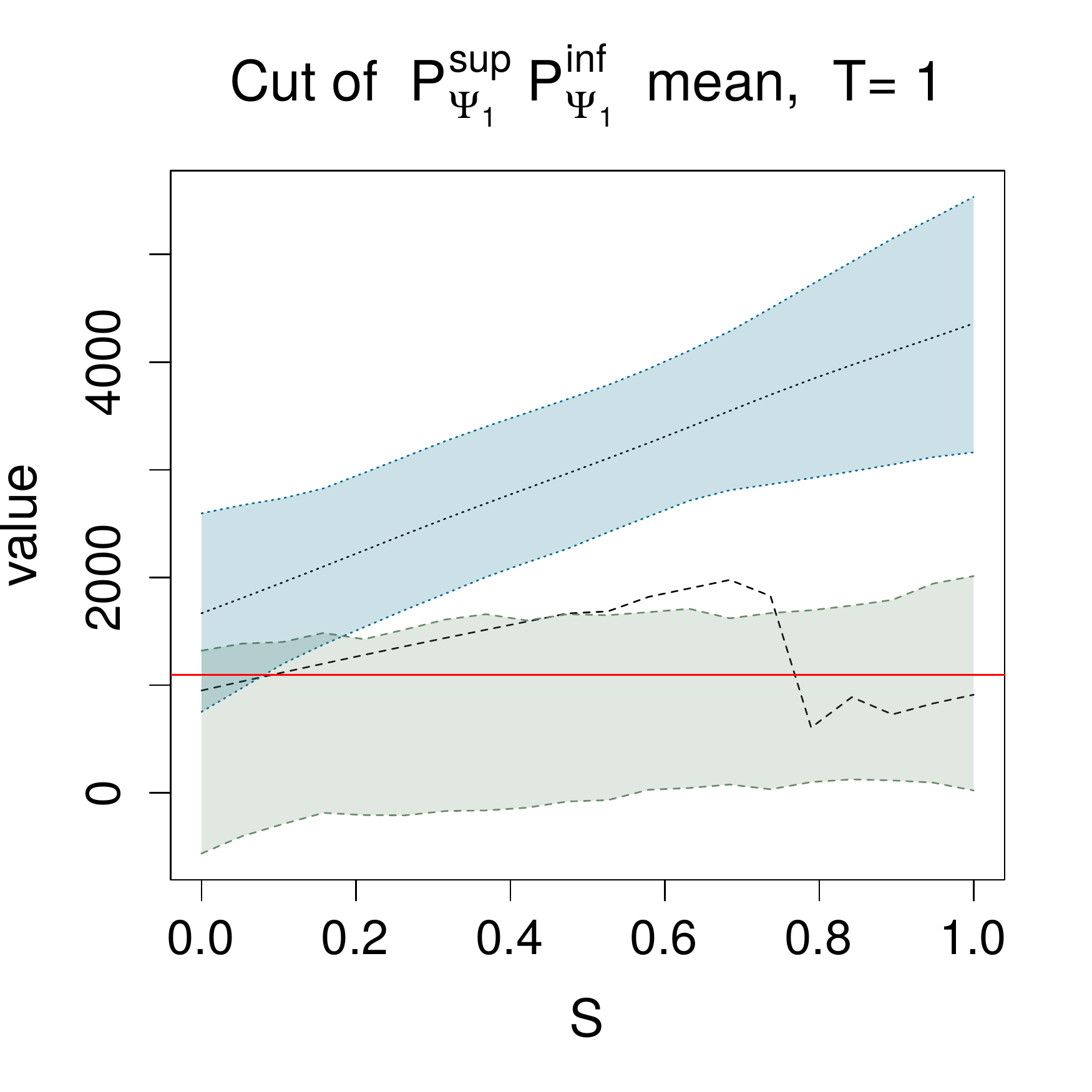}
	\end{minipage}
	\caption{Cut with $\Tide=0,0.5,1$ of $\Psup_{\Psi_1},\Pinf_{\Psi_1}$ on the profile mean, see also Figure~\ref{fig:ex5dbivProf}.}
	\label{fig:ex5dbivPE_TScuts}
\end{figure}

\begin{figure}[h!]
	\centering
	\includegraphics[width=0.85\linewidth]{./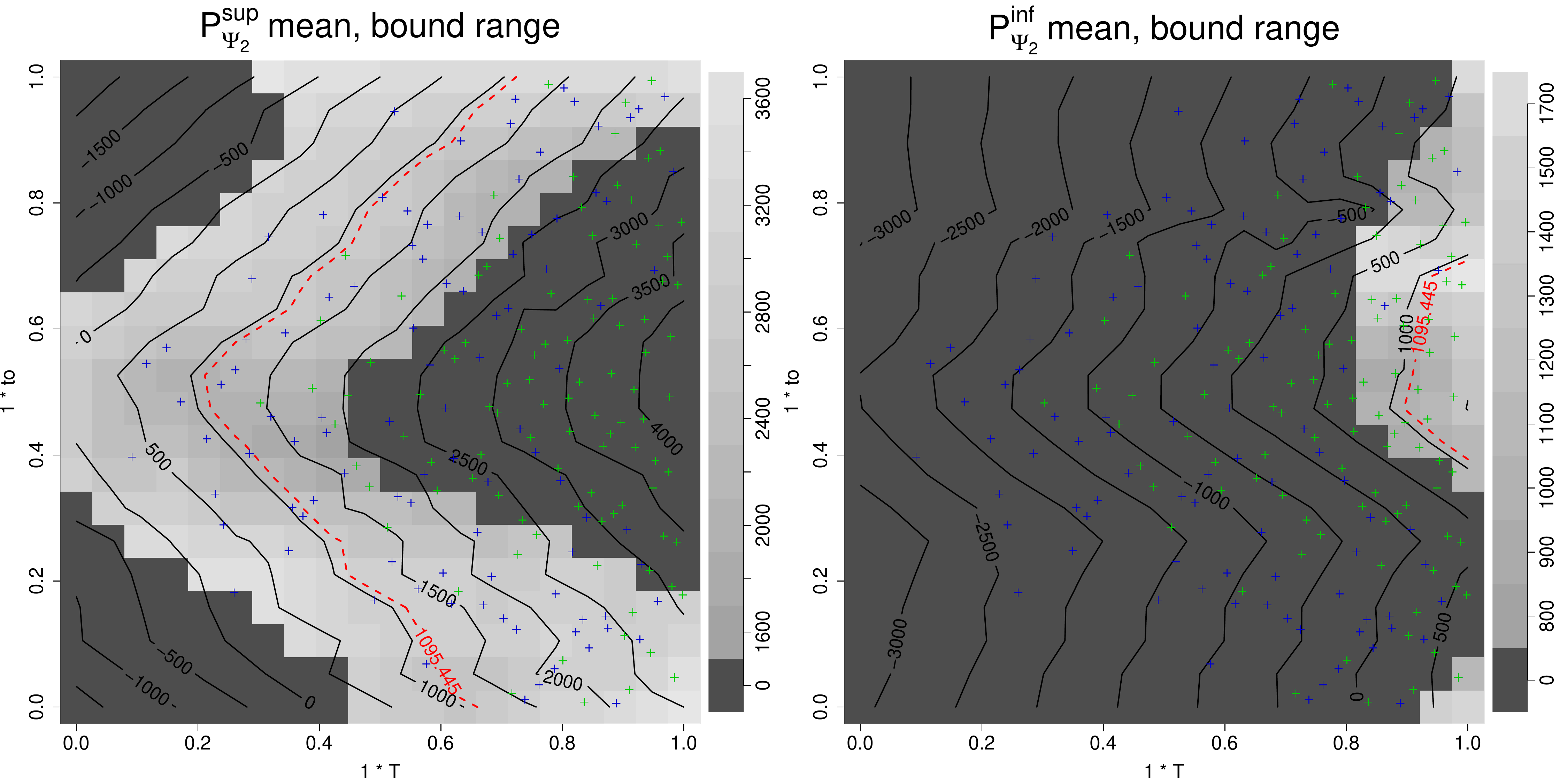}
	\caption{Bivariate profile extrema for $5$-dimensional test case with $\Tide,\PhiVar$. Excursion threshold $\threshold_1$ in red, bivariate profile mean values in contour lines, $\doe_n$ locations as crosses (green if above $\threshold_1$), background color denotes the upper-lower bound range (equation~\eqref{eq:profBoundCor}).}
	\label{fig:ex5dbivProfTP}	
\end{figure}

\begin{figure}
	\begin{minipage}{0.32\textwidth}
		\centering
		\includegraphics[width=1.1\linewidth]{./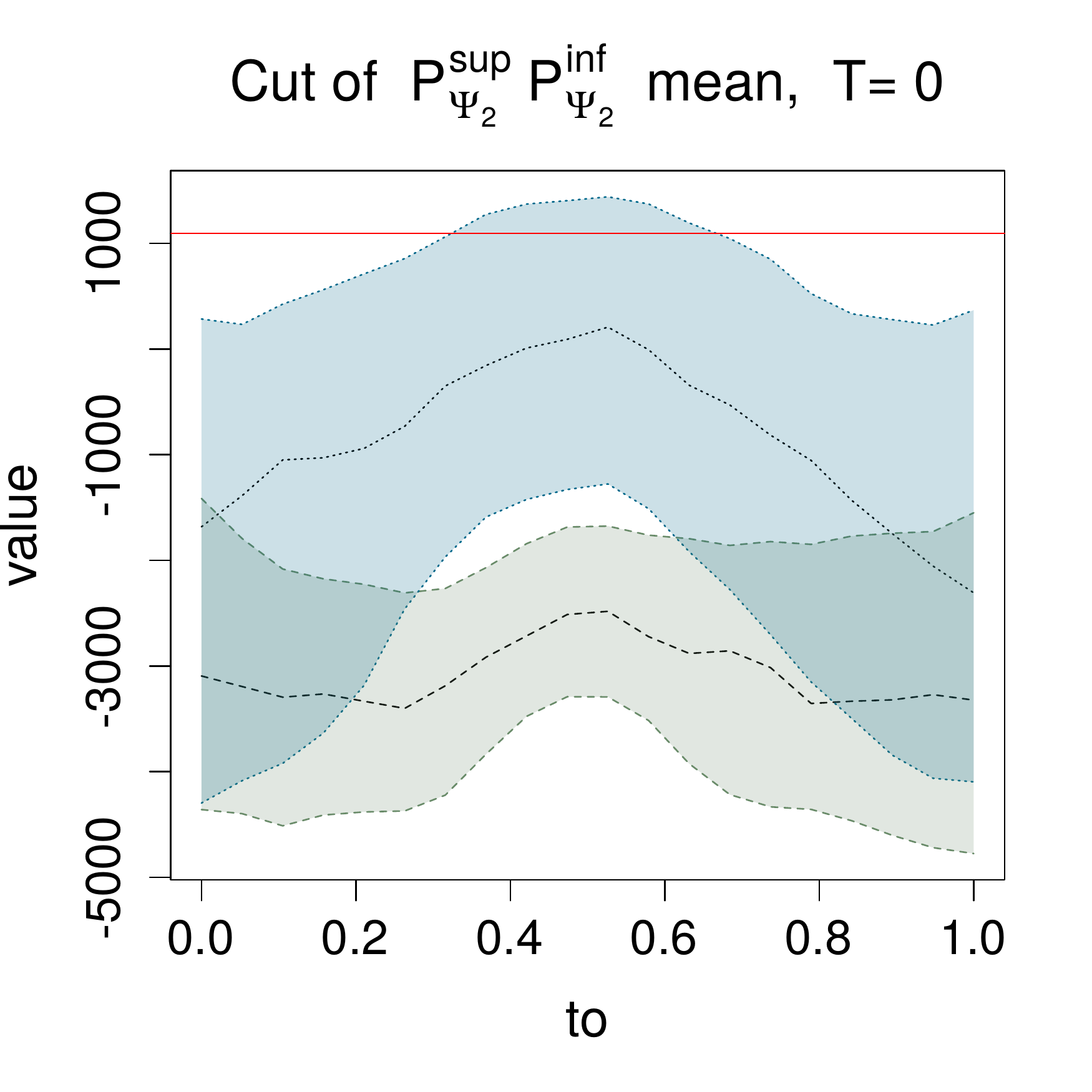}
	\end{minipage}
	\begin{minipage}{0.32\textwidth}
		\centering
		\includegraphics[width=1.1\linewidth]{./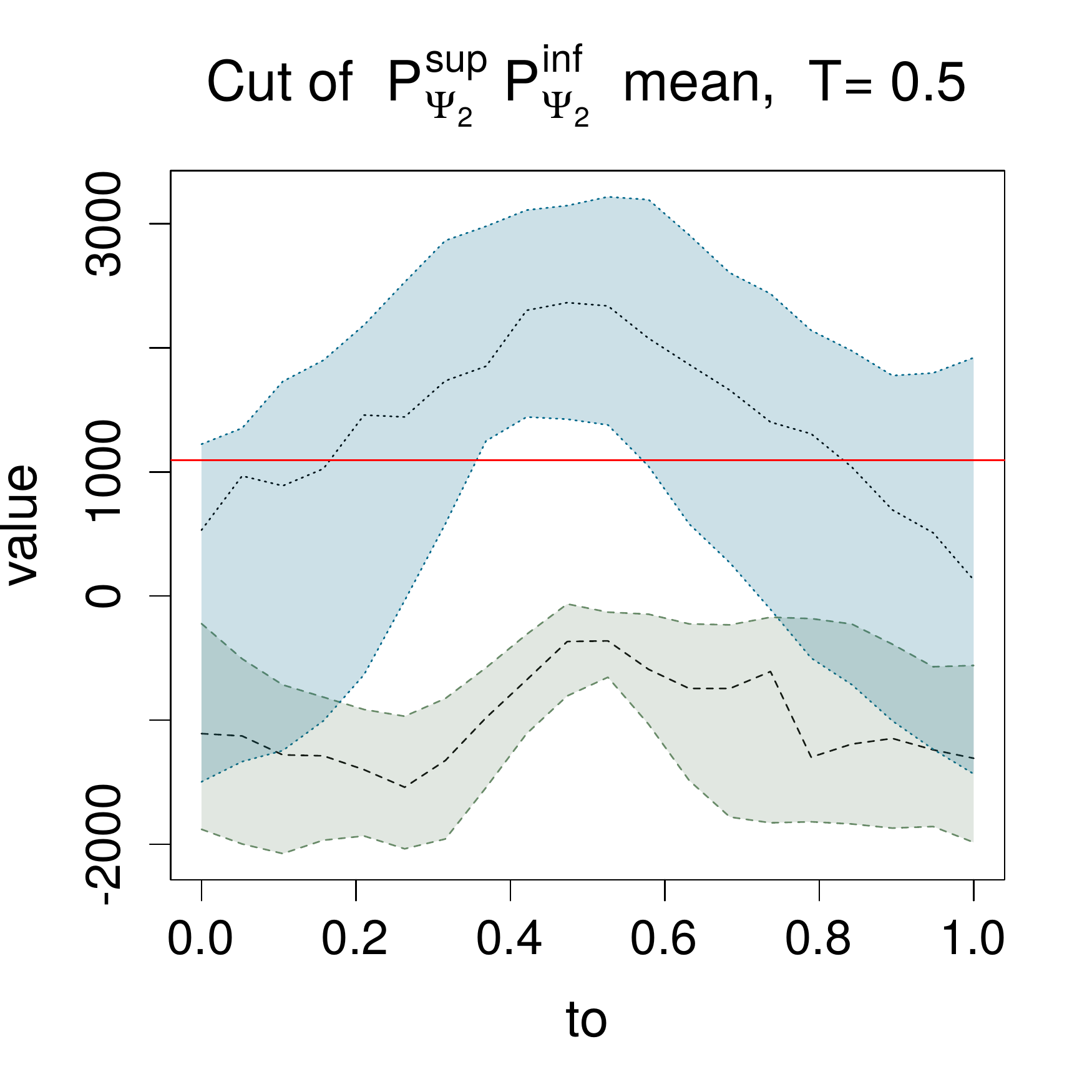}
	\end{minipage}
	\begin{minipage}{0.32\textwidth}
		\centering
		\includegraphics[width=1.1\linewidth]{./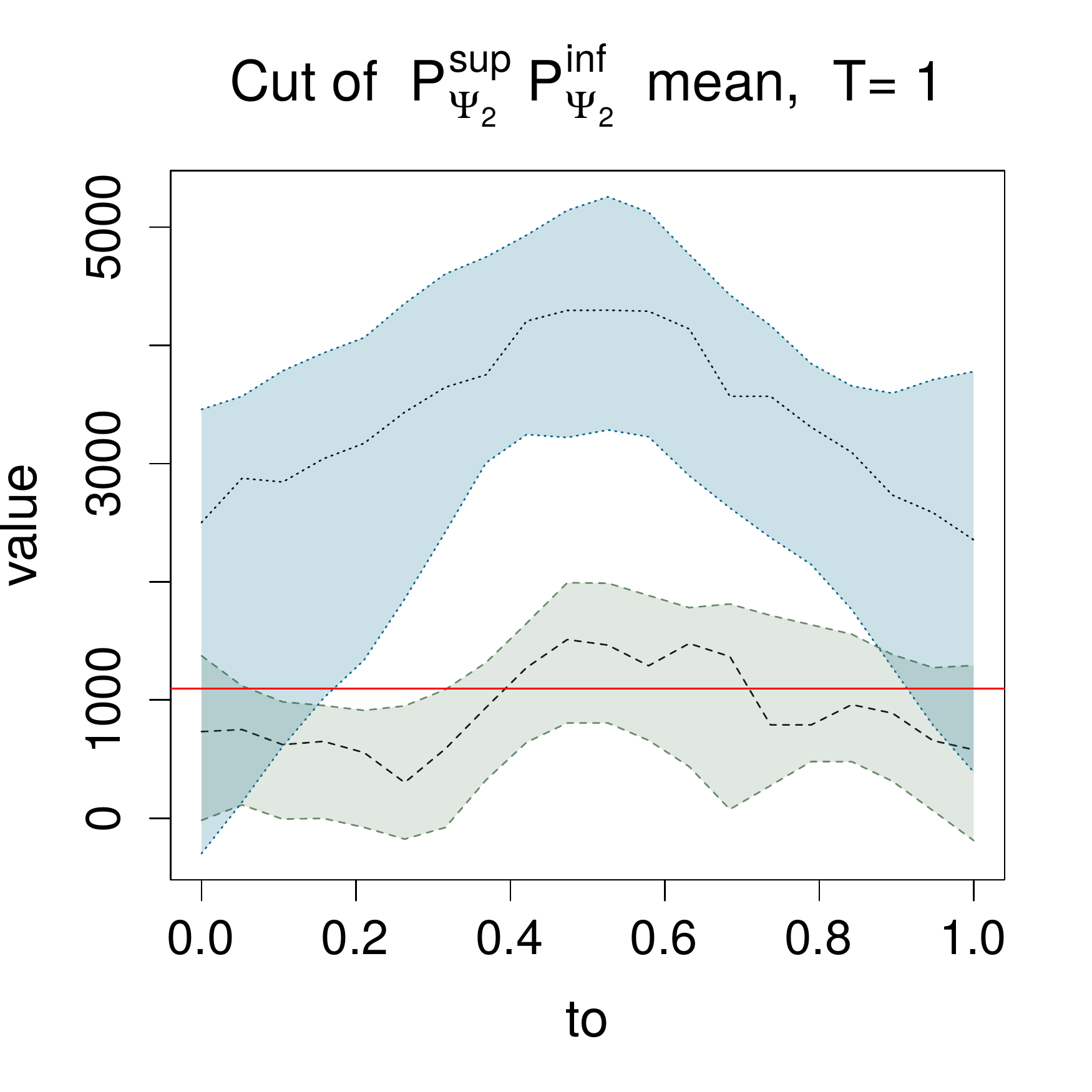}
	\end{minipage}
	\caption{Cut with $\Tide=0,0.5,1$ of $\Psup_{\Psi_2},\Pinf_{\Psi_2}$ on the profile mean, see also Figure~\ref{fig:ex5dbivProfTP}.}
	\label{fig:ex5dbivPE_TPcuts}
\end{figure}

Figure~\ref{fig:ex5dbivPE_TScuts} shows the cuts at $\Tide=0,0.5,1$ of the bivariate profile functions on the mean shown in Figure~\ref{fig:ex5dbivProf}(b) with their respective upper and lower bounds. Such plots can be used to zoom in particular regions of the bivariate profiles. They do not plot more information than bivariate profile maps, however they are simpler to read. For example, Figure~\ref{fig:ex5dbivPE_TScuts} shows that $\{\x \in D: \Tide=0 \text{ and } \Surge < 0.79 \}$ is a non-excursion region with high probability.

Figure~\ref{fig:ex5dbivProfTP} shows the bivariate profile extrema for the variables $\Tide, \PhiVar$. In particular the left-most map allows us to exclude all input regions on the left of the red dotted curve. In a symmetric way, the right-most plot tells us that values for $\Tide,\PhiVar$ in the region on the right-hand side of the red dotted line lead to an excursion. Also in this case the uncertainty quantification adds more insights on the result. The profile inf shows much uncertainty around the threshold, therefore while region of interest could be selected the uncertainty must be accounted for. The profile sup also tells us that close to the threshold the uncertainty is still high, however the top and bottom left triangles are not in the excursion with very little uncertainty. A very conservative assessment should select only this input region as non-excursion. Figure~\ref{fig:ex5dbivPE_TPcuts} shows the cuts at $\Tide=0,0.5,1$ of these bivariate profile functions. As in the other case such plots confirm information shown in bivariate maps. For example, we can clearly see that for $\Tide=0$ and $\PhiVar \in [0,0.37] \cup [0.63,1]$ there is no excursion with high probability. 

\section{An indicator of tightness for the bound}
\label{sec:deltaTex2d}

\begin{table}[h]
	\caption{\label{tab:paramSummary} Summary of profile extrema parameters.}
	\begin{tabular}{ l  l  c }
		\\[-1.8ex]\hline 
		\hline 
		Symbol & Meaning  & Section\\
		\hline 
		$n$ & Number of expensive computer experiment simulations & \ref{sec:intro} \\
		$d$ & Dimension of input space & \ref{sec:intro} \\
		$p$ & Dimension of profile extrema projection, usually $1$ or $2$ & \ref{sec:profiles} \\
		$k$ & Number of points used for deterministic approximation of $\Psup,\Pinf$ & \ref{subsec:approxProf} \\
		$\ell$ & Number of pilot points for GP approximation in~\eqref{eq:tildeZ} & \ref{subsec:approxReals} \\
		$s$ & Number of posterior approximate GP realizations & \ref{subsec:approxReals} \\
		\hline		
	\end{tabular}
\end{table}

\begin{figure}[h!]
	\begin{minipage}{0.5\textwidth}
		\centering
		\includegraphics[width=\linewidth]{./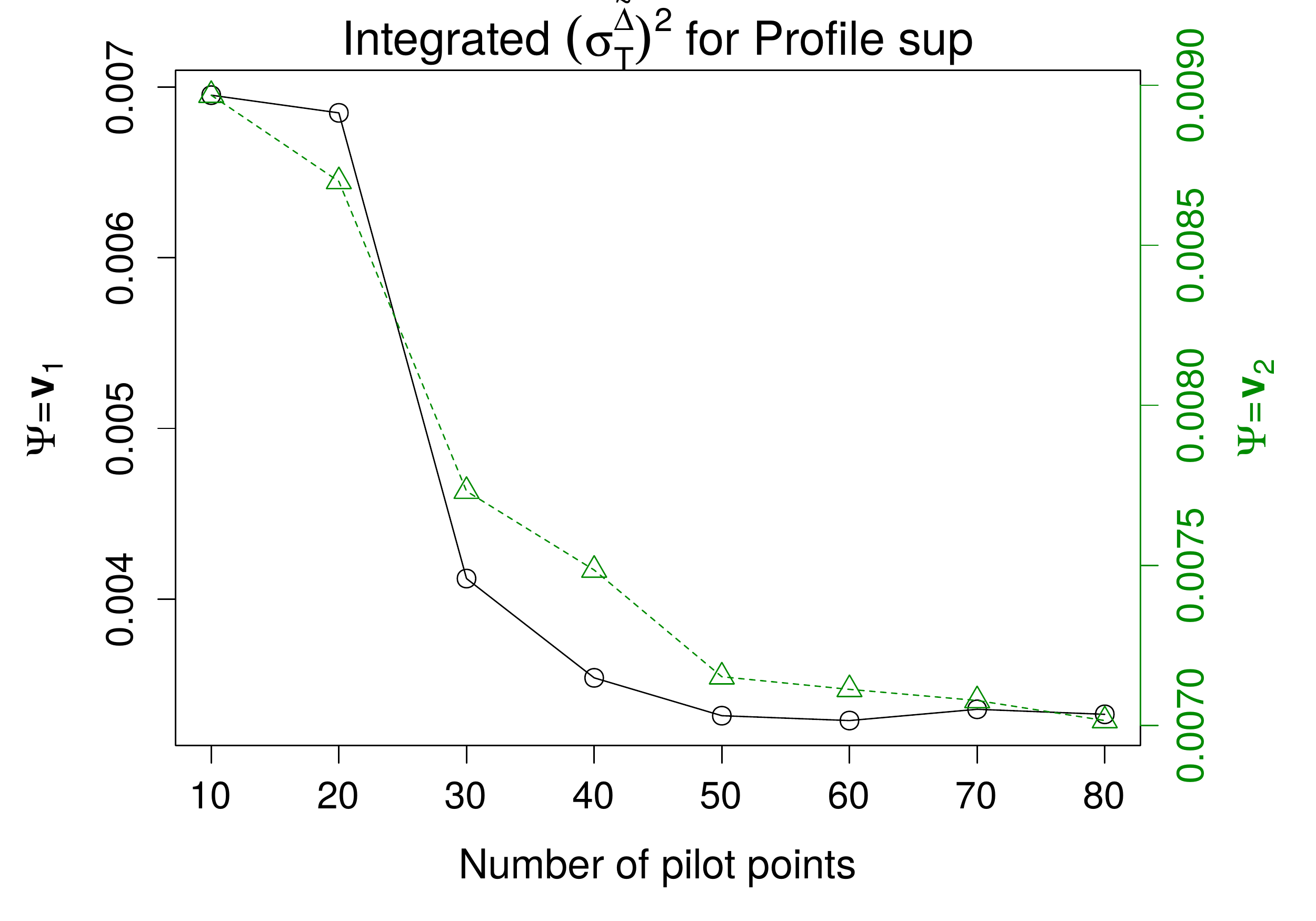}
	\end{minipage}
	\begin{minipage}{0.5\textwidth}
		\centering
		\includegraphics[width=\linewidth]{./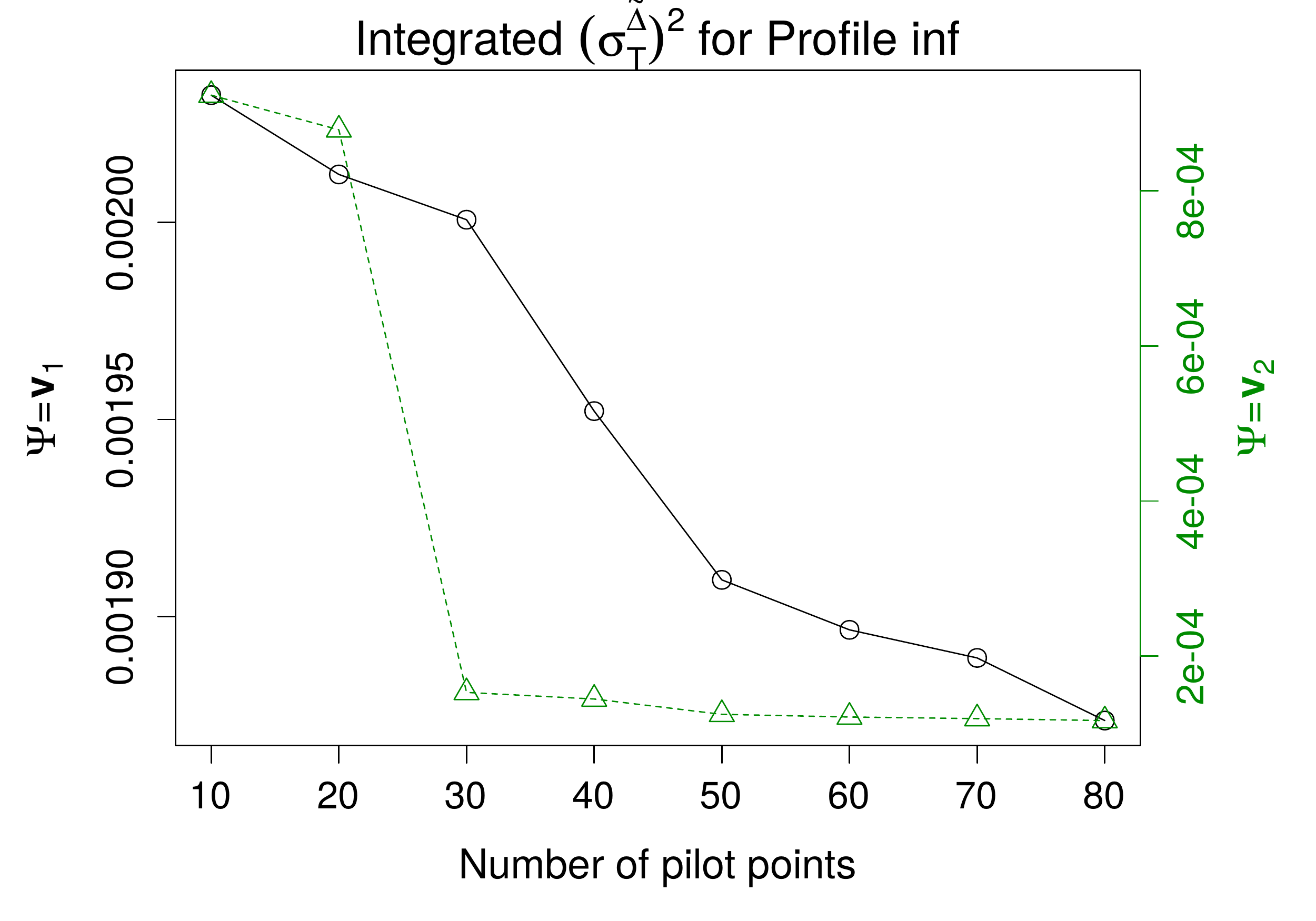}
	\end{minipage}
	\caption{Values of $I(\sigma^{\widetilde{\Delta}}_T)^2(\ell)$, Eq.~\eqref{eq:integratedVarDiff}, for $\Psi=\mathbf{v}_1$ (solid, circles) and $\Psi=\mathbf{v}_2$ (dashed, triangles) as functions of $\ell$ . Profile sup (left) and Profile inf (right).}
	\label{fig:ex2dDeltaTcomparison}
\end{figure}

The bounds for the uncertainty on $\Psup_\Psi Z$ are computed with equation~\eqref{eq:profBoundCor}. For a given $\ell$, the tightness of this approximation is controlled by $(\sigma^{\widetilde{\Delta}}_T)^2(\eta,\ell)$, i.e. the sup of the difference variance. This is a non-negative function of $\eta$, as the difference depends on where we evaluate $\Psup_\Psi Z (\eta)$. The integrated variance $I(\sigma^{\widetilde{\Delta}}_T)^2(\ell)$, Equation~\eqref{eq:integratedVarDiff}, is a summary of this quantity. 
Since  $(\sigma^{\widetilde{\Delta}}_T)^2(\eta,\ell)$ is non-negative for each $\eta \in E_\Psi$, a smaller integral implies less variability and a tighter bound.
We can control $I(\sigma^{\widetilde{\Delta}}_T)^2(\ell)$ with the number of pilot points $\ell$ chosen for the approximate realizations.  Figure~\ref{fig:ex2dDeltaTcomparison} shows the estimated integral $I(\sigma^{\widetilde{\Delta}}_T)^2(\ell)=\int_{E_{\mathbf{v}_i}}(\sigma^{\widetilde{\Delta}}_T)^2(\eta,\ell)d\eta$, $i=1,2$ for the synthetic example introduced in Section~\ref{subsec:analExUQ} with $n=90$ as a function of $\ell$. More pilot points lead to a smaller integral for each coordinate and for both profile extrema, however the rate of decrease is test case dependent.

\section{Uni and bivariate profiles, a combined approach}
\label{sec:bivEx3d}

In this section we consider the following function
\begin{equation}
\gamma(\mathbf{x}) = \sin\left( a \mathbf{v_1}^T \mathbf{x} +b\right) + \cos\left(c \mathbf{v_2}^T \mathbf{x} +d \right) +\sin \left( e \mathbf{v_3}^T \mathbf{x} + f\right) -1.5 \qquad \mathbf{x} \in [0,1]^3, \ \mathbf{v_1}, \mathbf{v_2}, \mathbf{v_3}   \in \mathbb{R}^3
\label{eq:analExample3d}
\end{equation}
where $a,b,c,d,e,f \in \R$ and 
\begin{equation*}
\mathbf{v_1} = \begin{bmatrix}
\sin(\theta)\cos(\phi) \\
\sin(\theta)\sin(\phi) \\
\cos(\theta)
\end{bmatrix}, \ \mathbf{v_2} = \begin{bmatrix}
\cos(\theta)\cos(\phi) \\
\cos(\theta)\sin(\phi) \\
-\sin(\theta)
\end{bmatrix}, \ \mathbf{v_3} =\begin{bmatrix}
-\sin(\phi) \\
\cos(\phi) \\
0
\end{bmatrix} 
\end{equation*}
with $\theta= \pi/4$, $\phi=\pi/4$. We fix $[a,b,c,d,e,f]=[1,0,10,0,1,0]$ and we study the excursion set above $\threshold=0$. Figure~\ref{fig:ex3dFunction} shows three cuts corresponding to the planes $z=0,0.49,1$. 
\begin{figure}
	\begin{minipage}{0.32\textwidth}
		\centering
		\includegraphics[width=1.1\linewidth]{./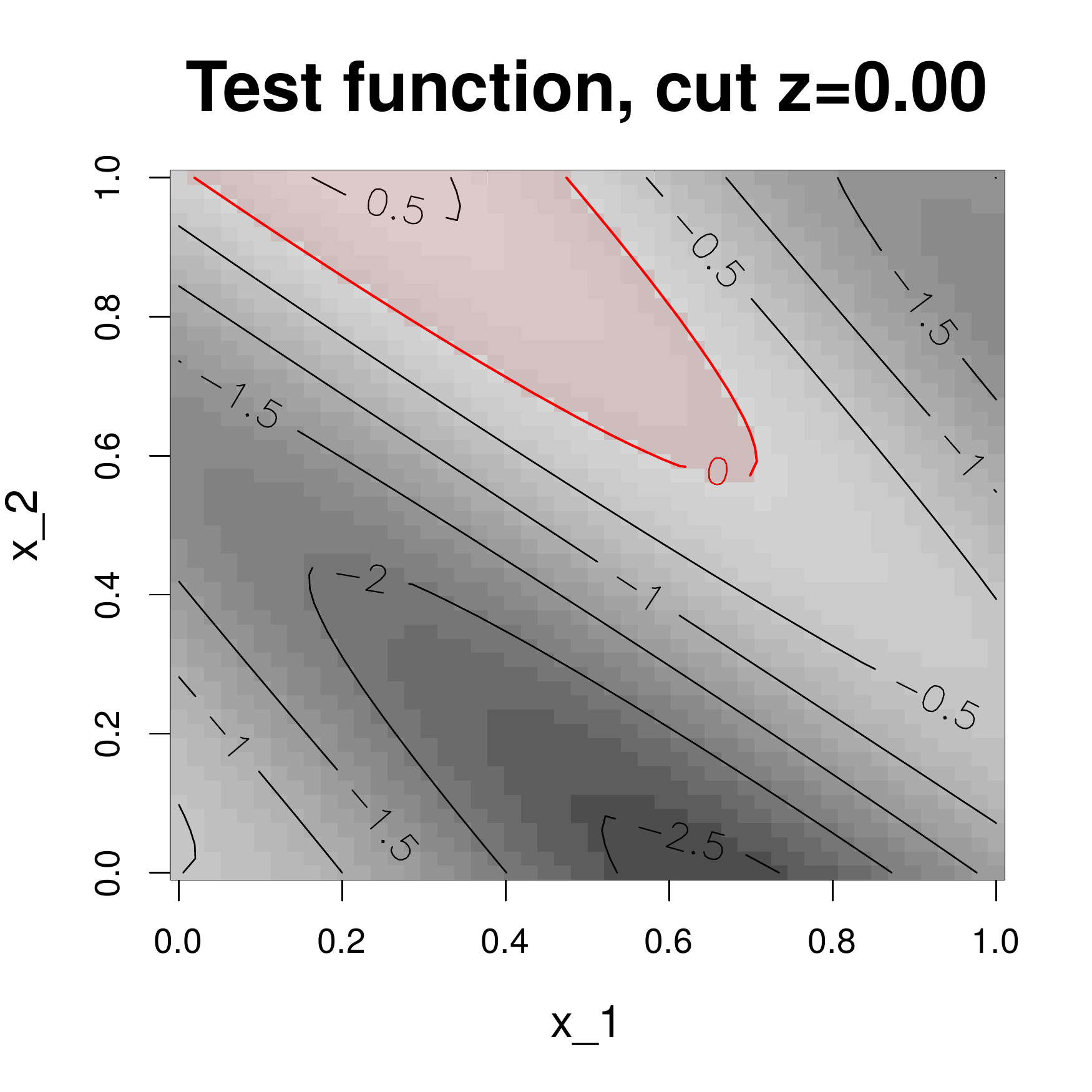}
	\end{minipage}
	\begin{minipage}{0.32\textwidth}
		\centering
		\includegraphics[width=1.1\linewidth]{./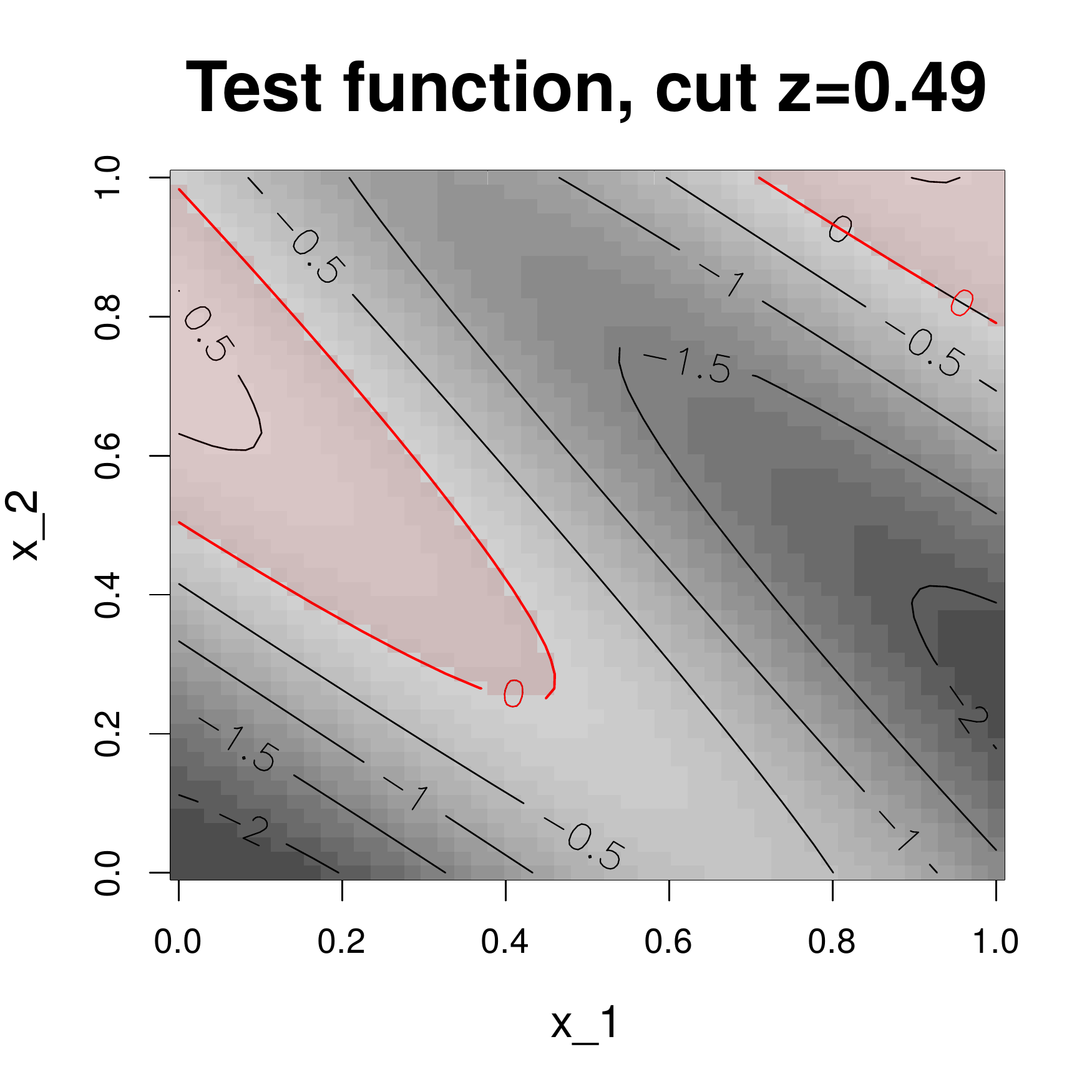}
	\end{minipage}
	\begin{minipage}{0.32\textwidth}
		\centering
		\includegraphics[width=1.1\linewidth]{./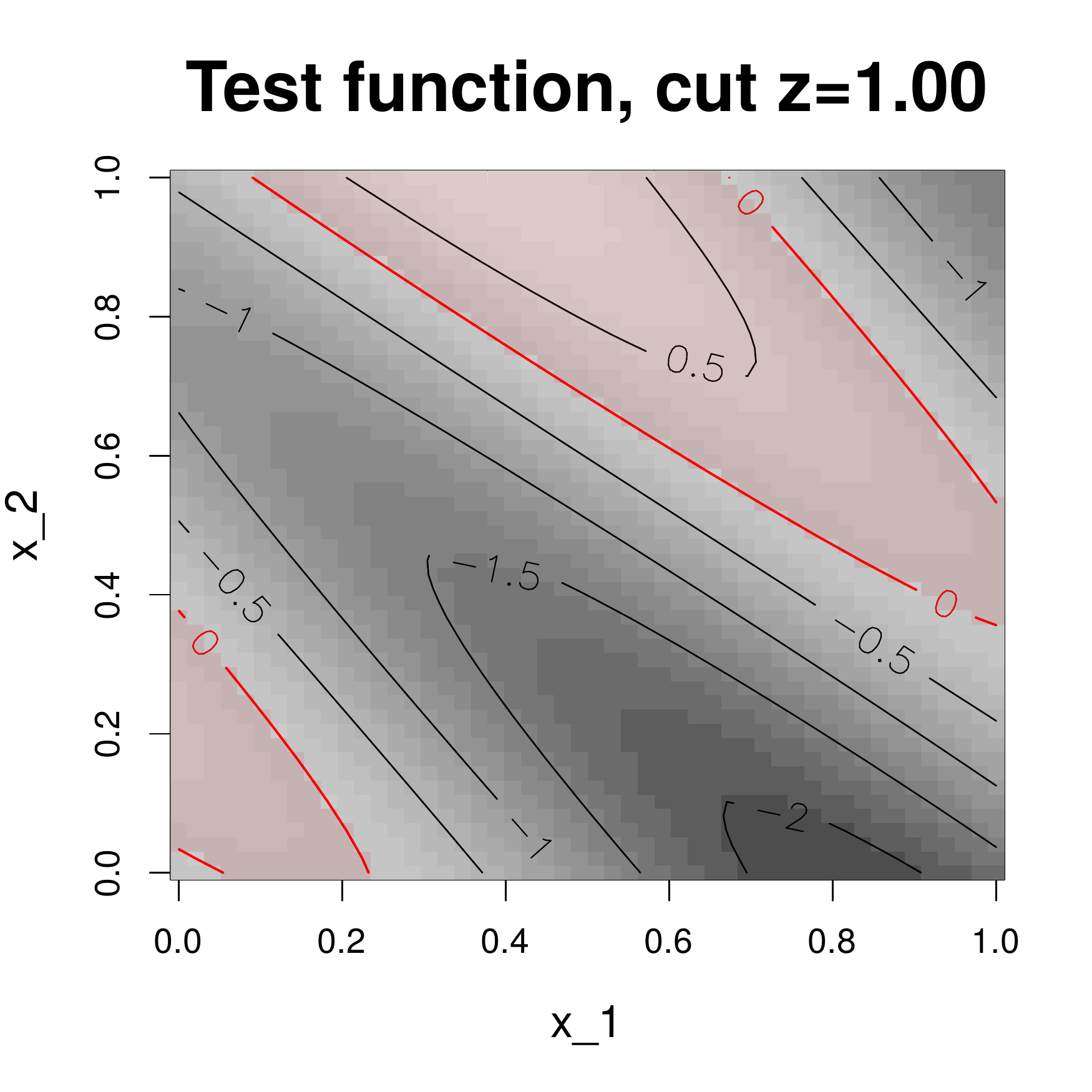}
	\end{minipage}
	\caption{Cuts of the function in Equation~\eqref{eq:analExample3d}.}
	\label{fig:ex3dFunction}
\end{figure}

Since the excursion set sits along directions which are oblique with respect to the canonical coordinates, the coordinate profile extrema do not provide information on this set. Figure~\ref{fig:ex3dCoord} shows that $\Psup_i f, i=1,2,3$ are always above the threshold and $\Pinf_i f, i=1,2,3$ are always below the threshold, thus we cannot exclude any region. In such situations we propose two approaches: compute oblique profiles according to more informative directions or compute bivariate coordinate profiles. Figure~\ref{fig:ex3dOblique} show the oblique profile extrema along the generating directions $\mathbf{v_1}, \mathbf{v_2}, \mathbf{v_3}$. This plot excludes part of the input space by considering the planes delimited by the interceptions between the profile sup and the threshold,~e.g. all points between the planes $\mathbf{v_1}\x =0$ and $\mathbf{v_1}\x =0.31$.

\begin{figure}
	\begin{minipage}{0.49\textwidth}
		\centering
		\includegraphics[width=\linewidth]{./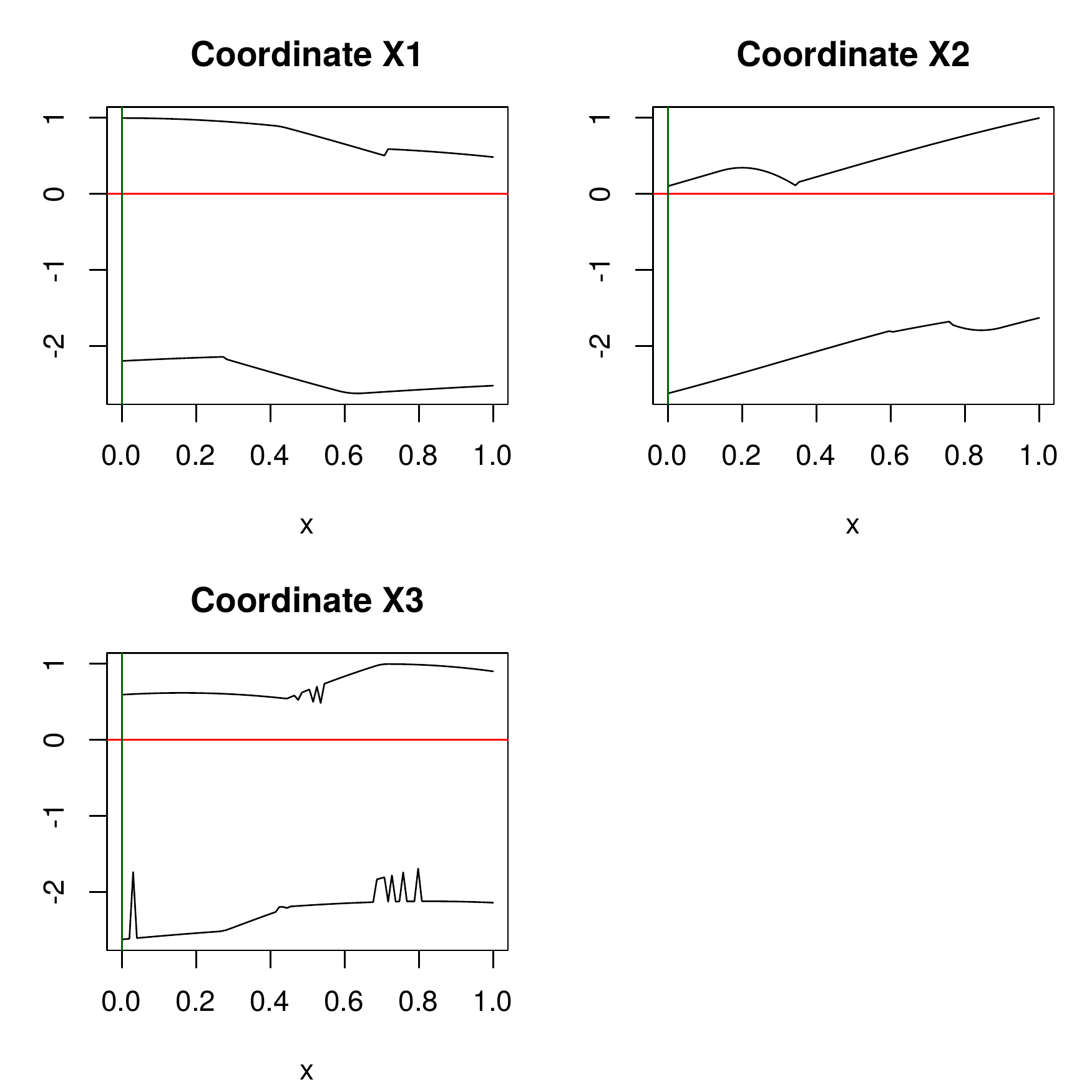}
		\caption{Coordinate profile extrema for the function in Equation~\eqref{eq:analExample3d}.}
		\label{fig:ex3dCoord}
	\end{minipage}\hspace{0.1cm}
	\begin{minipage}{0.49\textwidth}
		\centering
		\includegraphics[width=\linewidth]{./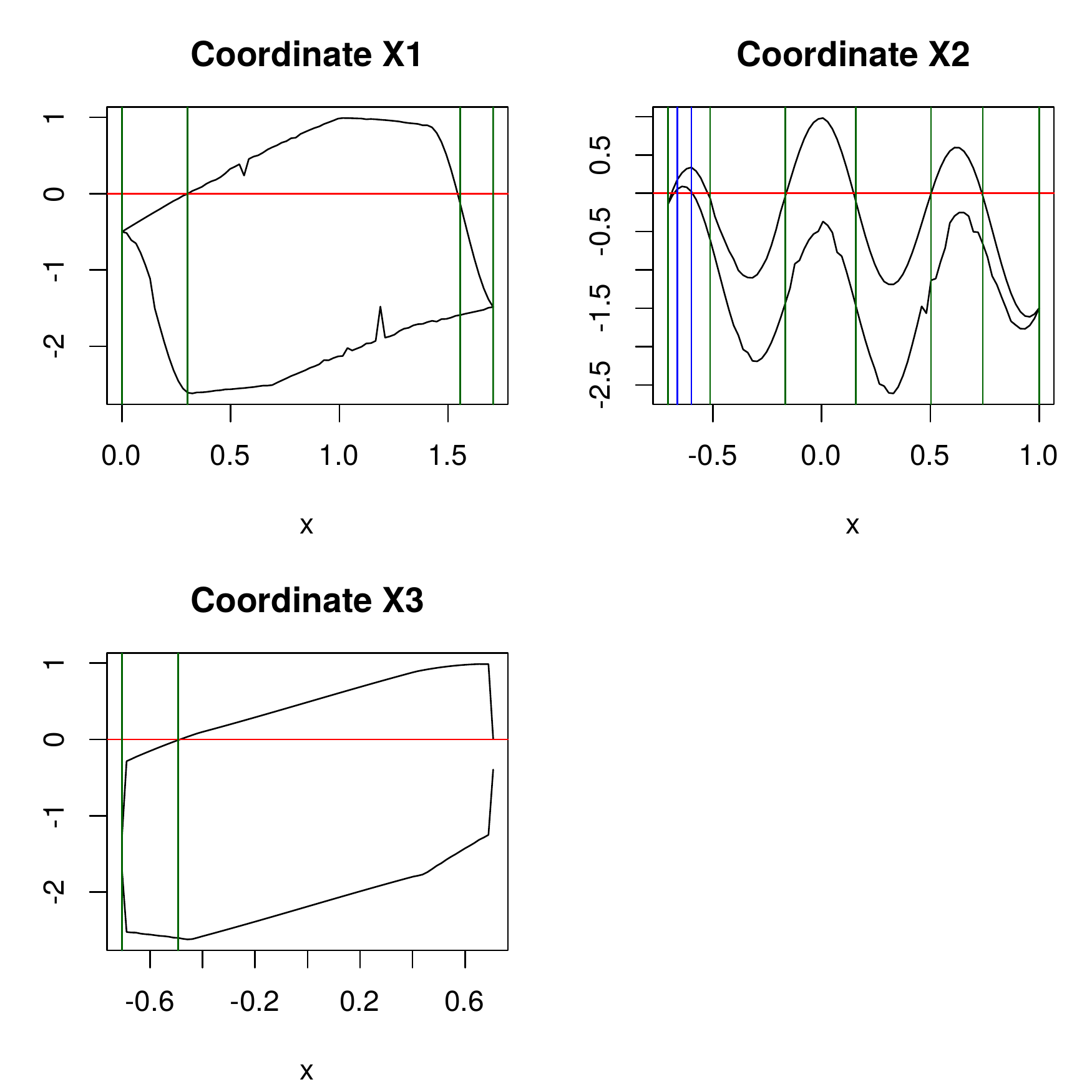}
		\caption{Profile extrema for the function in Equation~\eqref{eq:analExample3d} along the directions $\mathbf{v_1},\mathbf{v_2},\mathbf{v_3}$.}
		\label{fig:ex3dOblique}
	\end{minipage}
\end{figure}

The second option involves computing the profile extrema with projections on $2$ dimensional subspaces. Here we consider all combinations of bivariate projections on canonical axes. Figures~\ref{fig:ex3dBivCoord_sup},~\ref{fig:ex3dBivCoord_inf} show the maps obtained for the $\Psup_\Psi f, \Pinf_\Psi f$ for the three matrices build by combining the canonical directions. 
\begin{figure}
	\begin{minipage}{0.49\textwidth}
		\centering
		\includegraphics[width=\linewidth]{./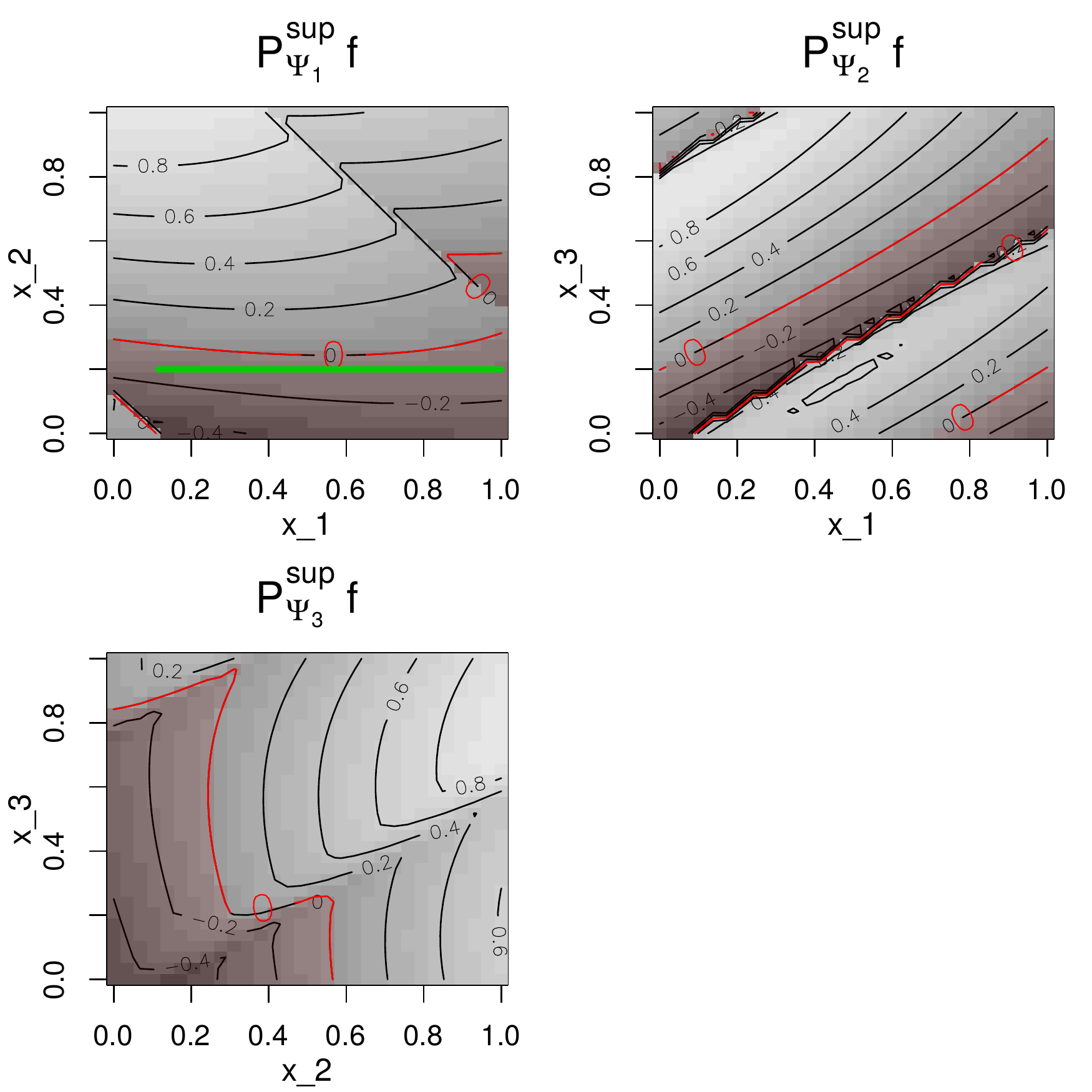}
		\caption{Bivariate profile sup for the function in Equation~\eqref{eq:analExample3d} along projections on combinations of three canonical axes.}
		\label{fig:ex3dBivCoord_sup}
	\end{minipage}\hspace{0.1cm}
	\begin{minipage}{0.49\textwidth}
		\centering
		\includegraphics[width=\linewidth]{./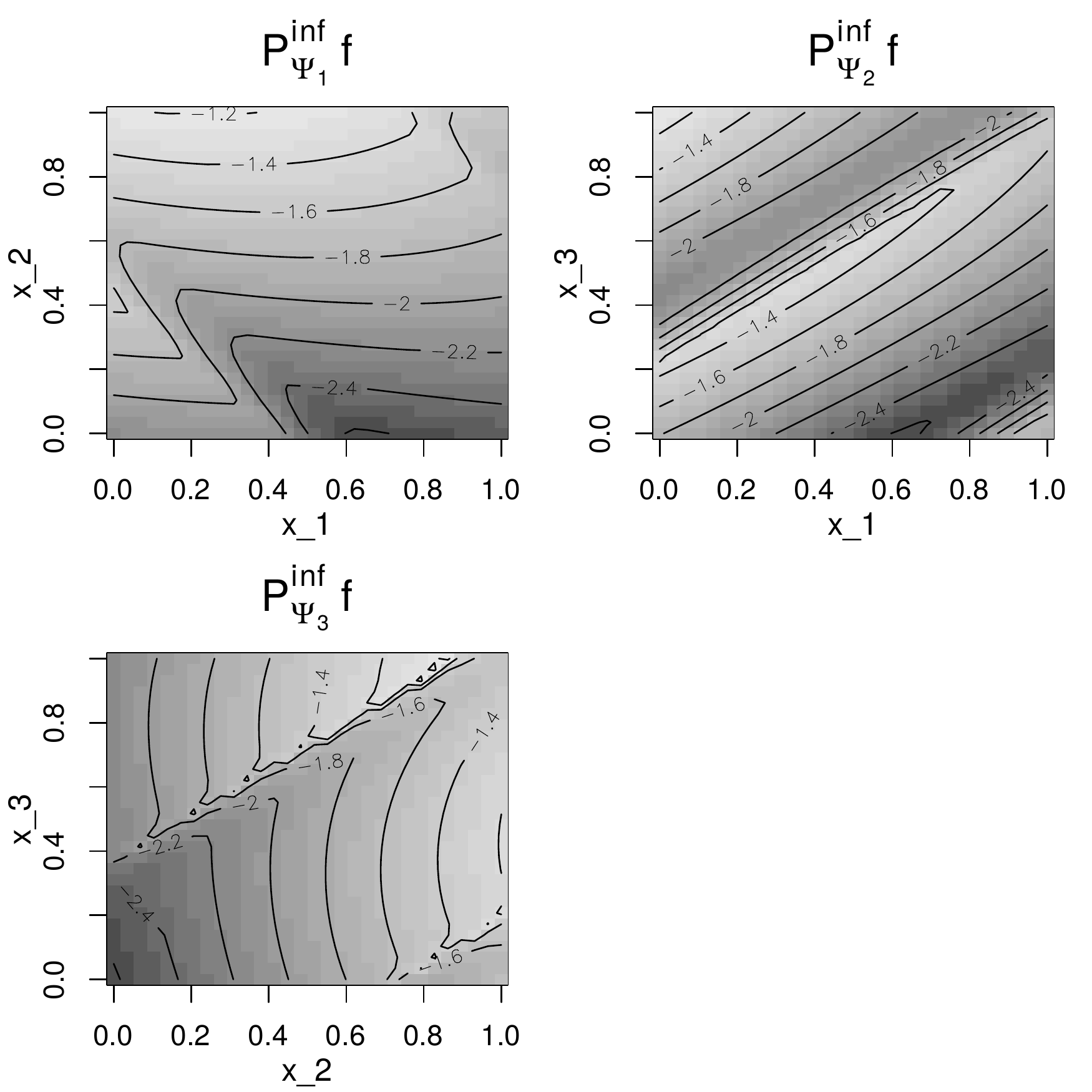}
		\caption{Bivariate profile inf for the function in Equation~\eqref{eq:analExample3d} along projections on combinations of three canonical axes.}
		\label{fig:ex3dBivCoord_inf}
	\end{minipage}
\end{figure} 
Figure~\ref{fig:ex3dBivCoord_sup} in particular highlights darker shaded (red) regions that are not part of the excursion set. Consider $\Psup_{\Psi_1}\gamma$, for all $x_1,x_2$ in the shaded region the profile tells us that there is no excursion. For example if $x_1 \in [0.115,1]$ and $x_2 = 0.2$, the segment highlighted in the top left plot of Figure~\ref{fig:ex3dBivCoord_sup}, there is no excursion. In this case, $\Pinf_\Psi f$, Figure~\ref{fig:ex3dBivCoord_inf}, does not allow us to select any excursion region.

\end{document}